\newcolumntype{L}[1]{>{\raggedright\let\newline\\\arraybackslash\hspace{0pt}}m{#1}}
\newcolumntype{C}[1]{>{\centering\let\newline\\\arraybackslash\hspace{0pt}}m{#1}}
\newcolumntype{R}[1]{>{\raggedleft\let\newline\\\arraybackslash\hspace{0pt}}m{#1}}
\newcommand{\commentout}[1]{}
\definecolor{blue(pigment)}{rgb}{0.2, 0.2, 0.6}
\newcommand{\hnew}[1]{{\color{black} #1}}
\newcommand{\anew}[1]{{\color{black} #1}}
\newtheorem {theorem}{Theorem}
\newtheorem {lemma}{Lemma}
\newtheorem {proposition}{Proposition}
\newtheorem {corollary}{Corollary}
\newcolumntype{P}[1]{>{\centering\arraybackslash}p{#1}}
\newcolumntype{M}[1]{>{\centering\arraybackslash}m{#1}}
\newcommand{\calC}{\mathcal{C}}
\newcommand{\calH}{\mathcal{H}}
\newcommand{\calL}{\mathcal{L}}
\newcommand{\calF}{\mathcal{F}}
\newcommand{\calD}{\mathcal{D}}
\newcommand{\calT}{\mathcal{T}}
\newcommand{\whp}{\alpha} 
\mathchardef\mhyphen="2D
\newcommand{\ecc}{e}
\newcommand{\layer}{\calL}
\newcommand{\rl}[1]{{#1}^{*}}
\title{Injective hulls of various graph classes}
\author[a]{Heather M. Guarnera\footnote{\url{hmichaud@kent.edu}}}
\author[a]{Feodor F. Dragan\footnote{\url{dragan@cs.kent.edu}}}
\author[b]{Arne Leitert\footnote{\url{arne.leitert@cwu.edu}}}
\affil[a]{Department of Computer Science, Kent State University}
\affil[b]{Department of Computer Science, Central Washington University}
\begin{document}

\maketitle
\begin{abstract}
A graph is Helly if its disks satisfy the Helly property, i.e., every family of pairwise intersecting disks in $G$ has a common intersection.
It is known that for every graph $G$, there exists a unique smallest Helly graph $\calH(G)$ into which $G$ isometrically embeds; $\calH(G)$ is called the injective hull of $G$.
Motivated by this, we investigate the structural properties of the injective hulls of various graph classes. We say that a class of graphs $\calC$ is closed under Hellification if  $G \in \calC$ implies $\calH(G) \in \calC$. 
We identify several graph classes that are closed under Hellification. 
We show that permutation graphs are not closed under Hellification, but chordal graphs, square-chordal graphs, and distance-hereditary graphs are.
Graphs that have an efficiently computable injective hull are of particular interest.
A \anew{linear-}time algorithm to construct the injective hull of any distance-hereditary graph is provided and we show that the injective hull of several graphs from some other well-known classes of graphs are impossible to compute in subexponential time. 
In particular, there are split graphs, cocomparability graphs, \hnew{bipartite graphs $G$} such that $\calH(G)$ contains $\Omega(a^{n})$ vertices, where $n=|V(G)|$ and $a>1$.

\medskip

\noindent
{\bf Keywords:} injective hull; Helly graphs; $\delta$-hyperbolic graphs; chordal graphs; square-chordal graphs; distance-hereditary graphs; permutation graphs.
\end{abstract}

\section{Introduction}
In a graph $G=(V(G),E(G))$, a disk $D_G(v,r)$ with radius $r$ and centered at a vertex~$v$ consists of all vertices with distance at most~$r$ from~$v$, i.e., $D_G(v,r) = \{u \in V(G) : d_G(u,v) \le r\}$.
A graph is called \emph{Helly} if every system of pairwise intersecting disks has a non-empty common intersection.
The \emph{injective hull} of an arbitrary graph $G$, denoted $\calH(G)$, is a unique minimal Helly graph which contains $G$ as an isometric subgraph~\cite{isbell,dress,ursLang}.
One measure of how far a graph is from its injective hull is its \emph{Helly-gap}, denoted $\alpha(G)$, which is the minimum integer $k$ such that every vertex of $\calH(G)$ has within distance at most $k$ a vertex of  $G$~\cite{ourManuscriptWeaklyHelly}.
Graphs with a small Helly-gap are precisely the graphs whose disks satisfy the coarse Helly property~\cite{chalopin2020helly,ourManuscriptWeaklyHelly}. 
As it turns out, many well-known graph classes have a small Helly-gap~\cite{ourManuscriptWeaklyHelly,chalopin2020helly} including cube-free median graphs, hereditary modular graphs, 7-systolic complexes, and the graphs of bounded tree-length, bounded hyperbolicity, or bounded $\alpha_i$-metric.

Helly graphs have been well-investigated; they have several characterizations and important features as established in~\cite{FDraganPhD,DraganCenters,Bandelt:1989:DAR:72175.72177,NOWAKOWSKI1983223,QUILLIOT1985186,BANDELT199134,newDDG2020}.  
They are exactly the so-called {\em absolute retracts of reflexive graphs} and
possess a certain elimination scheme~\cite{FDraganPhD,DraganCenters,Bandelt:1989:DAR:72175.72177,NOWAKOWSKI1983223,BANDELT199134} which makes them recognizable in $O(n^2m)$ time~\cite{FDraganPhD}. The Helly property works as a compactness criterion on graphs~\cite{QUILLIOT1985186}.
Many nice properties of Helly graphs are based on the eccentricity $\ecc_G(v)$ of a vertex~$v$, which is defined as the maximum distance from~$v$ to any other vertex of the graph (i.e., $\ecc_G(v) = \max_{u \in V(G)}d_G(v,u)$). The minimum and maximum eccentricity in a graph $G$ is the radius and diameter, respectively.
Conveniently, the eccentricity function in Helly graphs is unimodal~\cite{DraganCenters}, that is, any local minimum coincides with the global minimum.
This fact was recently used in \cite{Dragan2019ASO,DucoffeNEW,newDDG2020} to compute the radius, diameter and a central vertex of a Helly graph in subquadratic time.
Helly graphs can be metrically characterized by the fact that all disks of uniform radius have the Helly property~\cite{newDDG2020}.
Moreover, there are many graph parameters that are strongly related in Helly graphs, including so-called interval thinness, hyperbolicity, pseudoconvexity of disks, and size of the largest isometric subgraph in the form of a square rectilinear grid or a square king grid, among others (cf.~\cite{newDDG2020,DRAGAN2019326}); in particular, a constant bound on any one of these parameters implies a constant bound on all others~\cite{newDDG2020}.


The rich theory behind Helly graphs entices the use of injective hulls as an underlying structure to solve (approximately) problems on~$G$. 
Problems such as finding the diameter or computing vertex eccentricities in $G$ are translatable to finding the diameter of $\calH(G)$ and computing eccentricities of vertices of the Helly graph $\calH(G)$.
Additionally, there is a subquadratic time approximation for radius $rad(G)$ of a graph with an additive error depending on $\alpha(G)$~\cite{DucoffeNEW}.
Moreover, the existence of the injective hull of a graph $G$ is useful to prove properties that appear in $G$.
For example, the existence of injective hulls has been used to prove the existence of a core which intersects  shortest paths for a majority of pairs of vertices, establishing that traffic congestion is inherent in graphs with global negative curvature~\cite{Chepoi:2017:CCI:3039686.3039835}, a.k.a. hyperbolic graphs.
Injective hulls were also used to prove the existence of an eccentricity approximating spanning tree $T$ of $G$ which gives an approximation of all vertex eccentricities with additive error depending 
essentially on $\alpha(G)$~\cite{ourManuscriptWeaklyHelly}. 

\begin{table}[h]
\centering
\begin{tabular}{R{3.5cm}|C{3cm}|C{4cm}}
     Graph Class $\calC$ & \makecell{Closed under \\ Hellification} & \makecell{Hardness to compute \\ $\calH(G)$ for any $G \in \calC$} \\ \hline
     $\delta$-Hyperbolic & Yes & $\Omega(a^n)$ \\
     Chordal             & Yes & $\Omega(a^n)$ \\
     Square-Chordal      & Yes & ? \\
     Distance-Hereditary & Yes & \anew{$O(n + m)$} \\
     Permutation         & No &  ? \\
     Cocomparability     & ? &  $\Omega(a^n)$ \\
     AT-free     & ? &  $\Omega(a^n)$ \\
     \hnew{Bipartite}  &  No  & $\Omega(a^n)$ \\
     (or any triangle-free) &    &  
\end{tabular}\caption{\label{tab}A summary of our results on injective hulls of various graph classes, where $a>1$ and $n=|V(G)|$. "?" means that this question is still open.}
\end{table} 

The importance of $\calH(G)$ as an underlying structure drives our interest in the injective hulls of various graph classes.
Our main contributions are summarized in Table~\ref{tab} and organized as follows.
We identify in Section~\ref{sec:injectiveHullProperties} several universal properties of the injective hull of any graph.
Next, we focus on a graph $G$ that belongs to a particular graph class $\calC$.
In particular, we are interested in whether $\calC$ is closed under Hellification, i.e., whether $G \in \calC$ implies $\calH(G) \in \calC$.
In Section~\ref{sec:hyperbolicGraphs}, we give a graph theoretic proof that hyperbolic graphs are closed under Hellification.
Moreover, we prove that satisfying the Helly property in disks of radii at most $\delta+1$ is sufficient to satisfy the Helly property in all disks of a $\delta$-hyperbolic graph.
In Section~\ref{sec:ATFree}, we show that permutation graphs are not closed under Hellification and provide conditions in which AT-free graphs are.
In Section~\ref{sec:chordal}, we prove that chordal graphs and square-chordal graphs are closed under Hellification.
In Section~\ref{sec:dhg}, we add distance-hereditary graphs to the growing list of graph classes closed under Hellification and provide a \anew{linear-}time algorithm to compute $\calH(G)$ of a distance-hereditary graph $G$.
We demonstrate in Section~\ref{sec:exponentialInjectiveHull} that the injective hull of several graphs from some other well-known classes of graphs are impossible to compute efficiently. 
Specifically, there is a graph $G$ where the number of vertices in $\calH(G)$ is $\Omega(a^n)$, where $a > 1$, $n=|V(G)|$.
\hnew{We construct three such graphs: a split graph, a cocomparability graph, and a bipartite graph. } Note also that such well-known graph classes as interval graphs, strongly chordal graphs, dually chordal graphs are subclasses of Helly graphs~\cite{FDraganPhD,DBLP:journals/siamdm/BrandstadtDCV98,draganLocationProblems} and therefore for them trivially $\calH(G)$ coincides with $G$. The definitions of graph classes not provided here can be found in \cite{doi:10.1137/1.9780898719796}.

\section{Preliminaries}\label{sec:preliminaries}
All graphs $G=(V(G),E(G))$ occurring in this paper are undirected, connected, and without loops or multiple edges.
A \emph{path} $P(v_0,v_k)$ is a sequence of vertices $v_0,\dots,v_k$ such that $v_iv_{i+1} \in E$ for all $i \in [0,k-1]$; its \emph{length} is $k$.
The \emph{distance} $d_G(u,v)$ between two vertices~$u$ and~$v$ is the length of a shortest path connecting them in $G$; the distance $d_G(u,S)$ between a vertex $u$ and a set of vertices $S \subseteq V(G)$ is the minimum distance from~$u$ to any vertex of~$S$.
The interval $I(x,y)$ between vertices $x,y$ is the set of all vertices belonging to a shortest $(x,y)$-path, i.e., $I(x,y) = \{v \in V(G) : d_G(x,y) = d_G(x,v) + d_G(v,y)\}$.
The interval slice $S_k(x,y)$ is the set of vertices belonging to $I(x,y)$ and at distance $k$ from~$x$, i.e., $S_k(x,y)= \{v \in I(x,y) : d_G(x,v) = k \}$.
The {neighborhood of $v$} consists of all vertices adjacent to $v$, denoted by $N(v)$.
The \emph{closed neighborhood} of $v$ is defined as $N[v]=N(v) \cup \{v\}$.
The degree $deg(v)$ of a vertex $v$ is the number of neighbors it has, i.e., $deg(v)=|N(v)|$.
A vertex $v$ is \emph{pendant} if $deg(v)=1$.
Two vertices~$v$ and~$u$ are \emph{twins} if they have the same neighborhood.
\emph{True twins} are adjacent; \emph{false twins} are not.
A disk $D_G(v,r)$ with radius $r$ and centered at a vertex~$v$ consists of all vertices with distance at most~$r$ from~$v$, i.e., $D_G(v,r) = \{u \in V(G) : d_G(u,v) \le r\}$.
A set $M \subset V(G)$ is said to \emph{separate} a vertex pair $x,y \in V(G)$ if the removal of $M$ from $G$ separates $x$ and $y$ into distinct connected components.
The \emph{eccentricity} of a vertex~$v$ is defined as $\ecc_G(v) = \max_{u \in V(G)}{d_G(v,u)}$.
The radius $rad(G)$ and diameter $diam(G)$ are the minimum and maximum eccentricity, respectively. 
The $k^{th}$ \emph{power} $G^k$ of a graph $G$ is a graph that has the same set of vertices, but in which two distinct vertices are adjacent if and only if their distance in $G$ is at most $k$.
A subgraph $G'$ of a graph $G$ is called {\em isometric} if for any two vertices $x,y$ of $G'$, $d_G(x,y)=d_{G'}(x,y)$ holds.
We denote by $\langle S \rangle$ the subgraph of~$G$ induced by the vertices~$S \subset V$.
The subindex $G$ is omitted when the graph is known by context.

A \emph{chord} of a path (cycle) $v_0,\dots,v_k$ is an edge between two vertices of the path (cycle) that is not an edge of the path (cycle).
A set $M \subseteq V(G)$ is an \emph{independent set} if for all $u,v \in V(G)$, $uv \notin E(G)$.
A set $M \subseteq V(G)$ is a \emph{clique} (or \emph{complete subgraph}) if all distinct vertices $u,v \in M$ have $uv \in E(G)$. A set $M \subseteq V(G)$ is said to be a \emph{2-set} if for every $x,y\in M$, $d(x,y)\le 2$ holds. A 2-set $M$ is \emph{maximal} in $G$ if it is maximal by inclusion. 
A vertex $v$ is said to \emph{suspend} a set $M \subseteq V(G)$ if $vu \in E(G)$ for each $u \in M\setminus \{v\}$; $v$ is also said to be \emph{universal} to $M\setminus \{v\}$.
We denote by $C_k$ a cycle induced by $k$ vertices, by $W_k$ an induced wheel of size $k$, i.e., a cycle $C_k$ with one additional vertex universal to $C_k$, and by $K_n$ a clique of $n$ vertices.
A graph $B$ is \emph{bipartite} if its vertex set can be partitioned into two independent sets $X$ and $Y$, i.e., each edge $uv \in E(B)$ has one end in $X$ and the other in $Y$.

A \emph{tree-decomposition} $(\calT,T)$ for a graph $G$ is a family $\calT = \{ B_1, B_2, \dots \}$ of subsets of $V(G)$, called \emph{bags}, such that $\calT$ forms a tree $T$ with the bags in $\calT$ as nodes which satisfy the following conditions:
(i) each vertex is contained in a bag,
(ii) for each edge $uv \in E(G)$, $\calT$ has a bag $B$ with $u,v \in B$, and
(iii) for each vertex $v \in V(G)$, the bags containing $v$ induce a subtree of $T$.
A tree decomposition has \hnew{ \emph{breadth}} $\rho$ if, for each bag $B$, there is a vertex $v$ in $G$ such that $B \subseteq D_G(v,\rho)$.
A tree decomposition has \hnew{ \emph{length}} $\lambda$ if the diameter in~$G$ of each bag $B$ is at most $\lambda$.
The \emph{tree-breadth} $tb(G)$~\cite{DBLP:journals/algorithmica/DraganK14} and  \emph{tree-length} $tl(G)$~\cite{DOURISBOURE20072008} are the minimum breadth and length, respectively, among all possible tree decompositions of $G$.

A graph $G$ is \emph{Helly} if, for any system of disks $\calF = \{ D(v, r(v)) : v \in S \subseteq V(G)\}$, the following Helly property holds:
if $X \cap Y \neq \emptyset$ for every $X,Y \in \calF$,
then $\bigcap_{v \in S} D(v, r(v)) \neq \emptyset$.
\emph{Pseudo-modular} graphs are a far-reaching superclass of Helly graphs. By definition, a graph $G$ is \emph{pseudo-modular}  if every triple $x,y,z$ of its vertices admits either a `median' vertex or a `median' triangle, i.e., either there is a vertex $v$ such that $d(x,y)=d(x,v)+d(v,y), d(x,z)=d(x,v)+d(v,z), d(z,y)=d(z,v)+d(v,y)$ or there is a triangle (three pairwise adjacent vertices) $v,u,w$ such that $d(x,y)=d(x,v)+1+d(u,y), d(x,z)=d(x,v)+1+d(w,z), d(z,y)=d(z,w)+1+d(u,y)$. 
Pseudo-modular graphs are characterized as follows.

\begin{proposition} \cite{Bandelt:1986:PG:10100.10102} \label{prop:pseudo-modular-def}
For a connected graph $G$ the following are equivalent:
  \begin{enumerate}[topsep=0pt,itemsep=20pt,parsep=0pt,partopsep=0pt]
  	\setlength\itemsep{0em}
    \item[i)] $G$ is pseudo-modular.
    \item[ii)] Any three pairwise intersecting disks of $G$ have a nonempty intersection.
    \item[iii)] If $1 \leq d(v,w) \leq 2$ and $d(u,v) = d(u,w) = k \geq 2$ for vertices $u,v,w$ of $G$,
    then there exists a vertex $x$ such that $d(v,x) = d(w,x) = 1$ and $d(u,x)=k-1$.
  \end{enumerate}
\end{proposition}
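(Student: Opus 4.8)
The plan is to establish the cyclic chain of implications $(i)\Rightarrow(ii)\Rightarrow(iii)\Rightarrow(i)$. \emph{From $(ii)$ to $(iii)$:} suppose $1\le d(v,w)\le 2$ and $d(u,v)=d(u,w)=k\ge 2$, and apply $(ii)$ to the three disks $D(v,1)$, $D(w,1)$, $D(u,k-1)$. These are pairwise intersecting: $D(v,1)\cap D(w,1)\ne\emptyset$ because some vertex on a shortest $(v,w)$-path is at distance at most $1$ from both (as $d(v,w)\le 2$), while $D(v,1)\cap D(u,k-1)$ contains the neighbour of $v$ on a shortest $(v,u)$-path, and symmetrically for $w$. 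A vertex $x$ in the common intersection satisfies $d(x,v),d(x,w)\le 1$ and $d(x,u)\le k-1$; since $d(v,u)=d(w,u)=k\ge 2$ we cannot have $x\in\{v,w\}$, so $d(x,v)=d(x,w)=1$, and then $d(x,u)\ge d(v,u)-d(v,x)=k-1$ forces $d(x,u)=k-1$, as needed.

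\emph{From $(i)$ to $(ii)$:} given pairwise intersecting disks $D(x,p),D(y,q),D(z,r)$, decrease the radii one at a time by $1$ for as long as pairwise intersection is preserved. If at the end some radius equals $0$, that disk is a single vertex which, by pairwise intersection, lies in the other two disks, so we are done; hence assume the final radii are positive, and then each is tight: since decreasing $p$ would destroy a pairwise intersection, $d(x,y)=p+q$ or $d(x,z)=p+r$, and likewise for $q$ and for $r$. A short check shows at least two of $d(x,y)=p+q$, $d(x,z)=p+r$, $d(y,z)=q+r$ hold, and after relabelling we may assume the two involving $x$ do. Apply $(i)$ to $x,y,z$. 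If it has a median vertex $m$, then $d(y,z)=(d(x,y)-d(x,m))+(d(x,z)-d(x,m))$, which is at most $q+r$ because $D(y,q)$ and $D(z,r)$ intersect; hence $d(x,m)\ge p$, and the vertex at distance $p$ from $x$ on a shortest $(x,y)$-path through $m$ is at distance $q$ from $y$ and at distance at most $r$ from $z$, so it lies in all three disks (one may take $m$ itself when $d(x,m)=p$). If $x,y,z$ has a median triangle $v,u,w$, the same inequality $d(y,z)\le q+r$ forces $d(x,v)\ge p$, and the vertex at distance $p$ from $x$ on a shortest $(x,v)$-path (or $v$ itself, if $d(x,v)=p$) again lies in all three disks. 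Either way $(ii)$ holds.

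\emph{From $(iii)$ to $(i)$:} this is the crux, and I would argue by contradiction. Pick a triple $x,y,z$ admitting neither a median vertex nor a median triangle with $d(x,y)+d(y,z)+d(z,x)$ minimum. No vertex of the triple lies on the interval between the other two --- such a vertex would be a median --- so the three distances obey strict triangle inequalities. Let $y_1$ be a neighbour of $y$ on a shortest $(y,x)$-path and $y_2$ a neighbour of $y$ on a shortest $(y,z)$-path. If $y_1=y_2$, the triple $x,y_1,z$ has strictly smaller total distance, hence by minimality a median vertex or median triangle $M$; using $y_1\in I(y,x)\cap I(y,z)$ one checks that $M$ is also a median (vertex, respectively triangle) of $x,y,z$, a contradiction. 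If $y_1\sim y_2$ and $d(x,z)=d(x,y)+d(y,z)-1$, then $\{y,y_1,y_2\}$ is itself a median triangle of $x,y,z$, a contradiction. In the remaining configurations --- in particular when $d(y_1,y_2)=2$ --- one invokes $(iii)$ on a suitably chosen pair of vertices that lie at distance $2$ and are equidistant (at distance $\ge 2$) from a third vertex of the triple, obtaining a vertex ``one step closer'' that produces a strictly smaller triple, whose median (vertex or triangle) is then lifted back to $x,y,z$. I expect the main obstacle to be precisely this last case: arranging the analysis so that $(iii)$ is applied with admissible parameters, and reconciling the median-vertex and median-triangle subcases under the lifting. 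This is where the combinatorial substance of pseudo-modularity is actually spent.
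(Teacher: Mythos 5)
This proposition is quoted by the paper from Bandelt and Mulder's work on pseudo-modular graphs and is not proved in the paper, so there is no in-paper argument to compare against; what matters is whether your blind proof stands on its own. Two of your three implications do. The argument for $(ii)\Rightarrow(iii)$ is complete and correct, and the argument for $(i)\Rightarrow(ii)$ (shrink the radii to tightness, observe that at least two of the three distance equalities hold, relabel so both involve $x$, then use the median vertex or median triangle to locate a point at distance exactly $p$ from $x$ on the way to the median) checks out, including the parity step $2d(x,v)\ge 2p-1\Rightarrow d(x,v)\ge p$ in the triangle case.

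The implication $(iii)\Rightarrow(i)$, however, is a genuine gap, and it is the only implication with real content. Your reduction works when some neighbour $y_1$ of $y$ lies simultaneously on shortest paths to $x$ and to $z$, and the configuration $y_1\sim y_2$ with $d(x,z)=d(x,y)+d(y,z)-1$ does produce the median triangle $\{y_1,y,y_2\}$. But in the remaining configurations the application of $(iii)$ you gesture at does not have its hypotheses available: $(iii)$ needs two vertices at distance at most $2$ that are \emph{equidistant} from a third, whereas your natural candidates $y_1\in S_1(y,x)$ and $y_2\in S_1(y,z)$ satisfy $d(x,y_1)=d(x,y)-1$ while (precisely because you are not in the first case) $d(x,y_2)\ge d(x,y)$, and symmetrically at $z$; so neither $x$ nor $z$ can serve as the apex $u$ without further work. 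The case $y_1\sim y_2$ with $d(x,z)\le d(x,y)+d(y,z)-2$ is likewise unresolved. Since you yourself flag this as "where the combinatorial substance of pseudo-modularity is actually spent," the proposal should be read as a correct proof of $(i)\Leftrightarrow(ii)$ and $(ii)\Rightarrow(iii)$ together with a plausible but unexecuted plan for $(iii)\Rightarrow(i)$; closing that last implication requires the more delicate induction of Bandelt and Mulder, not just the minimal-perimeter reduction as stated.
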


The presence of pseudo-modularity in a graph $G$ is of algorithmic interest because it limits the number of disk families which must satisfy the Helly property for $G$ to be considered Helly.
Specifically, a pseudo modular graph is Helly if and only if it is \emph{neighborhood-Helly}, i.e., if the family of its all unit disks (all closed neighborhoods) $\{D(v,1) : v \in V(G)\}$   satisfies the Helly property. 

\begin{proposition} \cite{Bandelt:1989:DAR:72175.72177} \label{prop:pseudo-modular-helly}
$G$ is Helly if and only if it is pseudo-modular and neighborhood-Helly.
\end{proposition}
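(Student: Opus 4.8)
The statement has an easy direction and a substantive one. For the forward implication, assume $G$ is Helly. Applying the Helly property to triples of disks shows that any three pairwise intersecting disks have a common point, so $G$ is pseudo-modular by Proposition~\ref{prop:pseudo-modular-def}(ii); applying it to families of unit disks (which are disks) shows directly that $G$ is neighborhood-Helly. So the real task is the converse: every pseudo-modular, neighborhood-Helly graph $G$ is Helly.

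Let $\calF = \{ D(v_i, r_i) : i \in I \}$ be a pairwise intersecting family of disks of $G$; we must produce a vertex in $\bigcap_{i \in I} D(v_i, r_i)$. I would argue by induction on $r(\calF) := \max_{i \in I} r_i$. If some $r_i = 0$, the singleton $D(v_i,0) = \{v_i\}$ meets every $D(v_j, r_j)$, which forces $v_i \in D(v_j, r_j)$ for all $j$, and we are done; so we may assume every $r_i \ge 1$. If $r(\calF) = 1$, then $\calF$ is a pairwise intersecting family of unit disks and the neighborhood-Helly hypothesis finishes the base case.

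For $r := r(\calF) \ge 2$, let $A = \{ i \in I : r_i = r \}$ be the indices of the maximum-radius disks. The plan is to replace each $D(v_i, r)$ with $i \in A$ by a disk $D(v_i', r-1)$ centered at a suitable $v_i' \in N[v_i]$, so that the new family $\calF' = \{ D(v_i', r-1) : i \in A \} \cup \{ D(v_j, r_j) : j \in I \setminus A \}$ is again pairwise intersecting. Since $D(v_i', r-1) \subseteq D(v_i, r)$ and $r(\calF') = r - 1$ (every $j \notin A$ has $r_j \le r-1$ by maximality of $A$), the induction hypothesis then yields a vertex in $\bigcap \calF' \subseteq \bigcap \calF$. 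Keeping $\calF'$ pairwise intersecting means choosing the $v_i'$ so that $d(v_i', v_{i'}') \le 2r-2$ for all $i, i' \in A$ and $d(v_i', v_j) \le (r-1) + r_j$ for all $i \in A$, $j \notin A$. Pseudo-modularity is what makes such a choice possible: condition~(iii) of Proposition~\ref{prop:pseudo-modular-def} lets one move two nearby vertices one step closer to a far-away vertex through a common neighbor, which is exactly the tool needed to pull the center of a maximum-radius disk inward without leaving it. I would carry out the relocation by a secondary induction on $|A|$, at each stage processing one or two maximum-radius disks, using that any two disks of a pairwise intersecting family share a witness vertex together with condition~(iii) to maintain all the distance constraints above.

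The main obstacle is exactly this coordinated relocation of the maximum-radius centers. A single disk $D(v_i, r)$ may be \emph{tangent} to several other disks, sharing essentially only one witness vertex with each; shrinking it by one radius unit forces its center to move toward all of those witnesses at once, which cannot be done by any naive local choice. Overcoming this is precisely where pseudo-modularity is indispensable, as it guarantees enough branching among shortest paths that one inward move of a center can respect all the tangency constraints simultaneously; the cleanest organization I see is to isolate a self-contained relocation lemma for pseudo-modular graphs and then plug it into the induction above, with the degenerate instances of that lemma (when a new radius $r-1$ collapses to $1$) being exactly where the neighborhood-Helly hypothesis is used a second time.
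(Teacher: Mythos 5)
The paper does not prove this proposition at all; it is imported verbatim from Bandelt and Pesch \cite{Bandelt:1989:DAR:72175.72177}, so there is no in-paper argument to compare yours against. Judged on its own terms, your forward direction is correct and complete: triples of disks give pseudo-modularity via Proposition~\ref{prop:pseudo-modular-def}(ii), and unit disks give the neighborhood-Helly property. The converse, however, has a genuine gap, and it sits exactly where you yourself point: the ``relocation lemma.'' Your induction on the maximum radius $r$ is only as good as the claim that each maximum-radius center $v_i$ admits a single neighbor $v_i'$ with $D(v_i', r-1)$ still meeting \emph{every} other disk in the family. Condition~(iii) of Proposition~\ref{prop:pseudo-modular-def} is a statement about one far vertex $u$ and two close vertices $v,w$; it lets you step toward one target at a time, whereas the relocation must respect all the tangency constraints $d(v_i', v_j) \le (r-1) + r_j$ simultaneously, for arbitrarily many $j$. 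Asserting that pseudo-modularity ``guarantees enough branching among shortest paths'' for this is not a proof --- it is a restatement of the theorem's entire content, since a graph in which such simultaneous inward moves always exist is essentially already a Helly graph relative to its unit-disk structure. Your closing paragraph concedes that you have not established this lemma, only that you would like to isolate it.

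Concretely, what is missing is an argument that converts the \emph{pairwise} intersection hypothesis plus the \emph{triple}-intersection guarantee of pseudo-modularity into control over the whole family at once. The classical route does this by showing that in a pseudo-modular graph the shrunken family (or, in the base of the recursion, the set of centers together with their pairwise witnesses) forms a 2-set, i.e., a set of pairwise distance at most $2$, at which point the neighborhood-Helly hypothesis --- equivalently, the suspension of all maximal 2-sets noted just after the proposition in Section~\ref{sec:preliminaries} --- supplies the common vertex; pseudo-modularity is invoked repeatedly, on triples, to certify that the reduction never loses pairwise intersection. Until you supply that bridge from triples to the full family (your ``secondary induction on $|A|$'' is named but never executed, and it is not clear what its inductive invariant would be), the converse direction is a plausible plan rather than a proof.
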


It is clear that $G$ is neighborhood-Helly if and only if all maximal 2-sets of $G$ are suspended. 




We define the remaining graph classes in their corresponding sections; the definitions of graph classes not provided here can be found in \cite{doi:10.1137/1.9780898719796}.

\section{Injective hulls}\label{sec:injectiveHullProperties}
By an equivalent definition of an injective hull \cite{dress} (also called a tight span),
each vertex $f \in V(\calH(G))$ can be represented as a vector with values $f(x)$ for each $x \in V(G)$,
such that the following two properties hold:
\begin{equation} \label{eq:atLeastDistance}
\forall x,y \in V(G) \ f(x) + f(y) \geq d(x,y) 
\end{equation}
\vspace*{-3mm}
\begin{equation} \label{eq:extremalFunctions}
\forall x \in V(G) \ \exists y \in V(G) \ f(x) + f(y) = d(x,y) 
\end{equation}

Additionally, there is an edge between two vertices $f,g \in V(\calH(G))$ if and only if their Chebyshev distance is 1, i.e., $\max_{x \in V(G)} \lvert f(x) -g(x) \rvert = 1$.
Thus, $d_{\calH(G)}(f,g) = \hnew{\max_{x \in V(G)}} \lvert f(x) - g(x) \rvert$.
Notice that if $f \in V(\calH(G))$, then $\{D(x, f(x)) : x \in V(G)\}$ is a family of pairwise intersecting disks.
For a vertex~$z \in V(G)$, define the distance function $d_z$ by setting $d_z(x)=d_G(z,x)$ for any $x \in V(G)$.
By the triangle inequality, each $d_z$ belongs to $V(\calH(G))$.
An isometric embedding of~$G$ into~$\calH(G)$ is obtained by mapping each vertex~$z$ of~$G$ to its distance vector $d_z$.

We classify every vertex $v$ in $V(\calH(G))$ as either a real vertex or a Helly vertex.
A vertex~$f \in V(\calH(G))$ is a \emph{real vertex} provided $f=d_z$ for some $z \in V(G)$, i.e., there is a one-to-one correspondence between $z \in V(G)$ and its representative real vertex $f \in V(\calH(G))$ which uniquely satisfies $f(z) = 0$ and $f(x) = d_G(z,x)$ for all $x \in V(G)$.
By an abuse of notation, we will interchangeably use $V(G)$ to represent the vertex set in~$G$ as well as the vertex subset of~$\calH(G)$ which uniquely corresponds to the vertex set of~$G$.
Then, a vertex~$v \in V(\calH(G))$ is a real vertex if it belongs to $V(G)$ and a \emph{Helly vertex} otherwise.
Equivalently, a vertex~$h \in V(\calH(G))$ is a Helly vertex provided that $h(x) \ge 1$ for all $x \in V(G)$, that is, a Helly vertex exists only in the injective hull~$\calH(G)$ and not in~$G$.
A path $P(x,y)$ in $\calH(G)$ connecting vertices $x,y \in V(G)$ is said to be a \emph{real path} if each vertex $u \in P(x,y)$ is real.
We often use the terms \emph{Hellify} (verb) and \emph{Hellification} (noun) to describe the process by which edges and Helly vertices are added to $G$ to construct $\calH(G)$.
When $G$ is known by context, we often let $H := \calH(G)$.

A vertex $x$ is a \emph{peripheral} vertex if $I(y,x) \not\subset I(y,z)$ for some vertex $y$ and all vertices $z \neq x$. In $\calH(G)$, all peripheral vertices are real. Consequently, all farthest vertices from any $v \in V(\calH(G))$ are real. It follows that, in $\calH(G)$, 
any shortest path is a subpath of a shortest path between real vertices.

\begin{proposition}~\cite{ourManuscriptWeaklyHelly} \label{prop:outsidePointsAreReal}
  Peripheral vertices of $\calH(G)$ are real.
\end{proposition}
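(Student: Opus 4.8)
The plan is to prove the contrapositive: no Helly vertex of $H:=\calH(G)$ is peripheral. Recall every vertex of $H$ is either real (a member of $V(G)$) or Helly, and in the vector representation a Helly vertex $f$ has $f(x)\ge 1$ for all $x\in V(G)$. Fix such an $f$. Unwinding the definition of peripheral, it is enough to show that for \emph{every} $y\in V(H)$ there is a vertex $z\ne f$ with $f\in I_H(y,z)$; then, by the triangle inequality, $I_H(y,f)\subseteq I_H(y,z)$, and this containment is proper (below, $z$ will be strictly beyond $f$ as seen from $y$), so no $y$ can witness that $f$ is peripheral. The case $y=f$ is trivial, so assume $y\ne f$, and take $z=d_{x^*}$, the real vertex of a suitable $x^*\in V(G)$.

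The first ingredient is the identity $d_H(u,d_w)=u(w)$ for every $u\in V(H)$ and every $w\in V(G)$, so that the distance in $H$ to the real vertex $d_w$ just reads off the $w$-coordinate. This is short: $d_H(u,d_w)=\max_{x\in V(G)}|u(x)-d_G(w,x)|\ge u(w)$ (take $x=w$), while $d_G(w,x)-u(x)\le u(w)$ is precisely \eqref{eq:atLeastDistance}, and $u(x)-d_G(w,x)\le u(w)$ follows by choosing, via \eqref{eq:extremalFunctions}, a $v$ with $u(x)+u(v)=d_G(x,v)$ and then combining $d_G(x,v)\le d_G(x,w)+d_G(w,v)$ with \eqref{eq:atLeastDistance} for the pair $w,v$.

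The second ingredient is the key claim: for $y\ne f$ there is a coordinate $x^*\in V(G)$ with $y(x^*)-f(x^*)=d_H(y,f)=:D$ (note $D\ge 1$). Pick any $x_1$ with $|y(x_1)-f(x_1)|=D=\max_{x}|y(x)-f(x)|$. If $y(x_1)-f(x_1)=D$, set $x^*:=x_1$. Otherwise $f(x_1)-y(x_1)=D$; apply \eqref{eq:extremalFunctions} to $f$ at $x_1$ to obtain $x_1'$ with $f(x_1)+f(x_1')=d_G(x_1,x_1')$, and apply \eqref{eq:atLeastDistance} to $y$ at $x_1,x_1'$ to get $y(x_1)+y(x_1')\ge d_G(x_1,x_1')=f(x_1)+f(x_1')$; rearranging gives $y(x_1')-f(x_1')\ge f(x_1)-y(x_1)=D$, and since $|y(x_1')-f(x_1')|\le D$ we must have equality, so $x^*:=x_1'$ works. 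Now set $z:=d_{x^*}$. Because $f$ is Helly, $f(x^*)\ge 1>0=d_{x^*}(x^*)$, hence $z\ne f$; and by the identity above $d_H(f,z)=f(x^*)$ and $d_H(y,z)=y(x^*)=f(x^*)+D=d_H(f,z)+d_H(y,f)$, i.e., $f\in I_H(y,z)$ (and $z\notin I_H(y,f)$, since $f(x^*)\ge 1$). As $y\in V(H)$ was arbitrary, $f$ is not peripheral, which is the contrapositive of the statement.

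I expect the main obstacle to be the key claim in the last paragraph. The naive choice of $x^*$ as any coordinate maximizing $|y(x)-f(x)|$ fails: if the maximum is attained with $f(x_1)>y(x_1)$, the real vertex $d_{x_1}$ need not lie beyond $f$ as seen from $y$, and in fact no real vertex need lie beyond $f$ in that direction. Converting such a coordinate into one with $y(x^*)>f(x^*)$ is exactly where extremality of $f$ (condition \eqref{eq:extremalFunctions}) must be played against the triangle-type inequality \eqref{eq:atLeastDistance} satisfied by the vertex $y$; everything else — the reduction to Helly vertices, the coordinate identity, and the final distance computation — is routine bookkeeping with the two defining conditions of the tight span.
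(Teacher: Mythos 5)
Your proof is correct. The paper itself gives no proof of this proposition --- it is imported from \cite{ourManuscriptWeaklyHelly} --- so there is nothing to compare against line by line, but your argument is sound and is essentially the standard tight-span argument: the identity $d_H(u,d_w)=u(w)$, the observation that $\max_x\bigl(y(x)-f(x)\bigr)=\max_x\bigl(f(x)-y(x)\bigr)=d_H(y,f)$ for extremal functions (your ``key claim''), and the conclusion that $d_{x^*}$ lies strictly beyond the Helly vertex $f$ on a geodesic from $y$, so that $I(y,f)\subsetneq I(y,d_{x^*})$ and $f$ is not peripheral. All steps check out, including the non-obvious conversion of a coordinate where $f$ exceeds $y$ into one where $y$ exceeds $f$ via condition~\eqref{eq:extremalFunctions}.
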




\hnew{
The following result was proven earlier in~\cite{ourManuscriptWeaklyHelly} only for $H := \calH(G)$.
For completeness, we provide a proof that it holds for any host $H$ such that $G$ embeds isometrically into $H$ and all peripheral vertices in $H$ are from $G$.

\begin{proposition}~\cite{ourManuscriptWeaklyHelly} \label{prop:shortestPathSubsetOfReal}
Let $H$ be a host such that $G$ embeds isometrically into $H$ and all peripheral vertices in $H$ are from $G$.
For any shortest path $P(x,y)$, where $x,y \in V(H)$, there is a shortest path $P(\rl{x},\rl{y})$, where $\rl{x},\rl{y} \in V(G)$ are peripheral vertices of $G$, such that $P(\rl{x},\rl{y}) \supseteq P(x,y)$.
\end{proposition}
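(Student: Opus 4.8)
\emph{Proof plan.} The plan is to take the given shortest path $P(x,y)$ in $H$ and repeatedly \emph{prolong} it past an endpoint, one endpoint at a time, until it can no longer be extended; I expect the two resulting endpoints to be exactly the peripheral vertices we want, and the only care needed is to make sure the first endpoint stays ``stuck'' while we work on the second.

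First I would record a reformulation of peripherality in terms of prolongability. For $w,v\in V(H)$, a shortest $(w,v)$-path is a proper subpath of some shortest $(w,z)$-path with $z\ne v$ exactly when $v\in I_H(w,z)$, i.e.\ $d_H(w,z)=d_H(w,v)+d_H(v,z)$; a one-line triangle-inequality check shows this is the same as $I_H(w,v)\subsetneq I_H(w,z)$ (the inclusion holds because $u\in I_H(w,v)$ forces $u\in I_H(w,z)$, and it is proper since $z\in I_H(w,z)\setminus I_H(w,v)$). Hence a shortest $(w,v)$-path is \emph{non-prolongable beyond $v$} precisely when $I_H(w,v)\not\subset I_H(w,z)$ for every $z\ne v$, which is exactly the peripherality condition at $v$ with witness $w$; in particular such a $v$ is a peripheral vertex of $H$.

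Next I would run the extension. Keeping $x$ fixed, prolong $P(x,y)$ past its far endpoint as long as possible; each prolongation strictly increases the distance from $x$, so finiteness of $H$ forces the process to halt at a shortest path $P(x,\rl{y})\supseteq P(x,y)$ that is non-prolongable beyond $\rl{y}$, so $\rl{y}$ is peripheral in $H$ with witness $x$. Reversing the path and repeating with $\rl{y}$ now fixed gives a shortest path $P(\rl{y},\rl{x})\supseteq P(\rl{y},x)\supseteq P(x,y)$ that is non-prolongable beyond $\rl{x}$. The point that needs attention is that $P(\rl{x},\rl{y})$ is still non-prolongable beyond $\rl{y}$: since $x$ lies on this path, any prolongation to a shortest $(\rl{x},z)$-path would, by the triangle inequality, force $\rl{y}\in I_H(x,z)$ with $z\ne\rl{y}$, contradicting the first halting condition. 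So $P(\rl{x},\rl{y})$ contains $P(x,y)$ and is non-prolongable at both ends, whence $\rl{x}$ (witness $\rl{y}$) and $\rl{y}$ (witness $\rl{x}$) are peripheral in $H$, and therefore lie in $V(G)$ by hypothesis.

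Finally I would upgrade ``peripheral in $H$'' to ``peripheral in $G$'' using the isometric embedding: for $u,v\in V(G)$ we have $d_G(u,v)=d_H(u,v)$ and $I_G(u,v)=I_H(u,v)\cap V(G)$, so if some $z\in V(G)\setminus\{\rl{y}\}$ had $I_G(\rl{x},\rl{y})\subsetneq I_G(\rl{x},z)$, then $\rl{y}\in I_H(\rl{x},z)$, contradicting non-prolongability of $P(\rl{x},\rl{y})$ beyond $\rl{y}$; symmetrically for $\rl{x}$ with witness $\rl{y}$. Thus $\rl{x},\rl{y}$ are peripheral vertices of $G$ and $P(\rl{x},\rl{y})\supseteq P(x,y)$, as required. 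I do not expect a genuine obstacle here --- it is a finite extension argument plus the triangle inequality --- the only subtlety being the bookkeeping that the first endpoint remains non-prolongable after the second one has been extended.
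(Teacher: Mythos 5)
Your proof is correct and follows essentially the same route as the paper: extend the shortest path past each endpoint to a non-prolongable (equivalently, distance-maximal) vertex, observe that non-prolongability is exactly the peripherality condition, and invoke the hypothesis that peripheral vertices of $H$ lie in $V(G)$. You are somewhat more careful than the paper's own proof on two minor points --- verifying that the first endpoint remains non-prolongable after the second is extended, and upgrading ``peripheral in $H$'' to ``peripheral in $G$'' via $I_G(u,v)=I_H(u,v)\cap V(G)$ --- but the underlying argument is the same.
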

\begin{proof}
If $x$ and $y$ are both real vertices, then the proposition is trivially true.
Without loss of generality, suppose vertex $y$ does not belong to $V(G)$.
Consider a breadth-first search layering where $y$ belongs to layer $L_i$ of BFS($H, x$).
Let $y' \in L_k$ be a vertex with $y \in I(x,y')$ that maximizes $k=d_H(x,y')$.
Then, for any vertex $z \in V(H)$, $I(x,y') \not\subset I(x,z)$.
Hence, $y'$ is a peripheral vertex; by assumption, $y' \in V(G)$.
If $x \notin V(G)$, then applying the previous step using BFS($H, y'$) yields vertex $x' \in V(G)$.
\end{proof}
}

Let the distance $d(z,P)$ from a vertex~$z$ to an $(x,y)$-path~$P$ be the minimum distance from~$z$ to any vertex $u \in P$.
We next show that for any vertex $z \in V(\calH(G))$ and any $(x,y)$-path $P$ in $\calH(G)$, there is a real $(\rl{x},\rl{y})$-path $\rl{P}$ in $G$ which behaves similarly to $P$ with respect to some distance properties.
In particular, we show that if $x,y,z \in V(G)$ then for every $(x,y)$-path $P$ in $\calH(G)$, there is a real $(x,y)$-path $\rl{P}$ in $G$ such that $d(z,\rl{P}) \ge d(z,P)$.

We have the following lemma.

\begin{lemma}\label{lem:edgeToAvoidingPathInG}
\hnew{Let $H$ be the injective hull of $G$.}
For any vertex $z \in V(H)$ and edge $xy \in E(H)$, there is a real $(x^*, y^*)$-path $P^*$ in $G$
such that $d_H(z,P^*) \geq d_H(z, \{x,y\})$, $I(z,x) \subseteq I(z,x^*)$, and $I(z,y) \subseteq I(z,y^*)$.
\end{lemma}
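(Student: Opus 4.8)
The plan is to reduce the whole statement, via a few easy moves, to a single residual configuration, and to handle that configuration by transporting a path from $H$ back into $G$.

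\smallskip
\noindent\emph{Set-up.} Put $k:=d_H(z,\{x,y\})$ and assume, without loss of generality, $d_H(z,x)=k\le d_H(z,y)\le k+1$. For $v\in V(H)$ let $v^{+}$ be a vertex of $H$ as far from $z$ as possible subject to $v\in I(z,v^{+})$ (found by a breadth-first layering of $H$ from $z$). Exactly as in the proof of Proposition~\ref{prop:shortestPathSubsetOfReal}, $v^{+}$ is peripheral, hence $v^{+}\in V(G)$ by Proposition~\ref{prop:outsidePointsAreReal}; moreover $v\in I(z,v^{+})$ implies $I(z,v)\subseteq I(z,v^{+})$ (concatenate shortest paths) and $d_H(z,v^{+})=d_H(z,v)+d_H(v,v^{+})\ge d_H(z,v)$. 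Applying this to $x$ and $y$ produces real vertices $x^{+},y^{+}$ with $I(z,x)\subseteq I(z,x^{+})$, $I(z,y)\subseteq I(z,y^{+})$, and $d_H(z,x^{+}),d_H(z,y^{+})\ge k$.

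\smallskip
\noindent\emph{Easy dispatch.} If $x\in I(z,y^{+})$ take $x^{*}:=y^{*}:=y^{+}$ and let $P^{*}$ be the one-vertex path $(y^{+})$: then $d_H(z,P^{*})=d_H(z,y^{+})\ge d_H(z,y)\ge k$ and both interval inclusions hold. By the triangle inequality together with $d_H(z,y^{+})=d_H(z,y)+d_H(y,y^{+})$, this situation is automatic whenever $d_H(z,y)>d_H(z,x)$; symmetrically we are done if $y\in I(z,x^{+})$. Finally, if $x,y\in V(G)$ take $x^{*}:=x$, $y^{*}:=y$, $P^{*}:=(x,y)$. Hence from now on $d_H(z,x)=d_H(z,y)=k$, $x\notin I(z,y^{+})$, $y\notin I(z,x^{+})$, and (relabelling if needed) $x$ is a Helly vertex.

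\smallskip
\noindent\emph{A far walk in $H$.} Since $d_H(x,y)=1$ and $x\notin I(z,y^{+})$, comparing $d_H(x,y^{+})$ with $d_H(z,y^{+})=k+d_H(y,y^{+})$ forces $d_H(x,y^{+})=d_H(x,y)+d_H(y,y^{+})$, i.e.\ $y\in I(x,y^{+})$; symmetrically $x\in I(y,x^{+})$. Fix $z$-geodesics $Q_x:z\leadsto x\leadsto x^{+}$ and $Q_y:z\leadsto y\leadsto y^{+}$. Every vertex $w$ on the piece of $Q_x$ from $x$ to $x^{+}$ satisfies $d_H(z,w)=d_H(z,x)+d_H(x,w)\ge k$, with equality only at $w=x$; likewise for the piece of $Q_y$ from $y$ to $y^{+}$. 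Glue these two pieces with the edge $xy$ to get a walk $W$ from $x^{+}$ to $y^{+}$ in $H$ all of whose vertices are at distance $\ge k$ from $z$, with $x$ and $y$ the only ones possibly at distance exactly $k$; delete repetitions to obtain such a path $P$.

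\smallskip
\noindent\emph{Transporting $P$ into $G$ (the crux).} It remains to replace $P$ by a path $P^{*}$ from $x^{+}$ to $y^{+}$ using only vertices of $G$ and still avoiding $D_H(z,k-1)$; then $x^{*}:=x^{+}$, $y^{*}:=y^{+}$, $P^{*}$ satisfy all three conclusions. I would induct on the number of Helly vertices on the current path. If there are none, the path already lies in $G$, because two real vertices adjacent in $H$ are adjacent in $G$ ($G$ is isometric in $H$). Otherwise pick a Helly vertex $f$ on the path; by construction (and preserved under the rerouting below) one of its two path-neighbours, say $r$, satisfies $d_H(z,r)\ge k+1$, while the other, $g$, satisfies $d_H(z,g)\ge k$ and $d_H(r,g)\le 2$. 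If $rg\in E(H)$ delete $f$; otherwise, using that $H$ is pseudo-modular and neighborhood-Helly (Propositions~\ref{prop:pseudo-modular-def} and \ref{prop:pseudo-modular-helly}), reroute $r,f,g$ through a common neighbour $c$ of $r$ and $g$ — since $f$ is one such common neighbour a choice exists, and $c\sim r$ gives $d_H(z,c)\ge d_H(z,r)-1\ge k$ — and argue that the Helly-vertex count can be made to strictly drop. Iterating yields $P^{*}$. I expect this last step — showing that a walk in $H$ avoiding the ball $D_H(z,k-1)$ can be redrawn as a walk inside $G$ avoiding the same ball — to be the genuine obstacle; everything preceding it is bookkeeping with the interval inclusions $I(z,x)\subseteq I(z,x^{+})$, $I(z,y)\subseteq I(z,y^{+})$ and with distance computations along the geodesics $Q_x,Q_y$.
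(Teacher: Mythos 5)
Your set-up (extending the geodesics from $z$ through $x$ and through $y$ to peripheral, hence real, vertices $x^{+},y^{+}$, exactly as in Proposition~\ref{prop:shortestPathSubsetOfReal}) matches the paper, and your ``easy dispatch'' of the degenerate cases is fine. The problem is the step you yourself flag as the crux: replacing the walk $x^{+}\to x\to y\to y^{+}$ in $H$ by a path in $G$ that still avoids $D_H(z,k-1)$. As written this is not a proof. Your induction is on the number of Helly vertices of the current path, but the rerouting you propose replaces a Helly vertex $f$ by a common neighbour $c$ of its two path-neighbours, and nothing forces $c$ to be real or even ``more real'' than $f$; the Helly-vertex count need not drop, the procedure need not terminate, and the invariant you rely on (one path-neighbour of every remaining Helly vertex at distance $\ge k+1$ from $z$) is not shown to survive the rerouting. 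You acknowledge the obstacle but do not overcome it, so the lemma is not established.

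The paper's proof shows that no transport argument is needed: once $x^{*}:=x^{+}$ and $y^{*}:=y^{+}$ are in hand, take \emph{any} shortest $(x^{*},y^{*})$-path $P^{*}$ in $G$ (it has length $d_H(x^{*},y^{*})$ because $G$ is isometric in $H$). The edge $xy$ gives the upper bound $d_H(x^{*},y^{*})\le d_H(x^{*},x)+1+d_H(y,y^{*})=(j-k)+1+(\ell-p)$, where $j=d_H(z,x^{*})$, $\ell=d_H(z,y^{*})$, $p=d_H(z,y)$; if some $w^{*}\in P^{*}$ had $d_H(z,w^{*})<k$, then $d_H(x^{*},y^{*})=d_H(x^{*},w^{*})+d_H(w^{*},y^{*})\ge(j-k+1)+(\ell-k+1)$, which exceeds the upper bound since $p\ge k$. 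This one triangle-inequality computation replaces your entire ``far walk'' and ``transport'' sections. If you want to salvage your draft, discard the rerouting machinery and insert this length-versus-depth comparison for a $G$-geodesic between $x^{+}$ and $y^{+}$; the rest of your write-up then becomes unnecessary bookkeeping.
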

\begin{proof}
Let $\layer_0, \layer_1, \dots, \layer_{\ecc(z)}$ be layers of $H$ produced by a breadth-first search rooted at vertex $z$.
Without loss of generality, let $d_H(z,\{x,y\}) = d_H(z,x) = k$.
Hence, $x \in \layer_k$ and $y \in \layer_p$ where $p=k$ or $p=k+1$.
By Proposition~\ref{prop:shortestPathSubsetOfReal},
there is a vertex $x^* \in V(G)$ such that $x \in I(z,x^*)$ and
there is a vertex $y^* \in V(G)$ such that $y \in I(z,y^*)$.
Then, $x^* \in \layer_j$ for some $j \ge k$ and $y^* \in \layer_\ell$ for some $\ell \geq p$.
Since $G$ is isometric in $H$, there is a shortest $(x^*,y^*)$-path $P^*$ in $G$ of length $d_H(x^*,y^*)$ consisting of all real vertices, as illustrated in Figure~\ref{fig:avoidingPathInG}(a).
By the triangle inequality, $d_G(x^*, y^*) = d_H(x^*, y^*) \leq d_H(x^*, x) + 1 + d_H(y,y^*) \leq (j - k) + 1 + (\ell - p)$.
By contradiction, assume there is a vertex $w^* \in P^*$ such that $d_H(z,w^*) < k$.
Then, $d_H(x^*, y^*) = d_H(x^*, w^*) + d_H(w^*, y^*) \geq (j - k + 1) + (\ell - k + 1)$, a contradiction.
\end{proof}

\begin{figure}
\centering
\begin{subfigure}{.5\textwidth}
  \centering
  \includegraphics[scale=0.6]{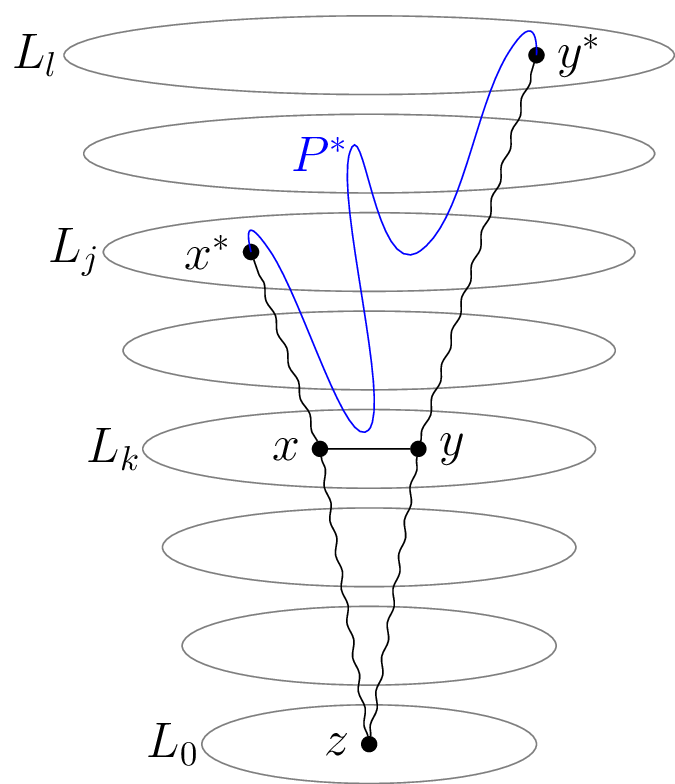}
  \centering\caption{}
  \label{fig:edgeToAvoidingPathInG}
\end{subfigure}%
\begin{subfigure}{.5\textwidth}
  \centering
  \includegraphics[scale=0.6]{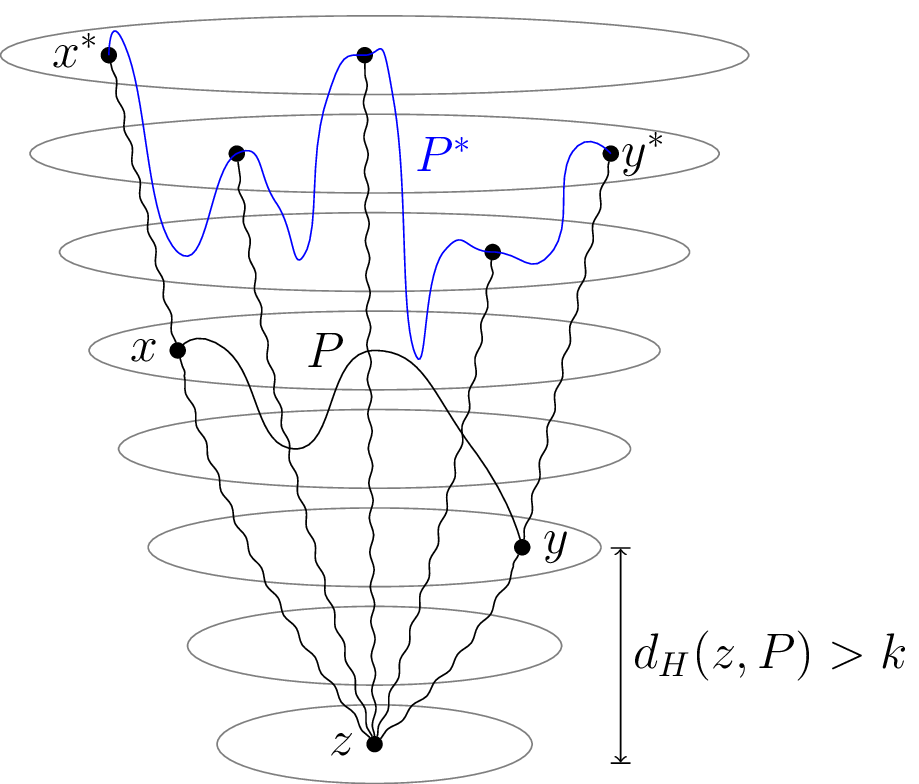}
  \centering\caption{}
  \label{fig:pathToAvoidingPathInG}
\end{subfigure}
\caption{Illustration to the proofs of (a) Lemma~\ref{lem:edgeToAvoidingPathInG} and (b) Theorem~\ref{thm:diskSeparatesInH}, where real paths are shown in blue.}
\label{fig:avoidingPathInG}
\end{figure}

\begin{theorem} \label{thm:diskSeparatesInH}
\hnew{Let $H$ be the injective hull of $G$.}
For any $x,y,z \in V(G)$,
the disk $D_G(z,k)$ separates vertices $x,y$ in $G$ if and only if disk $D_H(z,k)$ separates vertices $x,y$ in $H$.
\end{theorem}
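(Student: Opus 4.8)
\textit{Proof proposal.}

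The plan is to prove the two implications separately, after one normalization: identify $V(G)$ and $E(G)$ with their images in $H$ under the isometric embedding, so that $E(G)\subseteq E(H)$, $d_H(z,v)=d_G(z,v)$ for every $v\in V(G)$, and hence $D_G(z,k)=D_H(z,k)\cap V(G)$. In particular $x,y\notin D_G(z,k)$ if and only if $x,y\notin D_H(z,k)$, so the two separation statements are non-vacuous simultaneously. The direction ``$D_H(z,k)$ separates $x,y$ in $H$ $\Rightarrow$ $D_G(z,k)$ separates $x,y$ in $G$'' is the easy one and I would prove it by contraposition: given an $(x,y)$-path $P$ in $G$ with $d_G(z,w)\ge k+1$ for all $w\in P$, the same sequence $P$ is an $(x,y)$-path in $H$ (since $E(G)\subseteq E(H)$) and it avoids $D_H(z,k)$ (since distances from $z$ agree on $V(G)$), so $D_H(z,k)$ does not separate $x,y$ in $H$.

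The reverse direction is the substantive one, again handled by contraposition. Assume $D_H(z,k)$ does not separate $x,y$ in $H$ and fix an $(x,y)$-path $P=(x=u_0,u_1,\dots,u_m=y)$ in $H$ with $d_H(z,u_i)\ge k+1$ for every $i$; the goal is to produce a real $(x,y)$-path in $G$ that still avoids $D_G(z,k)$ (this is the construction sketched in Figure~\ref{fig:avoidingPathInG}(b)). For each $i$ I first select a vertex $u_i^*\in V(G)$ with $u_i\in I(z,u_i^*)$: for an interior index with $u_i\notin V(G)$, extend a shortest $(z,u_i)$-path maximally away from $z$; its far endpoint is peripheral, hence lies in $V(G)$ by Proposition~\ref{prop:outsidePointsAreReal} (equivalently, apply Proposition~\ref{prop:shortestPathSubsetOfReal} along that path), and I take it to be $u_i^*$. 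For the two endpoints I simply set $u_0^*=x$ and $u_m^*=y$, which is legitimate since $x,y\in V(G)$ and $u_i\in I(z,u_i)$ trivially. Because $G$ is isometric in $H$, each edge $u_iu_{i+1}$ of $P$ can now be replaced by a shortest $(u_i^*,u_{i+1}^*)$-path $Q_i$ lying entirely in $G$, and the concatenation $Q_0Q_1\cdots Q_{m-1}$ is an $(x,y)$-walk in $G$, which contains an $(x,y)$-path in $G$.

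The heart of the argument, and the step I expect to be the main obstacle, is to check that each $Q_i$ avoids $D_G(z,k)$; this reuses the triangle-inequality estimate from the proof of Lemma~\ref{lem:edgeToAvoidingPathInG}, but now for a path joining two vertices neither of which need be on the original $P$. Write $a_i=d_H(z,u_i)$ and $j_i=d_H(z,u_i^*)=a_i+d_H(u_i,u_i^*)$. On one hand, the triangle inequality through $u_i$ and $u_{i+1}$ gives $d_G(u_i^*,u_{i+1}^*)=d_H(u_i^*,u_{i+1}^*)\le (j_i-a_i)+1+(j_{i+1}-a_{i+1})$. On the other hand, if some $w\in Q_i$ had $d_H(z,w)\le k$, then, since $w$ lies on a shortest $(u_i^*,u_{i+1}^*)$-path, $d_G(u_i^*,u_{i+1}^*)=d_H(u_i^*,w)+d_H(w,u_{i+1}^*)\ge (j_i-k)+(j_{i+1}-k)$. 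Comparing the two bounds forces $a_i+a_{i+1}\le 2k+1$, contradicting $a_i,a_{i+1}\ge k+1$. Hence no vertex of $Q_i$ lies in $D_H(z,k)$, and since $Q_i\subseteq V(G)$ it also avoids $D_G(z,k)$. The single unit of slack, namely that a vertex off $D_H(z,k)$ is at distance at least $k+1$ (not merely at least $k$) from $z$, is exactly what makes $a_i+a_{i+1}\ge 2k+2$ beat the bound $2k+1$; and fixing $u_0^*=x$, $u_m^*=y$ is what makes the resulting walk run between the intended endpoints. Everything else is routine bookkeeping.
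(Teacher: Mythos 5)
Your proposal is correct and follows essentially the same route as the paper: the easy direction by reusing the real path, and the substantive direction by replacing each Helly vertex $u_i$ on the avoiding path with a peripheral real vertex $u_i^*$ satisfying $u_i\in I(z,u_i^*)$, joining consecutive $u_i^*$ by shortest paths in $G$, and ruling out a close vertex on each such path via the triangle-inequality estimate -- which is precisely the content of Lemma~\ref{lem:edgeToAvoidingPathInG}, reproved inline in your write-up. Your explicit fixing of one $u_i^*$ per vertex (so consecutive segments actually concatenate) and of $u_0^*=x$, $u_m^*=y$ makes precise a detail the paper leaves implicit.
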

\begin{proof}
($\leftarrow$) It suffices to remark that if $D_G(z,k)$ does not separate $x,y$ in $G$ due to a path $\rl{P}$ connecting them, then the same path establishes that $D_H(z,k)$ does not separate $x,y$ in $\calH(G)$.

($\rightarrow$) Suppose the disk~$D_H(z,k)$ does not separate vertices~$x,y$ in~$H$ and assume, without loss of generality, that $D_H(z,k)\cap \{x,y\}=\emptyset$. Then, there is an $(x,y)$-path~$P$ in~$H$ such that $d_H(z,P) > k$.
Let $P=v_0,v_1,v_2,\dots,v_j$, where $v_0 := x$ and $v_j := y$.
By Lemma~\ref{lem:edgeToAvoidingPathInG}, for each edge $v_iv_{i+1}$ on~$P$,
there is a real $(\rl{v_i}, \rl{v_{i+1}})$-path $\rl{P_i}$ in~$G$ such that
$d_H(z, \rl{P_i}) \geq d_H(z, \{v_i,v_{i+1}\}) > k$,
as shown in Figure~\ref{fig:avoidingPathInG}(b).
Let $\rl{P}$ be the real path obtained by joining, for $i \in [0, j-1]$, each real path $P^*_i$ by their end vertices.
Then, $d_H(z, \rl{P}) \geq d_H(z, P) > k$.
As a result, the disk $D_G(z,k)$ does not separate vertices~$x,y$ in~$G$.
\end{proof}

\begin{corollary}\label{cor:diskSeparatesInH}
\hnew{Let $H$ be the injective hull of $G$.}
For any $x,y,z \in V(G)$ and every $(x,y)$-path $P$ in $H$, there is a real $(x,y)$-path $\rl{P}$ in $G$ such that $d(z,\rl{P}) \ge d(z,P)$.
\end{corollary}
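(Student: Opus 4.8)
The plan is to derive Corollary~\ref{cor:diskSeparatesInH} directly from Theorem~\ref{thm:diskSeparatesInH} by a simple contrapositive argument, together with the observation that the separation statements in both $G$ and $H$ are witnessed by concrete paths. First I would reduce to the interesting case: if $z \in \{x,y\}$ the claim is trivial (take any real $(x,y)$-path in $G$, which exists since $G$ is connected, and note $d(z,P) = 0$ for any path through $z$), so assume $z \notin \{x,y\}$; similarly, if $d_H(z,P) \le d_G(z, \{x,y\})$ — in particular if $P$ already passes within distance of $z$ no larger than the best $G$-path can avoid — one can again just exhibit a suitable real path, so the content is in the case $d_H(z,P) = k+1 > k$ for some value $k$ with $d_G(z,\{x,y\}) \ge$ something. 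More cleanly: set $k := d_H(z,P) - 1$, so that $d_H(z,P) > k$, i.e. the disk $D_H(z,k)$ does \emph{not} separate $x$ from $y$ in $H$ (the path $P$ avoids it).

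Next I would invoke Theorem~\ref{thm:diskSeparatesInH} in the form "$D_G(z,k)$ separates $x,y$ in $G$ $\iff$ $D_H(z,k)$ separates them in $H$". Since $D_H(z,k)$ does not separate $x,y$ in $H$, the theorem gives that $D_G(z,k)$ does not separate $x,y$ in $G$ either. Now I need a path realizing this non-separation: by definition of separation, there is an $(x,y)$-path $\rl{P}$ in $G$ whose vertices all lie outside $D_G(z,k)$, i.e. $d_G(z,u) > k = d_H(z,P) - 1$ for every $u \in \rl{P}$, hence $d_G(z,u) \ge d_H(z,P)$, hence $d_G(z,\rl{P}) \ge d_H(z,P) \ge d_H(z,P) = d(z,P)$ using that $G$ is isometric in $H$ so $d_G(z,u) = d_H(z,u)$ for $z,u \in V(G)$. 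Since every vertex of $\rl{P}$ is in $G$, $\rl{P}$ is a real path, which is exactly what the corollary asserts.

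The only mild subtlety — and the step I'd flag as needing care rather than being a genuine obstacle — is the boundary case $d_H(z,P) = 0$, i.e. when $z$ itself lies on $P$. Then $z \in V(G)$ is on a path between $x,y \in V(G)$, but that path lives in $H$, not necessarily in $G$. Here I would instead argue directly: any $(x,y)$-path in $G$ (which exists by connectedness) trivially satisfies $d(z,\rl{P}) \ge 0 = d(z,P)$, so the statement holds vacuously. With that case dispatched, the main argument above handles $d_H(z,P) \ge 1$ by taking $k = d_H(z,P) - 1 \ge 0$ and feeding it to the theorem. I would present the proof in two short sentences: reduce to $d_H(z,P)\ge 1$, set $k=d_H(z,P)-1$, apply Theorem~\ref{thm:diskSeparatesInH}, and read off the witnessing real path.

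\begin{proof}
If $d_H(z,P) = 0$, then any $(x,y)$-path $\rl{P}$ in $G$ works, since $d(z,\rl{P}) \geq 0 = d(z,P)$. So assume $d_H(z,P) \geq 1$ and set $k := d_H(z,P) - 1 \geq 0$. Then $P$ is an $(x,y)$-path in $H$ avoiding $D_H(z,k)$, so $D_H(z,k)$ does not separate $x,y$ in $H$. By Theorem~\ref{thm:diskSeparatesInH}, $D_G(z,k)$ does not separate $x,y$ in $G$; hence there is an $(x,y)$-path $\rl{P}$ in $G$ with every vertex outside $D_G(z,k)$. As $G$ is isometric in $H$, for each $u \in \rl{P}$ we have $d_H(z,u) = d_G(z,u) > k = d_H(z,P) - 1$, so $d_H(z,u) \geq d_H(z,P)$. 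Therefore $d(z,\rl{P}) \geq d(z,P)$, and $\rl{P}$ is real since all its vertices belong to $V(G)$.
\end{proof}
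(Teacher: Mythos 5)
Your proof is correct. The one point that needs (and gets) care is the applicability of the separation language: you must know that $x,y \notin D_H(z,k)$ and $x,y \notin D_G(z,k)$ for ``does not separate'' to yield a connecting path, and this holds because $x,y$ lie on $P$ (so $d_H(z,x), d_H(z,y) \ge k+1$) and because $G$ is isometric in $H$; you also correctly dispatch the degenerate case $d_H(z,P)=0$. Your route is, however, different in character from the paper's. The paper states the corollary with no separate proof because it is read off directly from the construction inside the forward direction of Theorem~\ref{thm:diskSeparatesInH}: given the path $P$, one applies Lemma~\ref{lem:edgeToAvoidingPathInG} to each edge of $P$ and concatenates the resulting real subpaths, which immediately yields a real $(x,y)$-path $\rl{P}$ with $d_H(z,\rl{P}) \ge d_H(z,P)$. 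You instead treat Theorem~\ref{thm:diskSeparatesInH} as a black box and recover the witnessing path by choosing the threshold $k = d_H(z,P)-1$ and unwinding the definition of separation. Both arguments are valid; the paper's is a direct construction (and makes transparent that the corollary is really a byproduct of the theorem's proof rather than its statement), while yours has the merit of being derivable from the published statement alone without reopening the proof.
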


\section{$\delta$-Hyperbolic graphs}\label{sec:hyperbolicGraphs}
A metric space $(X,d)$ is $\delta$-hyperbolic if it satisfies Gromov's 4-point condition: for any four points $u,v,w,x$ from $X$ the two larger of the three distance sums $d(u,v) + d(w,x)$, $d(u,x) + d(v,w)$, and $d(u,w) + d(v,x)$ differ by at most $2\delta \geq 0$.
A connected graph equipped with the standard graph metric $d_G$ is $\delta$-hyperbolic if the metric space $(V,d_G)$ is $\delta$-hyperbolic. The smallest value~$\delta$ for which~$G$ is $\delta$-hyperbolic is called the \emph{hyperbolicity} of~$G$ and is denoted $\delta(G)$. Note that $\delta(G)$ is an integer or a half-integer. 
For a quadruple of vertices $u,v,w,x \in V(G)$, it will be convenient to denote by $hb(u,v,w,x)$ half the difference of the largest two distance sums among $d(u,v) + d(w,x)$, $d(u,x) + d(v,w)$, and $d(u,w) + d(v,x)$.

It is known~\cite{isbell,ursLang} that the hyperbolicity of any metric space is preserved in its injective hull. For completeness, we provide a graph-theoretic proof of this result and show that in fact hyperbolicity is preserved in any host $H$ as long as distances in $G$ are preserved in $H$ and that peripheral vertices of $H$ are real.

\begin{proposition} \label{prop:hyperbolicityEqualGeneric}
If $H$ is a host graph such that $G$ embeds isometrically into $H$ and all peripheral vertices in $H$ are from $G$,
then $\delta(G) = \delta(H)$.
\end{proposition}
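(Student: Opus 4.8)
The plan is to prove the two inequalities $\delta(G) \le \delta(H)$ and $\delta(H) \le \delta(G)$ separately, leaning on the fact that hyperbolicity is witnessed by quadruples of vertices and that, by Proposition~\ref{prop:shortestPathSubsetOfReal} and Corollary~\ref{cor:diskSeparatesInH}, distance structure in $H$ is controlled by the real vertices coming from $G$.

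First, the easy direction: $\delta(G) \le \delta(H)$. Since $G$ embeds isometrically into $H$, for any four vertices $u,v,w,x \in V(G)$ the three distance sums $d_G(u,v)+d_G(w,x)$, etc., coincide with the corresponding sums $d_H(u,v)+d_H(w,x)$, etc. Hence $hb(u,v,w,x)$ computed in $G$ equals the value computed in $H$, and the supremum over quadruples in $G$ is at most the supremum over quadruples in $H$. This gives $\delta(G) \le \delta(H)$ immediately.

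The substantive direction is $\delta(H) \le \delta(G)$, i.e.\ that no quadruple of vertices of $H$ can witness hyperbolicity strictly larger than $\delta(G)$. The key step is a reduction: given an arbitrary quadruple $u,v,w,x \in V(H)$ realizing $hb_H(u,v,w,x)$, I would replace each vertex, one at a time, by a peripheral vertex of $H$ (which by hypothesis lies in $V(G)$) without decreasing the relevant difference of distance sums. Concretely, fix three of the four vertices and view the largest-two-sums gap as a function of the fourth vertex $u$; I want to find a peripheral vertex $\rl{u}$ with $u \in I(\rl{u}, ?)$ that does at least as well. Here I would use Proposition~\ref{prop:shortestPathSubsetOfReal} in the form: any vertex of $H$ lies on a shortest path between two peripheral (hence real) vertices, and more usefully, $u$ lies on a shortest path from some peripheral vertex $\rl{u}$ back toward a suitably chosen target. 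The arithmetic to check is that extending a vertex to the end of a geodesic through it changes the three pairwise distance sums in a coordinated way (each sum involving $u$ increases by the same shift $d_H(u,\rl{u})$ along the branch it follows), so the two largest sums' difference is preserved or grows; this is a standard ``fellow traveller / four-point'' manipulation. Iterating over all four coordinates lands us at a quadruple $\rl{u},\rl{v},\rl{w},\rl{x} \in V(G)$ with $hb_H(\rl{u},\rl{v},\rl{w},\rl{x}) \ge hb_H(u,v,w,x)$, and since $G$ is isometric in $H$ the former equals $hb_G(\rl{u},\rl{v},\rl{w},\rl{x}) \le \delta(G)$. Taking the supremum over all quadruples of $H$ yields $\delta(H) \le \delta(G)$.

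The main obstacle I anticipate is the bookkeeping in the reduction step: one must choose, for the vertex being moved, the right target endpoint of the geodesic so that the peripheral extension actually increases (rather than decreases) the correct distance sum — the three sums pair $u$ with $v$, $w$, and $x$ respectively, and pushing $u$ outward along a geodesic aimed at, say, $v$ increases $d(u,v)$ by the shift but also increases $d(u,w)$ and $d(u,x)$ by at most that shift, so one has to verify the gap between the top two sums does not shrink. The cleanest way around this is to argue it suffices to push $u$ to a peripheral vertex maximizing $d_H(z,\cdot)$ along a BFS layering (exactly the construction in the proof of Proposition~\ref{prop:shortestPathSubsetOfReal}), applied with a cleverly chosen root $z$ among $\{v,w,x\}$ — namely the one whose sum with $u$ is currently largest — and then check the two inequalities on distance sums case by case. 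Alternatively, one can avoid quadruple-chasing entirely by invoking a known equivalent characterization of $\delta$-hyperbolicity via a thin-triangles / insize condition together with Corollary~\ref{cor:diskSeparatesInH}, but the direct four-point argument is the more self-contained route and matches the ``graph-theoretic proof'' the authors promise.
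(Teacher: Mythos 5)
Your proposal is correct and follows essentially the same route as the paper: the isometric embedding gives $\delta(G)\le\delta(H)$, and for the converse you push a non-real vertex of a worst quadruple out to a peripheral (hence real) vertex via Proposition~\ref{prop:shortestPathSubsetOfReal}, rooting the extension at the partner in the largest distance sum so that sum grows by exactly the shift $\gamma$ while the competing sums grow by at most $\gamma$, whence $hb$ does not decrease. The only cosmetic difference is that the paper packages the iteration as a contradiction with a quadruple maximizing the number of real vertices, and carries out the two-case check on which remaining sum is second largest exactly as you sketch.
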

\begin{proof}
As $G$ embeds isometrically into $H$, $\delta(G) \le \delta(H)$.
By contradiction, assume $\delta(H) > \delta(G)$.
Let $x,y,z,t \in V(H)$ with $hb(x,y,z,t) > \delta(G)$ such that $|V(G) \cap \{x,y,z,t\}|$ is maximized.
Without loss of generality, let $d_H(x,t) + d_H(z,y) \geq d_H(x,z) + d_H(t,y) \geq d_H(x,y) + d_H(z,t)$. 
If $\{x,y,z,t\} \subseteq V(G)$, then $hb(x,y,z,t) \leq \delta(G)$, a contradiction.
Thus, without loss of generality, suppose $x \notin V(G)$.
By Proposition~\ref{prop:shortestPathSubsetOfReal}, there is a peripheral vertex $\rl{x} \in V(G)$ such that $I(t,x) \subset I(t,\rl{x})$ for vertex $t \in V(H)$.
Let $d_H(t,\rl{x}) = d_H(t,x) + \gamma$.
Clearly, $d_H(\rl{x},t) + d_H(z,y) \ge \max\{d_H(\rl{x},y) + d_H(z,t), d_H(\rl{x},z) + d_H(t,y)\}$.

Suppose that $d_H(\rl{x},y) + d_H(z,t) \geq d_H(\rl{x},z) + d_H(t,y)$.
By the triangle inequality and definition of hyperbolicity, we have
\begin{align*}
2hb(\rl{x},y,z,t) &= d_H(\rl{x},t) + d_H(z,y) - d_H(\rl{x},y) - d_H(z,t) \\
&\ge d_H(x,t) + d_H(z,y) + \gamma - d_H(x,y) - d_H(z,t) - \gamma \\
&= d_H(x,t) + d_H(z,y) - d_H(x,y) - d_H(z,t) \\
&\ge d_H(x,t) + d_H(z,y) - d_H(x,z) - d_H(t,y) \\
&= 2hb(x,y,z,t).
\end{align*}
Thus, $hb(\rl{x},y,z,t) \geq hb(x,y,z,t)$, a contradiction with the maximality of the number of real vertices in the quadruple.

Suppose now that $d_H(\rl{x},z) + d_H(t,y) \geq d_H(\rl{x},y) + d_H(z,t)$.
By the triangle inequality and definition of hyperbolicity, we have
\begin{align*}
2hb(\rl{x},y,z,t) &= d_H(\rl{x},t) + d_H(z,y) - d_H(\rl{x},z) - d_H(t,y) \\
&\geq d_H(x,t) + d_H(z,y) + \gamma - d_H(x,z) - d_H(t,y) - \gamma \\
&= d_H(x,t) + d_H(z,y) - d_H(x,z) - d_H(t,y) \\
&= 2hb(x,y,z,t).
\end{align*}
Thus, $hb(\rl{x},y,z,t) \geq hb(x,y,z,t)$, again a contradiction with the maximality of the number of real vertices in the quadruple.
\end{proof}

\begin{theorem} \label{cor:hyperbolicityEqual}
For any graph $G$, $\delta(G) = \delta(\calH(G))$. That is, the $\delta$-hyperbolic graphs are closed under Hellification. 
\end{theorem}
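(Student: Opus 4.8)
The plan is to derive this statement as an immediate corollary of Proposition~\ref{prop:hyperbolicityEqualGeneric} combined with Proposition~\ref{prop:outsidePointsAreReal}. Recall that the injective hull $\calH(G)$ is, by construction, a host into which $G$ embeds isometrically: this is exactly the embedding $z \mapsto d_z$ described in Section~\ref{sec:injectiveHullProperties}. Moreover, by Proposition~\ref{prop:outsidePointsAreReal}, every peripheral vertex of $\calH(G)$ lies in $V(G)$, i.e. is real. Hence $H := \calH(G)$ satisfies both hypotheses of Proposition~\ref{prop:hyperbolicityEqualGeneric}, and applying that proposition yields $\delta(G) = \delta(\calH(G))$.

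For the second assertion, I would argue as follows: if $G$ is $\delta$-hyperbolic, i.e. $\delta(G) \le \delta$, then by the first part $\delta(\calH(G)) = \delta(G) \le \delta$, so $\calH(G)$ is again $\delta$-hyperbolic. Thus $G \in \calC$ implies $\calH(G) \in \calC$ when $\calC$ is the class of $\delta$-hyperbolic graphs, which is the definition of being closed under Hellification.

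The only point that needs to be checked explicitly is that $\calH(G)$ meets the ``peripheral vertices are real'' hypothesis of Proposition~\ref{prop:hyperbolicityEqualGeneric}, and this is precisely the content of Proposition~\ref{prop:outsidePointsAreReal}; no further work is required. There is essentially no hard step here — all the substantive content (the exchange argument replacing a non-real vertex in the extremal quadruple by a peripheral, hence real, vertex without decreasing $hb$) was already carried out in the proof of Proposition~\ref{prop:hyperbolicityEqualGeneric}. An alternative, fully self-contained route would be to specialize that exchange argument directly to $H = \calH(G)$, invoking the vector/distance-function description of $\calH(G)$ and Proposition~\ref{prop:shortestPathSubsetOfReal}; but reusing the generic proposition is cleaner and is the approach I would take.
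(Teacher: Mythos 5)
Your proposal is correct and matches the paper's intent exactly: the paper states this theorem as an immediate consequence of Proposition~\ref{prop:hyperbolicityEqualGeneric}, with the two hypotheses (isometric embedding via $z \mapsto d_z$, and peripheral vertices being real via Proposition~\ref{prop:outsidePointsAreReal}) supplied just as you describe. No discrepancy to report.
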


We next show that a $\delta$-hyperbolic graph $G$ is Helly if its disks up to radii $\delta+1$ satisfy the Helly property.
In this sense, a localized Helly property implies a global Helly property, akin to what is known for pseudo-modular graphs wherein  all disks of radii at most 1 satisfy the  Helly property implies all disks (of all radii) satisfy the  Helly property.


\begin{lemma} \label{lem:hyperbolicity-disk-locality}
  If $G$ is $\delta$-hyperbolic and all disks with up to $\delta+1$ radii
  satisfy the Helly property, then $G$ is a Helly graph.
\end{lemma}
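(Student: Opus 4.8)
By Proposition~\ref{prop:pseudo-modular-helly}, it suffices to prove that a $\delta$-hyperbolic graph $G$ whose disks of radius at most $\delta+1$ satisfy the Helly property is both pseudo-modular and neighborhood-Helly. Neighborhood-Helly is immediate, since unit disks have radius $1 \le \delta+1$. So the plan is to establish pseudo-modularity, and then — armed with pseudo-modularity, which forces any three pairwise intersecting disks to meet — to bootstrap from the radius-$(\delta+1)$ Helly hypothesis to the full Helly property for arbitrary families of arbitrary radii.

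\textbf{Step 1: pseudo-modularity.} I would verify condition (iii) of Proposition~\ref{prop:pseudo-modular-def}: given $u,v,w$ with $1 \le d(v,w) \le 2$ and $d(u,v)=d(u,w)=k \ge 2$, produce $x$ adjacent to both $v$ and $w$ with $d(u,x)=k-1$. The natural move is to pick a common neighbor $v'$ of $v$ on a shortest $(u,v)$-path and $w'$ of $w$ on a shortest $(u,w)$-path, so $d(u,v')=d(u,w')=k-1$, and then apply the hyperbolicity 4-point condition to the quadruple $u,v,w$ together with these midpoints (or to $u,v',w',$ and a midpoint of $v'$ and $w'$) to show $d(v',w')$ is small — specifically at most $2$ — whence the three disks $D(v,1)$, $D(w,1)$, $D(u,k-1)$ are pairwise intersecting and, being of radius $\le \delta+1$ when $\delta \ge 1$, have a common point $x$ by hypothesis. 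The boundary case $\delta < 1$ (i.e.\ $\delta = 0$ or $\tfrac12$) and small $k$ must be handled separately, but there $G$ is a tree or close to it and pseudo-modularity is classical. This is the step I expect to be the main obstacle: getting the quantitative bound on $d(v',w')$ exactly right from the 4-point condition, and making sure the radii involved never exceed $\delta+1$.

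\textbf{Step 2: from local to global Helly.} Once $G$ is pseudo-modular, take any family $\calF = \{D(v_i, r_i) : i \in S\}$ of pairwise intersecting disks. I would argue by induction on $\max_i r_i$. If all radii are $\le \delta+1$, the hypothesis gives a common point directly. Otherwise, shrink each disk of radius $> \delta+1$: replace $D(v_i, r_i)$ by $D(c_i, r_i - 1)$ for a suitable point $c_i$ so that the new family is still pairwise intersecting. The standard device here is that for two disks $D(v_i,r_i), D(v_j,r_j)$ with nonempty intersection, pseudo-modularity (via unimodality/convexity-type arguments, or directly by iterating Proposition~\ref{prop:pseudo-modular-def}(iii)) lets one slide the center of the larger disk one step toward the intersection; doing this consistently — e.g.\ move every center one step along a shortest path toward a fixed witness point and invoke the three-disk Helly property of pseudo-modular graphs to keep pairwise intersections nonempty — yields a family with all radii decreased by $1$, to which induction applies. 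Recovering a common point of $\calF$ from a common point of the shrunk family is then routine via the triangle inequality.

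\textbf{Remark on alternatives.} If the center-sliding bookkeeping in Step 2 proves delicate, an alternative is to pass through the injective hull: by Theorem~\ref{cor:hyperbolicityEqual}, $\calH(G)$ is also $\delta$-hyperbolic, and one can try to show that if a vertex of $\calH(G)$ witnesses a Helly failure in $G$ then there is a failure among disks of radius $\le \delta+1$, contradicting the hypothesis — but this seems to require controlling where Helly vertices sit relative to $G$, so I would favor the direct inductive shrinking argument above.
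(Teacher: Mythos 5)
There is a genuine gap in your Step~1, and it sits exactly where the real work of the lemma lies. To apply the hypothesis to the triple $D(u,k-1)$, $D(v,1)$, $D(w,1)$, \emph{all three} radii must be at most $\delta+1$; the disk $D(u,k-1)$ has radius $k-1$, which exceeds $\delta+1$ as soon as $k>\delta+2$. Your parenthetical ``being of radius $\le \delta+1$ when $\delta\ge 1$'' is simply false for large $k$, and you have misidentified the problematic regime: you flag small $k$ and small $\delta$ as the delicate cases, but those are exactly the cases where your argument works (for $k\le\delta+2$ the three disks pairwise intersect trivially --- no four-point computation on $d(v',w')$ is even needed --- and the hypothesis applies directly). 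The missing idea, which is the core of the paper's proof, is an induction on $k$ that keeps every radius bounded by $\delta+1$: for $k>\delta+2$, pick $x\in I(v,u)$ and $y\in I(w,u)$ with $d(u,x)=d(u,y)=\delta+2$, use the four-point condition on the quadruple $u,x,y,w$ to show $d(x,y)\le 2\delta+2$, apply the local Helly hypothesis to the pairwise-intersecting disks $D(x,\delta+1)$, $D(y,\delta+1)$, $D(u,1)$ to obtain a vertex $u^*$ adjacent to $u$ with $d(u^*,v)=d(u^*,w)=k-1$, and then invoke the inductive hypothesis on $u^*,v,w$. Without some such localization device your Step~1 proves pseudo-modularity only for $k\le\delta+2$.

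Your Step~2 is also redundant: as you yourself note at the outset, Proposition~\ref{prop:pseudo-modular-helly} already says that pseudo-modular plus neighborhood-Helly implies Helly, so there is no further ``local to global'' bootstrapping to do once pseudo-modularity is in hand. The center-sliding induction you sketch there is essentially a re-proof of that cited proposition and would itself need nontrivial care; the paper simply stops after establishing pseudo-modularity.
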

\begin{proof}
  Assume all disks with radii at most $\delta+1$ satisfy the Helly property.
  Clearly $G$ is neighborhood-Helly.
  By Proposition~\ref{prop:pseudo-modular-helly}, it remains only to prove that $G$ is pseudo-modular.
  We apply Proposition \ref{prop:pseudo-modular-def}(iii).
  Consider three vertices $u,v,w$ such that $d(u,v)=d(u,w)=k\geq2$, and either $v$ and $w$ are adjacent or have a common neighbor $z$.
  We claim that $d(u,v) = d(u,w) = k$ implies there is a vertex $t$ adjacent to $v$ and $w$ and at distance $k-1$ from $u$.
  We use an induction on $d(u,v)$.
  By assumption, it is true for $k \leq \delta+2$ as the pairwise-intersecting disks $D(u,k-1)$, $D(v,1)$, $D(w,1)$ have a common vertex $t$ by the Helly property.

  Consider the case when $d(u,v) = d(u,w) = k > \delta+2$.
  Let $x \in I(v,u)$ and $y \in I(w,u)$ be vertices such that $d(x,u)=d(y,u) = \delta+2$.
  We claim the disks $D(x, \delta+1)$, $D(y, \delta+1)$, and $D(u,1)$ pairwise intersect; then, vertex $u^*$ exists by the Helly property and applying the inductive hypothesis to vertex $u^*$ equidistant to $v,w$ yields the desired vertex $t$.
  Clearly, $D(u,1)$ intersects both $D(x,\delta+1)$ and $D(y,\delta+1)$.
  It remains to show that $d(x,y) \leq 2\delta + 2$.
  \commentout{
  {\em Case 1. Suppose $vw \notin E(G)$.}
  As $d(v,w) \le 2$, there is a neighbor $z$ adjacent to $v,w$.
  By the triangle inequality and $d(u,v)$ distance requirements, $k-1 \le d(u,z) \le k+1$.
  If $d(u,z)=k-1$, we are done as vertex $t := z$ satisfies the desired properties.
  Assume that $d(u,z) \ge k$.
  If $d(x,z)=d(y,z)$, then $x,y \in S_{\delta+1}(u,z)$; by Lemma~\ref{half-thin}, $d(x,y) \leq 2\delta$ and we are done.
  Without loss of generality, let $d(x,z)=k$ and $d(y,z)=k+1$.
  We apply the 4-point condition to vertices $u,x,w,y$.
  Observe that $d(u,x)+d(w,y)=k$, $d(u,w)+d(x,y)=k+d(x,y)$, and $d(u,y)+d(w,x)=k+1$.
  Clearly, the first sum is the smallest.
  If the last sum is largest, then $d(x,y) \le 1$ and we are done.
  If the middle sum is largest, then by the 4-point condition $2\delta \ge d(u,w)+d(x,y)-d(u,y)-d(w,x)=d(x,y)+1$;
  hence, $d(x,y) \le 2\delta+1$.
  
  {\em Case 2. Suppose $vw \in E(G)$.}
  As in the previous case, we apply the 4-point condition to vertices $u,x,w,y$. 
  Observe that $d(u,x)+d(w,y)=k$, $d(u,w)+d(x,y)=k+d(x,y)$, and $d(u,y)+d(w,x)=k+1$.
  Hence, $d(x,y) \le 2\delta+1$.
  }

 Consider vertices $u,x,y,w$ and three distance sums: 
 $A:=d(u,w)+d(x,y)$, $B:=d(u,y)+d(x,w)$ and $C:=d(u,x)+d(y,w)$.  We have $A=k+d(x,y)$ and $C=k$.
 Moreover, $k\le B\le k+2$ as $k=d(u,w)\le   d(u,x)+d(x,v)+d(v,w)\le d(u,v)+2=k+2$.  Hence, $C$ is a smallest sum. If $B\ge A$ then $k+2\ge B\ge A=k+d(x,y)$ implies $d(x,y)\le 2\le 2\delta +2$.  
 If $A\ge B$ then, by 4-point condition, $2\delta\ge A-B\ge k+d(x,y)-k-2$, i.e., $d(x,y)\le 2\delta+2$. 
\end{proof}

\section{Permutation graphs and relatives}\label{sec:ATFree}
Permutation graphs can be defined as follows. Consider two parallel lines (upper and lower) in the plane. Assume that each line contains $n$ points, labeled 1 to $n$, and each two points with the same label define a segment with that label. The intersection graph of such a set of segments between two parallel lines is called a \emph{permutation graph}~\cite{doi:10.1137/1.9780898719796}.
An \emph{asteroidal triple} is an independent set of three vertices such that each pair is joined by a path that avoids the closed neighborhood of the third.
A far reaching superclass of permutation graphs are the \hnew{\emph{AT-free} graphs}, i.e., the graphs that do not contain any asteroidal triples~\cite{corneilATFree}.

We show that permutation graphs are not closed under Hellification.
Moreover, if the Helly-gap of some AT-free graph is 2, then AT-free graphs are also not closed under Hellification.

\begin{figure}[h]
\begin{minipage}{\textwidth}
\begin{center}
\includegraphics[scale=0.7]{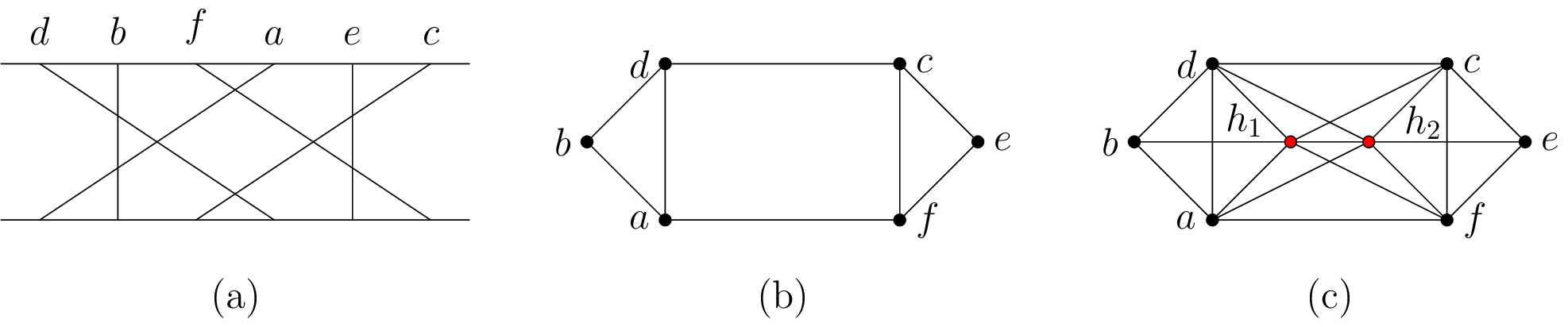}
\captionof{figure}{A permutation model \hnew{(a)} corresponding to permutation graph $G$ \hnew{(b)} and its injective hull $\calH(G)$ \hnew{(c)}, where $\calH(G)$ is not a permutation graph.}\label{fig:permutationNotClosed}
\end{center}
\end{minipage}
\end{figure}

\begin{lemma}
Permutation graphs are not closed under Hellification.
\end{lemma}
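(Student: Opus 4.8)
The plan is to exhibit one explicit permutation graph $G$ together with its injective hull $\calH(G)$ and to check that $\calH(G)$ fails to be a permutation graph. Figure~\ref{fig:permutationNotClosed} already names the candidate, so the proof is essentially a verification: (1) describe $G$ via a permutation diagram (two parallel lines with labeled segments) and read off its vertex/edge set; (2) compute the tight span $\calH(G)$ using the vector description from \eqref{eq:atLeastDistance}--\eqref{eq:extremalFunctions}, or more cheaply, guess the small set of Helly vertices that must be added to make all maximal $2$-sets suspended and verify via Proposition~\ref{prop:pseudo-modular-helly} that the resulting graph is Helly and isometrically contains $G$; (3) show $\calH(G)$ is not a permutation graph.

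For step (2) the efficient route is to use the structure results from Section~\ref{sec:injectiveHullProperties}: peripheral (in particular, all farthest) vertices of $\calH(G)$ are real (Proposition~\ref{prop:outsidePointsAreReal}), distances among vertices of $G$ are unchanged, and a graph is Helly iff it is pseudo-modular and neighborhood-Helly (Proposition~\ref{prop:pseudo-modular-helly}), i.e. iff every maximal $2$-set is suspended. So I would identify the maximal $2$-sets of $G$ that are not suspended — for the standard bad example these come from an induced configuration forcing a ``median triangle'' obstruction — add the minimum number of Helly vertices to suspend them, then double-check that no new unsuspended $2$-sets are created and that pseudo-modularity holds (Proposition~\ref{prop:pseudo-modular-def}(iii)). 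Minimality of $\calH(G)$ then pins the construction down, or one can simply invoke uniqueness of the injective hull and verify the candidate is Helly and isometric.

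For step (3), the cleanest obstruction is a forbidden induced subgraph or a structural property: permutation graphs are exactly the comparability graphs whose complements are also comparability graphs, and they are in particular AT-free and perfect. So it suffices to point out that $\calH(G)$ contains an induced subgraph that is not a permutation graph — for instance an induced $C_k$ with $k \ge 5$ (permutation graphs have no induced $C_k$, $k\ge5$), an induced asteroidal triple, or some other small non-permutation configuration visible in Figure~\ref{fig:permutationNotClosed}(c). One names the offending induced subgraph on a handful of vertices of $\calH(G)$ and checks adjacencies directly.

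The main obstacle is step (2): correctly and convincingly computing $\calH(G)$. Tight-span computations are error-prone, and one must be careful that the claimed Helly vertices genuinely satisfy \eqref{eq:extremalFunctions} (each extremal coordinate is met) and that no further Helly vertices or edges are forced. Using the neighborhood-Helly reformulation (suspend every maximal $2$-set) plus the pseudo-modularity check keeps this manageable and makes the verification finite and local rather than requiring us to enumerate the whole polytope; the remaining care is just bookkeeping of the $2$-sets of the particular graph in Figure~\ref{fig:permutationNotClosed}.
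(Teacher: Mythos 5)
Your overall plan coincides with the paper's: both exhibit the permutation graph $G$ of Figure~\ref{fig:permutationNotClosed}, verify that its injective hull is obtained by adding a small number of Helly vertices (two, $h_1$ and $h_2$), and then argue that the resulting graph is not a permutation graph. Your step (2) --- suspending the unsuspended maximal $2$-sets and certifying Hellyness via Proposition~\ref{prop:pseudo-modular-helly} --- is a sound and if anything more explicit route to $\calH(G)$ than the paper's bare assertion, so no complaint there.

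The gap is in step (3). The two concrete certificates you propose for non-membership in the class of permutation graphs are an induced $C_k$, $k \ge 5$, or an asteroidal triple, and neither is available here: the paper explicitly remarks that $\calH(G)$ \emph{is} AT-free, so there is no asteroidal triple to point at, and no induced $C_5$ is exhibited (nor should one expect one in a $7$--$8$-vertex Helly graph in which $h_1$ dominates five of the six original vertices). Your fallback, ``some other small non-permutation configuration visible in the figure,'' is not an argument --- the forbidden-induced-subgraph characterization of permutation graphs (via Gallai's lists for comparability and co-comparability graphs) is infinite and unwieldy, so you cannot just promise that a small certificate exists. The paper instead argues geometrically: the permutation model of $G$ is essentially unique, and in that model any segment crossing all of $b,d,a,c,f$ must also cross $e$, whereas $h_1$ is adjacent to $b,d,a,c,f$ but not to $e$; hence no permutation model of $\calH(G)$ can exist. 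To close your proof you would need either to carry out this uniqueness-of-the-model argument or to actually name and verify a forbidden induced subgraph in $\calH(G)$, neither of which your proposal does.
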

\begin{proof}
The graph $G$ illustrated in Figure~\ref{fig:permutationNotClosed} is an example of a permutation graph $G$ for which $\calH(G)$ is not a permutation graph (although, $\calH(G)$ is AT-free).
Note that only two Helly vertices $h_1$ and $h_2$ are added to produce $\calH(G)$, where $h_1$ is adjacent to real vertices $b,d,a,c,f$ and $h_1e \notin E(\calH(G))$. The resulting graph $\calH(G)$ is not a permutation graph since such a vertex/segment $h_1$ cannot be added to the essentially unique permutation model of $G$ depicted in Figure~\ref{fig:permutationNotClosed}; any segment $h_1$ intersecting the segments $b,d,a,c,f$ needs to intersect also the segment $e$.
\end{proof}

For an AT-free graph $G$, the Helly-gap $\alpha(G)$ is impacted by whether $\calH(G)$ is AT-free.
Recall that the Helly gap $\alpha(G)$ is the minimum integer $\alpha$ such that the distance from any Helly vertex $h \in V(H)$ to a closest real vertex $x \in V(G)$ is at most $\whp$.
It is known~\cite{ourManuscriptWeaklyHelly} that any AT-free graph $G$ has $\whp(G) \le 2$.

\begin{lemma} \label{lem:ATFreeAndHDef}
For any graph $G$, $\alpha(G) \le 1$ if $\calH(G)$ is AT-free.
\end{lemma}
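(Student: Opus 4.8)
We want to show: if $\calH(G)$ is AT-free, then $\alpha(G) \le 1$. Equivalently, arguing by contraposition, if $\alpha(G) \ge 2$ then $\calH(G)$ contains an asteroidal triple. So suppose there is a Helly vertex $h \in V(H)$ whose distance to every real vertex is at least $2$; among all such Helly vertices, pick one. The plan is to exhibit three vertices of $H$ that form an asteroidal triple, using $h$ (or vertices near $h$) as the "obstruction" in the middle.

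First I would understand the local structure around $h$. Since $h \in V(\calH(G))$, property \eqref{eq:extremalFunctions} tells us that for every $x \in V(G)$ there is $y \in V(G)$ with $h(x)+h(y)=d(x,y)$; in particular, taking $x$ to be a real vertex nearest to $h$ (so $h(x) = d_H(h,x) = \alpha(G) \ge 2$), we get a "diametrically opposite" real vertex across $h$. More usefully: since all peripheral vertices of $H$ are real (Proposition \ref{prop:outsidePointsAreReal}) and every farthest vertex from $h$ is real, every shortest path out of $h$ extends to a shortest path between real vertices (Proposition \ref{prop:shortestPathSubsetOfReal}). I would use this to show that the neighbors of $h$ split into at least two groups such that paths between the groups "want" to pass through $N[h]$. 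Concretely, if all neighbors of $h$ lay in a single clique suspended by... no — the point is that $h$ being at distance $\ge 2$ from $G$ means $h$ is "genuinely new," so $N(h)$ cannot be too simple; I would try to produce three neighbors $a,b,c$ of $h$ (or three real vertices close to $h$) pairwise non-adjacent, together with the fact that any $a$–$b$ path avoiding $N[c]$ exists, etc. The cleanest route is probably: take $\rl{x}$ a nearest real vertex to $h$ and $\rl{y}$ its extremal partner; consider BFS from $h$; the layer $\layer_1 = N(h)$ must contain vertices on shortest $\rl{x}$–$h$ and $h$–$\rl{y}$ paths, and since $h$ is not real, $\layer_1$ is not dominated by any single vertex of $H$ (otherwise that dominator would be a closer vertex contradicting minimality / or would let us replace $h$). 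From non-domination of a $2$-set one extracts, via Ramsey-type / Helly-failure reasoning, three pairwise-distant witnesses.

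The main obstacle will be turning "$h$ is far from $G$" into a genuine asteroidal triple rather than merely three pairwise non-adjacent vertices: I need, for each pair, a connecting path that avoids the closed neighborhood of the third. Here I expect to lean on Corollary \ref{cor:diskSeparatesInH} / Theorem \ref{thm:diskSeparatesInH}: a path avoiding $N[c] = D_H(c,1)$ in $H$ exists iff a real such path exists in $G$, and I can control which disks separate which pairs. The delicate case analysis is when the three candidate vertices are at distance $2$ from $h$ rather than adjacent to it, or when $N[h]$ itself behaves like the closed neighborhood that must be avoided — I would handle this by choosing the triple among real vertices at distance exactly $\alpha(G)$ from $h$ in three different BFS branches, and showing each branch-to-branch shortest path in $H$ (hence a real path in $G$, by Corollary \ref{cor:diskSeparatesInH}) stays outside the closed neighborhood of the third branch's endpoint, because otherwise the three disks $D_H(a,1), D_H(b,1), D_H(c,1)$ together with $D_H(h, \alpha(G)-1)$ would pairwise intersect and, by the Helly property in $H$, force a common vertex closer to $G$ than $h$ — contradicting the minimal choice of $h$. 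Assembling these pieces gives an asteroidal triple in $\calH(G)$, contradicting AT-freeness, and hence $\alpha(G) \le 1$.
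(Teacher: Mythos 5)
There is a genuine gap here, and it comes from your framing decision in the first paragraph: you plan to use $h$ only as an ``obstruction in the middle'' and to build the asteroidal triple out of three real vertices found in three different BFS branches around $h$. Nothing in the hypotheses guarantees that three such branches exist. All you actually know is that $h$ lies on a shortest path between a nearest real vertex $x$ (at distance $\alpha(G)\ge 2$) and an extremal partner $y$ with $h\in I(x,y)$ (Proposition~\ref{prop:shortestPathSubsetOfReal}); that gives you two directions out of $h$, not three, and your ``Ramsey-type / Helly-failure reasoning'' for extracting a third pairwise-distant witness is never made concrete. The closing argument is also not valid as stated: the failure of an $a$--$b$ path to avoid $N[c]$ is not equivalent to the disks $D_H(a,1)$, $D_H(b,1)$, $D_H(c,1)$ pairwise intersecting, so you cannot invoke the Helly property of $H$ to manufacture a vertex closer to $G$ than $h$ and contradict minimality.

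The fix is small and you already hold all the needed pieces: make $h$ itself the third vertex of the triple. The paper takes exactly $\{x,y,h\}$ with $x$ a nearest real vertex to $h$ and $y$ real with $h\in I(x,y)$. Then $d_H(h,x),d_H(h,y)\ge 2$ and $d_H(x,y)\ge 4$, so the three are pairwise non-adjacent; since $d_H(x,y)=d_H(x,h)+d_H(h,y)$, every vertex on a shortest $(h,y)$-path is at distance at least $2$ from $x$ (so that path avoids $N[x]$), and symmetrically for the shortest $(h,x)$-path and $N[y]$; and a real shortest $(x,y)$-path $P^*$ in $G$ (which exists because $G$ is isometric in $H$) avoids $N[h]$ because $h$ is at distance at least $\alpha(G)\ge 2$ from every real vertex. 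No appeal to Theorem~\ref{thm:diskSeparatesInH} or to the Helly property is needed at all. Your instinct that the real $(x,y)$-path is the one avoiding $N[h]$ was correct; the missing idea was simply that an asteroidal triple in $\calH(G)$ is allowed to contain the Helly vertex $h$.
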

\begin{proof}
By contradiction, suppose $\alpha(G) \ge 2$ for some graph $G$ and $H := \calH(G)$ is AT-free.
Then, there is a vertex $h \in V(H)$ such that $d_{H}(h,v) \ge \alpha(G)$ for all $v \in V(G)$.
Let $x \in V(G)$ be closest to $h$; then, $d_{H}(h,x)=\alpha(G) \ge 2$.
By Proposition~\ref{prop:shortestPathSubsetOfReal}, there is a real vertex $y \in V(G)$ such that $h \in I(x,y)$. Moreover, $d_H(h,y) \geq d_H(h,x) \ge 2$.
Let $P$ be a shortest $(x,y)$-path of $H$ with $h \in P$.
As $G$ is isometric in $H$,
there is a (real) shortest $(x,y)$-path $P^*$ in $G$.
By $d_H(x,y)$ distance requirements, \hnew{all shortest $(h,y)$-paths avoid $N[x]$, and all shortest $(h,x)$-paths avoid $N[y]$.
As $P^* \subseteq V(G)$ and $\alpha(G) \ge 2$, then $P^*$ also avoids $N[h]$.}
Therefore, $\{x,y,h\}$ forms an asteroidal triple in $H$, a contradiction.
\end{proof}

\begin{corollary}
If there is an AT-free graph $G$ with $\alpha(G)=2$, then AT-free graphs are not closed under Hellification.
\end{corollary}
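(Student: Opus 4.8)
The plan is to obtain this as an immediate contrapositive of Lemma~\ref{lem:ATFreeAndHDef}. That lemma asserts that $\alpha(G) \le 1$ whenever $\calH(G)$ is AT-free; equivalently, if $\alpha(G) \ge 2$ then $\calH(G)$ is \emph{not} AT-free. So the first (and essentially only) step is: given an AT-free graph $G$ with $\alpha(G) = 2$, observe that $\alpha(G) \ge 2$ triggers this contrapositive, hence $\calH(G)$ is not AT-free. Since $G$ itself is AT-free while $\calH(G)$ is not, $G$ is a witness that the class of AT-free graphs is not closed under Hellification, which is exactly the claim.

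There is no genuine obstacle here — the whole content is packed into Lemma~\ref{lem:ATFreeAndHDef} (whose proof already exhibits the asteroidal triple $\{x,y,h\}$ in $\calH(G)$ obstructing AT-freeness), and the corollary merely rephrases it in terms of closure. The one remark worth making explicit is that the hypothesis $\alpha(G) = 2$ is the extremal possibility for an AT-free graph: the excerpt recalls that every AT-free graph satisfies $\alpha(G) \le 2$, so the corollary says precisely that an AT-free graph sitting at this upper bound already forces its injective hull out of the class. It leaves open whether such a graph exists, consistent with the ``?'' entry for AT-free graphs in Table~\ref{tab}.
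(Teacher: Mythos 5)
Your proof is correct and matches the paper's (implicit) argument exactly: the corollary is the immediate contrapositive of Lemma~\ref{lem:ATFreeAndHDef}, with $G$ itself serving as the witness that the class is not closed. Nothing further is needed.
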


Currently, we do not know whether there is an AT-free graph $G$ with $\alpha(G)=2$. 

\section{Chordal Graphs and Square-Chordal Graphs}\label{sec:chordal}
A graph is \emph{chordal} if it contains no induced cycle $C_k$ of length $k \ge 4$. A graph $G$ is \emph{square-chordal} if $G^2$ is chordal.
%
In this section, we will show that for a chordal (square-chordal) graph $G$, 
its injective hull $\calH(G)$ is also chordal (square-chordal).
That is, chordal graphs and square-chordal graphs are closed under Hellification.

The following fact is a folklore.

\begin{proposition}\label{prop:cycle}
Let $G$ be a chordal graph, and let $C$ be a cycle of $G$.
For any vertex $x \in C$, if $x$ is not adjacent to any third vertex of $C$, then the neighbors in $C$ of $x$ are adjacent.
\end{proposition}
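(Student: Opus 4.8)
The plan is to prove the contrapositive-free statement directly by exhibiting a chord. Let $C = x, a, \dots, b, x$ be the cycle, with $a$ and $b$ the two neighbors of $x$ in $C$, and suppose toward a contradiction that $ab \notin E(G)$. Since $x$ is, by hypothesis, not adjacent to any vertex of $C$ other than $a$ and $b$, every chord of $C$ incident to $x$ is absent, so $x$ has degree exactly $2$ in the subgraph $\langle C \rangle$. First I would argue that we may assume $\langle C \rangle$ contains no chord incident to $x$ at all (true by hypothesis) and then look at the cycle obtained by deleting $x$ and adding the edge $ab$ — except $ab$ is not an edge, so instead I consider the cycle $C$ itself as an induced-or-not subgraph and invoke chordality on it.

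The cleanest route: if $|C| = 3$ then $a$ and $b$ are the two other vertices of a triangle, hence adjacent, contradiction; so $|C| \ge 4$. Now $C$ is not an induced cycle of length $\ge 4$ (chordal graphs have none), so $C$ has a chord $uv$. By hypothesis no chord is incident to $x$, so $u, v \in C \setminus \{x\}$. Pick such a chord $uv$; it splits $C$ into two cycles $C_1$ and $C_2$, one of which — say $C_1$ — contains $x$ together with both $a$ and $b$ (since the only edges of $C$ at $x$ go to $a$ and $b$, $x$ cannot be separated from either of its neighbors by a chord among the other vertices). Moreover $C_1$ is strictly shorter than $C$ because $C_2$ contains at least the edge $uv$ plus at least one further vertex. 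In $C_1$, the vertex $x$ still has exactly the two neighbors $a, b$, and $x$ is still not adjacent to any third vertex of $C_1$ (that set is contained in $C$). So I would run an induction on $|C|$: the base case $|C| = 3$ is handled above, and the inductive step passes from $C$ to the shorter cycle $C_1$, yielding $ab \in E(G)$, contradicting our assumption — hence in fact $ab \in E(G)$, as claimed.

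The main obstacle is the bookkeeping in the splitting step: verifying that the chord $uv$ genuinely yields a shorter cycle through $x$ in which $x$'s neighborhood within the cycle is still exactly $\{a,b\}$. Concretely, one must check (i) that at least one of the two arcs of $C$ between $u$ and $v$ together with the edge $uv$ forms a cycle containing all of $x, a, b$ — which holds because $u, v \neq x$ and $x$'s only cycle-neighbors are $a, b$, so $x$ lies on exactly one of the two arcs and both $a$ and $b$ lie on that same arc or are the arc's endpoints; and (ii) that this cycle $C_1$ is strictly shorter, which holds because the other piece $C_2$ contributes at least one vertex of $C$ not on $C_1$. Both are routine once the arcs are named carefully, so I would only sketch them. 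With induction set up on $|C|$, the argument closes: the neighbors of $x$ in $C$ must be adjacent.
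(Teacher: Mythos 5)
Your proof is correct. The paper states this proposition as folklore and gives no proof of its own, so there is nothing to compare against; your induction on $|C|$ is the standard argument. The only step worth making explicit is the equivalence you use implicitly at the start of the inductive step: the paper defines chordal via the absence of \emph{induced} cycles $C_k$, $k \ge 4$, which is equivalent to the statement that every (not necessarily induced) cycle of length at least $4$ has a chord --- a chordless cycle of length $\ge 4$ would itself be induced. Your bookkeeping for the splitting step is also sound: since $x$ is incident to no chord, it is an internal vertex of exactly one of the two arcs determined by the chord $uv$, that arc passes through $a$--$x$--$b$, and the resulting cycle $C_1$ is strictly shorter because the opposite arc contributes at least one vertex outside $C_1$.
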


We will need a few auxiliary lemmas. The following characterizations of chordal graphs within the class of the $\alpha_1$-metric graphs will be useful.
A graph is said to be an \hnew{\emph{$\alpha_1$-metric}} graph if it satisfies the following: for any $x,y,z,v \in V(G)$ such that $zy\in E(G)$, $z \in I(x,y)$ and $y \in I(z,v)$, $d_G(x,v) \ge d_G(x,y) + d_G(y,v) - 1$ holds~\cite{chepoi:center-triang,yushmanovMetricGraphProperties}.

\begin{lemma} \cite{yushmanovMetricGraphProperties} \label{lem:whenAlphaMetricIsChordal}
$G$ is a chordal graph if and only if it is an $\alpha_1$-metric graph not containing any induced subgraphs isomorphic to cycle $C_5$ and wheel $W_k$, $k \geq 5$.
\end{lemma}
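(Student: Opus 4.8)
The plan is to prove the two implications separately; essentially all the work is in the forward one.

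For ($\Rightarrow$), assume $G$ is chordal. That $G$ has no induced $C_5$ and no induced $W_k$ ($k\ge5$) is immediate, since $C_5$ is a chordless cycle of length $5\ge4$ and every $W_k$ contains a chordless $C_k$ with $k\ge4$; the substance is that $G$ is $\alpha_1$-metric. I would argue by contradiction: suppose $zy\in E(G)$, $z\in I(x,y)$, $y\in I(z,v)$ but $d(x,v)\le d(x,z)+d(y,v)-1$, and pick such a quadruple, together with shortest paths $R$ (from $x$ to $z$), $S$ (from $y$ to $v$), $P$ (from $x$ to $v$), minimizing $d(x,z)+d(y,v)+d(x,v)$. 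Writing $a=d(x,z)$ and $b=d(y,v)$, one has $d(x,y)=a+1$, $d(z,v)=b+1$, and $a,b,d(x,v)\ge1$, so $C^{\circ}:=R\cdot(zy)\cdot S\cdot\overline P$ is a closed walk of length $a+b+d(x,v)+1\ge4$. The heart of the argument is to show that, for a minimal choice, $C^{\circ}$ is a chordless cycle: a chord inside $R$, $S$ or $P$ would shortcut a geodesic; $zy$ is already an edge of $C^{\circ}$; a chord between $R$ and $S$, or between a geodesic and $P$, can be shown to yield a strictly smaller counterexample by relocating $x,y,z$ or $v$ along it; and a ``flat'' chord at $z$ or $y$ is excluded since it would either shortcut $R$/$S$ or force $d(x,v)\ge a+b$. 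A chordless cycle of length $\ge4$ contradicts chordality. (Equivalently one can induct on $a+b$, peeling $x$ one step toward $z$ and using that chordal graphs are pseudo-modular, Proposition~\ref{prop:pseudo-modular-def}(iii), and $C_4$-free; Proposition~\ref{prop:cycle} may also be invoked to force the chords of $C^{\circ}$.)

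For ($\Leftarrow$), assume $G$ is $\alpha_1$-metric with no induced $C_5$ and no induced $W_k$ ($k\ge5$), and suppose toward a contradiction that $G$ is not chordal. First, the $\alpha_1$-metric property alone already forbids an induced $C_4=c_0c_1c_2c_3$: such a $C_4$ is isometric (its diagonals are non-edges, hence at distance $2$), and $x=c_3$, $z=c_0$, $y=c_1$, $v=c_2$ satisfies all hypotheses of the $\alpha_1$-metric condition but has $d(x,v)=1<2$. Hence a shortest induced cycle $C=c_0c_1\cdots c_{k-1}$ of length $\ge4$ has $k\ge6$ ($k=5$ being forbidden by assumption). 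The only delicate point is that $C$ is isometric in $G$: if not, take $c_a,c_b\in C$ whose distance in $G$ is strictly less than their distance along $C$, with $d(c_a,c_b)$ as small as possible; then a shortest $(c_a,c_b)$-path $Q$ has all internal vertices off $C$, and $Q$ together with the shorter $C$-arc between $c_a$ and $c_b$ forms a cycle of length at least $5$ and at most $k-1$. If it is chordless it is an induced cycle shorter than $C$, a contradiction; otherwise any chord joins an interior vertex of $Q$ to an interior vertex of the arc, and chasing such chords produces either a still shorter induced cycle (again contradicting minimality) or a vertex universal to an induced cycle of length $\ge5$, i.e.\ an induced $W_m$ with $m\ge5$ --- forbidden. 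So $C$ is isometric.

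Finally, with $C$ an isometric cycle of length $k\ge6$, I would exhibit a violation of the $\alpha_1$-metric property directly on $C$ by taking $y=c_1$, $z=c_2$, $x=c_{\lfloor k/2\rfloor+1}$, $v=c_{k-1}$. Since $C$ is isometric, $d(x,y)=\lfloor k/2\rfloor$, $d(x,z)=\lfloor k/2\rfloor-1$, $d(z,v)=3$ and $d(y,v)=2$, so $zy\in E(G)$, $z\in I(x,y)$ and $y\in I(z,v)$; but $d(x,v)=\lceil k/2\rceil-2<\lfloor k/2\rfloor+1=d(x,y)+d(y,v)-1$, contradicting that $G$ is $\alpha_1$-metric, and completing the backward direction. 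The main obstacle I anticipate is the forward direction: turning ``$d(x,v)$ too small'' into a contradiction is an intricate cycle-and-chord (or inductive) argument, and getting the case analysis of the possible chords of $C^{\circ}$ right --- especially ruling out the flat chords at $z$ and $y$ and those between $R$ and $S$ --- is the crux; in the backward direction the one non-routine step is the isometry of a minimum induced cycle of length $\ge6$, which is exactly where the exclusion of the large wheels $W_k$, $k\ge5$, is used.
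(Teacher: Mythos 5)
The paper never proves this lemma --- it is imported as a black box from the cited reference, and the surrounding machinery (Lemma~\ref{lem:whenBridgedIsAlphaMetric}, and the bridged-graph argument in Lemma~\ref{lem:WhenHellyIsChordal}) suggests the intended derivation runs through bridged graphs rather than a direct cycle argument. So the comparison is really between your sketch and the literature. Your overall architecture is reasonable, and the parts you actually compute check out: an induced $C_4$ is isometric and the quadruple $x=c_3$, $z=c_0$, $y=c_1$, $v=c_2$ violates the $\alpha_1$-condition, and on an isometric $C_k$, $k\ge 6$, your quadruple gives $d(x,v)=\lceil k/2\rceil-2<\lfloor k/2\rfloor+1=d(x,y)+d(y,v)-1$ as claimed.

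However, the two steps carrying essentially all of the content are asserted rather than proved, and at least one of them does not close for the reasons you give. In the forward direction, the claim that chordal graphs are $\alpha_1$-metric is a substantial theorem in its own right; in your minimal-counterexample setup, a chord from $z$ to a vertex $p$ of $P$ with $d(x,p)=j$ only forces $j\ge a-1$ and $j\le d(x,v)-b$, hence $d(x,v)\ge a+b-1$ --- which is exactly consistent with the assumed negation $d(x,v)\le a+b-1$, so this chord is neither a shortcut nor does it ``force $d(x,v)\ge a+b$''; the promised relocation to a smaller counterexample has to be constructed explicitly here (and likewise for chords between $R\cup S$ and $P$, and for coincident vertices that make $C^\circ$ a closed walk rather than a cycle). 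In the backward direction, the isometry of a minimum induced cycle is precisely where the wheel exclusion earns its keep; your ``chord chasing'' is verifiable for $k=6$ (a common neighbor of an antipodal pair must be universal to the $C_6$ on pain of an induced $C_4$ or $C_5$, yielding $W_6$), but the general case, with $|Q|\ge 2$ and several chords between $Q$ and the arc, needs an actual induction rather than a one-clause appeal. As written, this is a correct plan with the crux left open at both ends.
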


A graph is \emph{bridged}\cite{FARBER1987249} if it contains no isometric cycle $C_k$ of length $k \ge 4$.
Bridged graphs are a natural generalization of chordal graphs. Directly combining two results from~\cite{yushmanovMetricGraphProperties,Dragan2017EccentricityAT}, we obtain the following lemma. 

\begin{lemma} \cite{yushmanovMetricGraphProperties,Dragan2017EccentricityAT} \label{lem:whenBridgedIsAlphaMetric}
$G$ is an $\alpha_1$-metric graph not containing an induced $C_5$ if and only if
$G$ is a bridged graph not containing $W_6^{++}$ as an isometric subgraph (see Figure \ref{fig:w6}).
\end{lemma}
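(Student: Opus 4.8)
The statement equates two graph classes, and the plan is to write each of them as ``bridged'' intersected with a forbidden-isometric-subgraph condition and then match the two conditions, taking the substantive reverse inclusion from the two cited works. Two preliminary remarks drive everything. (a) An induced $C_k$ with $4\le k\le 5$ is automatically isometric, since a chordless cycle admits no metric shortcut; hence ``bridged'' already implies ``no induced $C_5$'', the $C_5$-freeness on the right-hand side is therefore redundant, and for $C_5$ the words ``induced'' and ``isometric'' mean the same thing. (b) The $\alpha_1$-metric condition is stated purely in terms of the distance function, so it is inherited by every isometric subgraph, and ``no induced $C_5$'' is inherited by isometric subgraphs as well. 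Consequently, a graph satisfying the left-hand side cannot contain, as an isometric subgraph, anything that itself fails the left-hand side.

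For the forward direction, assume $G$ is $\alpha_1$-metric with no induced $C_5$. A short computation with the defining inequality $d(x,v)\ge d(x,y)+d(y,v)-1$, applied to a suitable quadruple taken along a cycle, shows that $C_4$ is not $\alpha_1$-metric and that no $C_k$ with $k\ge 6$ is $\alpha_1$-metric --- $C_5$ being the unique short cycle that survives, which is exactly why it must be forbidden separately (cf.\ Lemma~\ref{lem:whenAlphaMetricIsChordal}); that $\alpha_1$-metric graphs admit no long isometric cycle is in any case part of Yushmanov's analysis in~\cite{yushmanovMetricGraphProperties}. By remark (b), $G$ then has no isometric $C_k$ for $k=4$ or $k\ge 6$; adding the hypothesis ``no isometric $C_5$'' gives that $G$ has no isometric $C_k$ for any $k\ge 4$, i.e.\ $G$ is bridged~\cite{FARBER1987249}. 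The same kind of finite check shows that $W_6^{++}$ fails the left-hand side, so by remark (b) again $G$ contains no isometric $W_6^{++}$.

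For the reverse direction, assume $G$ is bridged and has no isometric $W_6^{++}$; bridgedness supplies ``no induced $C_5$'' for free, so it remains to deduce $\alpha_1$-metricity, and this is where I would invoke the two references. Yushmanov's structural characterization of $\alpha_1$-metric graphs~\cite{yushmanovMetricGraphProperties}, in its $C_5$-free specialization, expresses ``$\alpha_1$-metric and $C_5$-free'' as ``bridged together with the exclusion of a short list $\mathcal{L}$ of isometric obstructions'', with $W_6^{++}\in\mathcal{L}$; and the result of~\cite{Dragan2017EccentricityAT} collapses this list inside the class of bridged graphs, so that a bridged graph with no isometric $W_6^{++}$ already avoids every member of $\mathcal{L}$. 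Chaining the two equivalences yields ``bridged and no isometric $W_6^{++}$ $\Rightarrow$ $\alpha_1$-metric'', which completes the proof.

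The main obstacle is not any single computation but the bookkeeping in the reverse direction: one must confirm that the forbidden configurations recorded in~\cite{yushmanovMetricGraphProperties} and in~\cite{Dragan2017EccentricityAT} are stated under the same ambient hypotheses (bridged, $C_5$-free) and for the same notion of subgraph, and that their combined effect on that intersection is \emph{exactly} the single graph $W_6^{++}$. If one list is phrased with ``induced'' where the other uses ``isometric'', an extra reduction --- an induced short cycle is isometric in a bridged graph, and an isometric copy of a wheel-type graph drags along its induced subgraphs --- is needed to line the two lists up.
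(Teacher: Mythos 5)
The paper gives no proof of this lemma at all --- it is imported as a direct combination of results from \cite{yushmanovMetricGraphProperties} and \cite{Dragan2017EccentricityAT} --- so your sketch is consistent with the paper's treatment rather than a divergence from it. Your elementary scaffolding is correct (the $\alpha_1$-metric condition and induced-$C_5$-freeness are inherited by isometric subgraphs, an induced $C_4$ or $C_5$ is automatically isometric, and the quadruple computation shows $C_4$ and $C_k$ for $k\ge 6$ violate the $\alpha_1$ inequality while $C_5$ satisfies it), and the substantive reverse direction is, exactly as in the paper, delegated to the two cited works.
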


\begin{figure}[h]
\begin{minipage}{\textwidth}
\begin{center}
\includegraphics[scale=1]{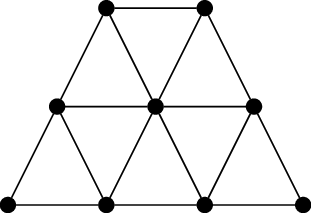}
\captionof{figure}{Forbidden isometric subgraph $W_6^{++}$}\label{fig:w6}
\end{center}
\end{minipage}
\end{figure}

Next lemma establishes conditions in which a Helly graph is chordal.

\begin{lemma} \label{lem:WhenHellyIsChordal}
If $G$ is a Helly graph with no induced wheels $W_k$,  $k \ge 4$, then $G$ is chordal.
\end{lemma}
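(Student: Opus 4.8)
The plan is to use the structural characterizations of chordal graphs that the excerpt has carefully assembled, namely Lemma~\ref{lem:whenAlphaMetricIsChordal}, Lemma~\ref{lem:whenBridgedIsAlphaMetric}, Proposition~\ref{prop:pseudo-modular-def}(iii), and Proposition~\ref{prop:cycle}. Since $G$ is Helly (hence pseudo-modular) and has no induced $W_k$ for $k\ge 4$, by Lemma~\ref{lem:whenAlphaMetricIsChordal} it suffices to show that $G$ contains no induced $C_5$ and is an $\alpha_1$-metric graph; equivalently, by Lemma~\ref{lem:whenBridgedIsAlphaMetric}, it suffices to rule out an induced $C_5$, show $G$ is bridged (contains no isometric $C_k$, $k\ge 4$), and show $G$ contains no isometric $W_6^{++}$. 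The absence of induced $W_6^{++}$ (and $W_k$, $k\ge 5$) is immediate from the hypothesis, so the work is concentrated on (i) ruling out an induced $C_5$, and (ii) ruling out an isometric $C_k$ for $k\ge 4$.

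For step (i), suppose $C = v_0 v_1 v_2 v_3 v_4$ is an induced $C_5$ in $G$. All nonadjacent pairs are at distance $2$, so $v_0,v_1,v_2,v_3,v_4$ form a $2$-set; by neighborhood-Helly (Helly graphs are neighborhood-Helly, and maximal $2$-sets are suspended) there is a vertex $u$ adjacent to all five. But then $\langle C \cup \{u\}\rangle$ contains an induced $W_5$, contradicting the hypothesis. (One must check $u\notin C$, which holds since no vertex of $C$ is adjacent to all others, and that $\{v_0,\dots,v_4\}$ extends to a maximal $2$-set all of whose members $u$ must be adjacent to — but any vertex in the suspending set of the maximal $2$-set is adjacent to every $v_i$, which is all that is needed.) A cleaner alternative is to apply Proposition~\ref{prop:pseudo-modular-def}(iii) directly: take $u=v_0$, $v=v_2$, $w=v_3$, so $d(v,w)=1$ and $d(u,v)=d(u,w)=2$; it gives a vertex $x$ with $d(v_2,x)=d(v_3,x)=1$ and $d(v_0,x)=1$, so $x$ is adjacent to $v_0,v_2,v_3$; then re-examine adjacency of $x$ to $v_1,v_4$ and in every case extract an induced $W_4$ or $W_5$. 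I would use whichever of these two is shortest to write cleanly — likely the neighborhood-Helly argument.

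For step (ii), let $C = v_0 v_1 \dots v_{k-1}$, $k\ge 4$, be an isometric cycle in $G$ (so $k\ge 5$ once $C_4$ is handled, since an isometric $C_4$ is also an induced $C_4$, and an induced $C_4$ is a $2$-set giving a universal vertex and hence an induced $W_4$, contradiction — so we may assume $k\ge 5$, and $k\ne 5$ by step (i), so $k\ge 6$). I would pick an antipodal-ish pair, say $u := v_0$, and two vertices $v,w$ of $C$ at the maximum distance $\lfloor k/2\rfloor$ from $u$ that are adjacent or have a common neighbor on $C$ (for even $k$ the unique antipode $v_{k/2}$ has neighbors $v_{k/2-1},v_{k/2+1}$ equidistant from $u$; for odd $k$ the two antipodes $v_{(k-1)/2},v_{(k+1)/2}$ are adjacent). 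Apply Proposition~\ref{prop:pseudo-modular-def}(iii) to get a vertex $x$ with $d(u,x)=d(u,v)-1$ strictly inside the cycle's distance structure; iterating, build a shorter cycle or a chord, contradicting that $C$ is isometric. An alternative, and probably the one I would actually pursue, is to invoke Proposition~\ref{prop:cycle}-style reasoning together with pseudo-modularity to produce, from any long isometric cycle, either an induced $W_k$ (contradiction) or a shortcut (contradicting isometry); since bridged graphs are exactly those with no isometric $C_k$, $k\ge 4$, and pseudo-modular graphs are already close to bridged, this should go through with a short induction on $k$.

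The main obstacle is step (ii): carefully showing a pseudo-modular (indeed Helly), $W_k$-free graph has no isometric $C_k$ for $k\ge 6$ without re-deriving a lot of bridged-graph theory. The cleanest route is likely: any isometric cycle of length $\ge 4$ in a pseudo-modular graph, by repeatedly applying Proposition~\ref{prop:pseudo-modular-def}(iii) to a vertex and its antipodal region, yields a vertex adjacent to a long arc of the cycle, and pushing this all the way around (using that antipodal disks of radius $\lfloor k/2\rfloor-1$ from consecutive cycle vertices pairwise intersect, then applying the Helly property of a bounded family of such disks) forces a vertex universal to the whole cycle — an induced $W_k$, which is forbidden. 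I would organize the proof as: (1) $G$ is pseudo-modular and neighborhood-Helly; (2) no induced $C_4$ or $C_5$ (neighborhood-Helly $+$ no $W_4,W_5$); (3) no isometric $C_k$, $k\ge 6$ (the Helly/pseudo-modularity argument above, forbidden $W_k$); (4) no isometric $W_6^{++}$ (hypothesis); (5) conclude via Lemma~\ref{lem:whenBridgedIsAlphaMetric} and Lemma~\ref{lem:whenAlphaMetricIsChordal} that $G$ is chordal.
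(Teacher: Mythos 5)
Your overall architecture matches the paper's proof exactly: exclude induced $C_4$ and $C_5$ via the Helly property on unit disks (forbidden $W_4$, $W_5$), show $G$ is bridged, note that an isometric $W_6^{++}$ would force an induced $W_6$, and conclude via Lemma~\ref{lem:whenBridgedIsAlphaMetric} and Lemma~\ref{lem:whenAlphaMetricIsChordal}. Step~(i) is essentially the paper's first claim and is fine. The problem is step~(ii), which you yourself flag as the main obstacle: none of the routes you sketch is carried out, and the one you call ``cleanest'' does not work as stated. If you put the disks $D(v_i,\lfloor k/2\rfloor-1)$ around all vertices of an isometric $C_k$, they do pairwise intersect, but the Helly property then only gives a vertex within distance $\lfloor k/2\rfloor-1$ of every $v_i$ --- for $k=6$ that is distance $\le 2$, which is no contradiction and is nowhere near ``a vertex universal to the whole cycle.'' (Indeed, a vertex universal to an isometric $C_k$ with $k\ge 6$ cannot exist for distance reasons, so no argument can ``force'' one; the contradiction has to be extracted differently.) The iterate-Proposition~\ref{prop:pseudo-modular-def}(iii)-to-shorten-the-cycle alternative is likewise only a hope, not an argument.

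The paper's actual bridged step is local and concrete, and this is the content your proposal is missing. For an isometric $C_{2\ell}$, $\ell\ge 3$: take $x,y$ antipodal and let $z,t$ be the two cycle-neighbors of $y$. The three disks $D(x,\ell-2)$, $D(z,1)$, $D(t,1)$ pairwise intersect (this is the right radius choice --- small disks at the two neighbors of the antipode, not uniform large disks around the whole cycle), so Helly gives a vertex $v$ adjacent to $z$ and $t$ with $d(x,v)=\ell-2$; isometry of the cycle forces $vy\notin E(G)$ and $zt\notin E(G)$, so $v,z,y,t$ induce a $C_4$, which was already excluded. For an isometric $C_{2\ell+1}$, $\ell\ge 3$, two successive applications of the Helly property to similarly chosen triples of disks again produce an induced $C_4$. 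If you replace your step~(ii) with this argument (and tighten the $W_6^{++}$ remark to say explicitly that an isometric $W_6^{++}$ contains $W_6$ as an isometric, hence induced, subgraph), your proof goes through.
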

\begin{proof}
We first claim that $G$ has no induced $C_4$ nor $C_5$.
By contradiction, assume $C_4$ or $C_5$ is induced in $G$.
Consider the system of pairwise intersecting unit disks centered at each vertex of the cycle.
By the Helly property, there is a vertex universal to the cycle.
Thus, $G$ contains $W_4$ or $W_5$, a contradiction establishing the claim that $G$ has no induced $C_4$ nor $C_5$.

We next claim that $G$ is a bridged graph.
Suppose $G$ has an isometric cycle $C_{2\ell}$ for some integer $\ell \geq 3$
(when $\ell=2$, $G$ has induced $C_4$).
Let $x,y \in C_{2\ell}$ be opposite vertices such that $d_G(x,y)=\ell$.
Let $z,t \in C_{2\ell}$ be the distinct neighbors of $y$, as illustrated in \hnew{Figure~\ref{fig:whenHellyIsChordal}(a).}
As the disks $D(x,\ell-2)$, $D(z,1)$, and $D(t,1)$ pairwise intersect,
then by the Helly property, there is a vertex $v \in I(x,t) \cap I(x,z) \cap I(t,z)$.
Since $C_{2\ell}$ is isometric, necessarily $vy \notin E(G)$ and $zt \notin E(G)$.
A contradiction arises with the $C_4$ induced by $v,z,y,t$.

Suppose now that $G$ has an isometric cycle $C_{2\ell+1}$ for some integer $\ell \geq 3$
(when $\ell=2$, $G$ has induced $C_5$).
Let $x,y_1,y_2 \in C_{2\ell+1}$ be vertices such that $y_1y_2 \in E(G)$ and $d_G(x,y_1)=d_G(x,y_2)=\ell$.
As the disks $D(x,\ell-1)$, $D(y_1,1)$, and $D(y_2,1)$ pairwise intersect,
 by the Helly property, there is a vertex $v$ adjacent to $y_1$ and $y_2$ such that $d_G(x,v)=\ell-1$.
Let $z,t \in C_{2\ell+1}$ be vertices such that $z \in N(y_1) \cap I(y_1,x)$ and $t \in N(y_2) \cap I(y_2,x)$, as illustrated in \hnew{Figure~\ref{fig:whenHellyIsChordal}(b).}
Since $\ell \ge 3$ and by choice of the vertices $z,t$ on isometric cycle $C_{2\ell+1}$, necessarily $d_G(z,t) = 3$.
Therefore, $vz \notin E(G)$ or $vt \notin E(G)$; 
without loss of generality, let $vz \notin E(G)$.
As the disks $D(x,\ell-2)$, $D(v,1)$, and $D(z,1)$ pairwise intersect,
by the Helly property, there is a vertex $u \in I(x,v) \cap I(x,z) \cap I(v,z)$.
Necessarily $uy_1 \notin E(G)$, otherwise $d_G(x,y_1) < \ell$.
A contradiction arises with the $C_4$ induced by $u,z,y_1,v$.

Hence, $G$ is a bridged graph. 
Since $G$ has no induced $W_k$ for $k \geq 4$, $G$ does not contain $W_6^{++}$ as an isometric subgraph (observe that $W_6$ is an isometric subgraph of $W_6^{++}$).
By Lemma~\ref{lem:whenBridgedIsAlphaMetric}, $G$ is an $\alpha_1$-metric graph not containing an induced $C_5$.
Since $G$ also has no induced $W_k$ for $k \geq 4$,  by Lemma~\ref{lem:whenAlphaMetricIsChordal}, $G$ is chordal.
\end{proof}
\begin{figure}[h]
\centering
\begin{subfigure}{.4\textwidth}
  \centering
  \includegraphics[height=3cm]{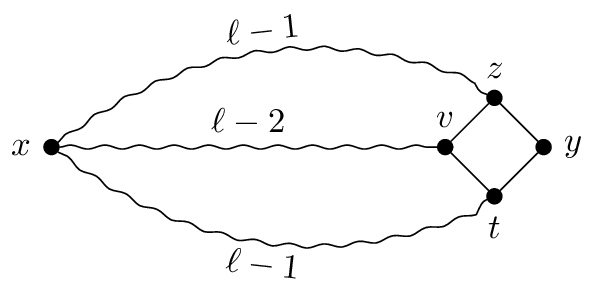}
  \caption{Case of isometric $C_{2\ell}$}
  \label{fig:whenHellyIsChordal1}
\end{subfigure}%
\begin{subfigure}{.5\textwidth}
  \centering
  \includegraphics[height=3cm]{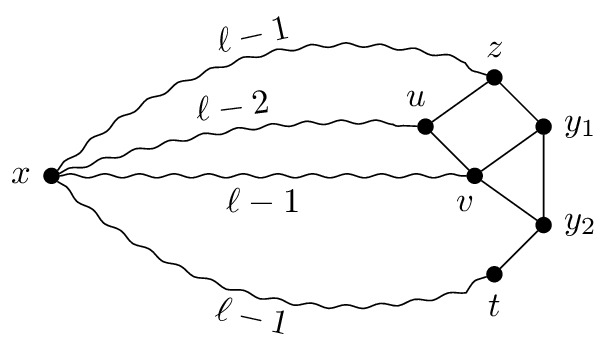}
  \caption{Case of isometric $C_{2\ell+1}$}
  \label{fig:whenHellyIsChordal2}
\end{subfigure}
\caption{Illustration to the proof of Lemma~\ref{lem:WhenHellyIsChordal}.}
\label{fig:whenHellyIsChordal}
\end{figure}

We will also use the fact that chordal graphs and square-chordal graphs can be characterized by the chordality of their so-called visibility graph and intersection graph, respectively.
Let $M = \{S_1, \dots, S_\ell\}$ be a family of subsets of $V(G)$, i.e., each $S_i \subseteq V(G)$.
An \emph{intersection graph} $\calL(M)$ and a \emph{visibility graph} $\Gamma(M)$ are both a generalization of graph powers and are defined by Brandst{\"a}dt et al. \cite{10.1007/11785293_39} as follows.
The sets from $M$ are the vertices of $\calL(M)$ and $\Gamma(M)$.
Two vertices of $\calL(M)$ are joined by an edge if and only if their corresponding sets intersect.
Two vertices of $\Gamma(M)$ are joined by an edge if and only if their corresponding sets are visible to each other; two sets $S_i$ and $S_j$ are visible to each other if $S_i \cap S_j \neq \emptyset$ or there is an edge of $G$ with one end in $S_i$ and the other end in $S_j$.
Denote by $\calD(G) = \{D(v,r) : v \in V(G), r \text{ a non-negative integer}\}$ the family of all disks of $G$.

\begin{lemma} \label{lem:iffGammaIsChordal} \cite{10.1007/11785293_39}
For a graph $G$, $\Gamma(\calD(G))$ is chordal if and only if $G$ is chordal.
\end{lemma}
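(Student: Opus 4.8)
The plan is to read off both implications from the classical theorem (Gavril, Buneman, Walter; see~\cite{doi:10.1137/1.9780898719796}) that a graph is chordal if and only if it is the intersection graph of a family of subtrees of a tree. Only one direction of that theorem --- subtree intersection graphs are chordal --- is actually needed, to handle the nontrivial implication.

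For ``$\Gamma(\calD(G))$ chordal $\Rightarrow$ $G$ chordal'' I would simply exhibit $G$ as an induced subgraph of $\Gamma(\calD(G))$. The singleton disks $D(v,0)=\{v\}$, $v\in V(G)$, are pairwise distinct, and $\{u\}$ and $\{v\}$ are visible exactly when $uv\in E(G)$ (their intersection is empty, and an edge of $G$ joins them iff $uv\in E(G)$). So $v\mapsto D(v,0)$ is an isomorphism of $G$ onto the subgraph of $\Gamma(\calD(G))$ induced by $\{D(v,0):v\in V(G)\}$, and chordality passes to induced subgraphs.

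For the substantive implication ``$G$ chordal $\Rightarrow$ $\Gamma(\calD(G))$ chordal'' I would fix a tree decomposition $(T,\{B_t\}_{t\in V(T)})$ of $G$ in which every bag is a clique --- such a decomposition exists precisely because $G$ is chordal (a clique tree). For $v\in V(G)$ set $T_v:=\{t\in V(T):v\in B_t\}$, which induces a subtree of $T$ by property~(iii) of tree decompositions, and for a disk $D\in\calD(G)$ set $T_D:=\bigcup_{u\in D}T_u$. Two claims then finish the argument. (1) $T_D$ is again a subtree of $T$: disks are connected in $G$, the two ends of every edge of $G$ share a bag, hence the subtrees $\{T_u:u\in D\}$ have a connected ``nerve'' and their union is a connected subgraph of $T$, i.e.\ a subtree. (2) For disks $D,D'$, $T_D\cap T_{D'}\neq\emptyset$ iff some $u\in D$ and $u'\in D'$ lie in a common bag; since bags are cliques this happens iff $u=u'\in D\cap D'$ or some edge of $G$ runs from $D$ to $D'$ --- which is exactly the condition that $D$ and $D'$ are visible. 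Claims (1) and (2) identify $\Gamma(\calD(G))$ with the intersection graph of the subtree family $\{T_D:D\in\calD(G)\}$ of $T$, so $\Gamma(\calD(G))$ is chordal.

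The main obstacle --- really the only point needing care --- is Claim~(1): that each $T_D$ remains a subtree. It rests on disks being connected and on the elementary fact that the union of a collection of subtrees of a tree whose intersection graph is connected is itself connected (hence a subtree, being a connected subgraph of a tree). I would also remark that $\calD(G)$ should be treated as a finite \emph{set} of subsets of $V(G)$, so that coincident disks (for instance $D(v,r)$ with $r$ at least the eccentricity of $v$) simply collapse to one vertex of $\Gamma(\calD(G))$; this has no effect on chordality.
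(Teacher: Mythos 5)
The paper does not prove this lemma at all: it is imported verbatim from~\cite{10.1007/11785293_39} as a black box, so there is no in-paper argument to compare yours against. Your proof is correct and self-contained. The easy direction via the singleton disks $D(v,0)$ is sound (note that a singleton disk cannot coincide with any disk of radius $\geq 1$, so the induced subgraph on $\{D(v,0):v\in V(G)\}$ really is a copy of $G$), and the substantive direction correctly reduces to the Gavril--Buneman--Walter subtree characterization: Claim~(1) holds because every disk induces a connected subgraph (interior vertices of a shortest path to the center stay inside the disk) and a union of pairwise-linked subtrees of a tree is connected; Claim~(2) is exactly where the clique-bag hypothesis converts ``share a bag'' into ``intersect or are joined by an edge,'' i.e.\ visibility. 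Your argument in fact proves something slightly more general than the lemma as stated --- for a chordal $G$, the visibility graph of \emph{any} finite family of connected vertex subsets is chordal --- which is essentially the level of generality of the original result in~\cite{10.1007/11785293_39}; specializing to $\calD(G)$ and adding the induced-subgraph observation gives the stated equivalence. The one housekeeping point you flag, treating $\calD(G)$ as a finite set so that coincident disks collapse, is the right reading of the paper's definition and does not affect chordality.
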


\begin{lemma} \label{lem:iffIntersectionIsChordal} \cite{10.1007/11785293_39}
For a graph $G$, $\calL(\calD(G))$ is chordal if and only if $G^2$ is chordal.
\end{lemma}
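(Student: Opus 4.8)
The plan is to prove these characterizations directly, since they are cited from \cite{10.1007/11785293_39} and their proofs fit the "folklore-style" flavor of this section. I will concentrate on $\calL(\calD(G))$, as the argument for $\Gamma(\calD(G))$ is entirely parallel with "intersecting" replaced by "visible" and distances shifted by one. The key observation is that $G^2$ embeds naturally as an induced subgraph of $\calL(\calD(G))$: map each vertex $v$ to the unit disk $D(v,1)$, and note that $D(u,1) \cap D(v,1) \neq \emptyset$ iff $d_G(u,v) \le 2$. Hence if $\calL(\calD(G))$ is chordal, so is the induced subgraph $G^2$ (chordality is hereditary). This gives the easy direction.

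For the converse — assume $G^2$ is chordal, show $\calL(\calD(G))$ is chordal — I would take an induced cycle $C = D_1, D_2, \dots, D_k$ in $\calL(\calD(G))$ with $k \ge 4$, where $D_i = D(v_i, r_i)$, and derive a contradiction. Consecutive disks intersect, non-consecutive ones do not. The first step is to reduce to a cycle in $G^2$: for each $i$, pick a witness vertex $w_i \in D_i \cap D_{i+1}$ (indices mod $k$). Because $w_{i-1}, w_i \in D_i$, the triangle inequality through $v_i$ gives $d_G(w_{i-1}, w_i) \le r_i + r_i = 2r_i$, which is not immediately a bound of $2$; so a more careful choice is needed. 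The right move is to note that for non-intersecting disks $D_i, D_j$ we have $d_G(v_i, v_j) > r_i + r_j$, and to build a closed walk in $G^2$ whose vertices alternate between the $v_i$'s and the witnesses, then extract an induced cycle and apply chordality of $G^2$ to get a "chord" between two disks that were supposed to be non-adjacent. I expect the clean way is: set $u_i := w_i$ for the witness in $D_i \cap D_{i+1}$; then $d_G(u_{i-1}, u_i) \le d_G(u_{i-1}, v_i) + d_G(v_i, u_i) \le r_i + r_i$, so instead one should choose the $r_i$ minimal among the configuration or argue via the centers directly — the centers $v_1, \dots, v_k$ with the adjacency "$d_G(v_i,v_j) \le r_i + r_j$" do not form a graph power, so the honest argument must go through an auxiliary graph where all radii are normalized.

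The cleanest route, which I would actually write up, is to use the disk-subdivision / scaling trick from \cite{10.1007/11785293_39}: replace $G$ by a graph $G'$ obtained by adding, for each disk $D(v,r)$ appearing in the cycle, a pendant-path gadget so that the disk becomes a genuine neighborhood-type set; then $\calL(\calD(G))$ restricted to these disks becomes an induced subgraph of $(G')^2$ for a suitable $G'$, and one checks $G'$ is chordal whenever $G$ is (adding pendant vertices and simplicial-like gadgets preserves chordality). An induced cycle of length $\ge 4$ in $\calL$ then yields one in $(G')^2$, contradicting chordality of $(G')^2$, which follows from chordality of $G'$ by the trivial direction applied to $G'$ in place of $G$ — but that is circular, so in fact the correct statement is that chordality of $G'$ implies chordality of $(G')^2$ only when $G'$ is also required to be, e.g., strongly chordal; this is a known subtlety. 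Given these complications, the honest plan is: \emph{cite the result as stated} (it is Lemma~\ref{lem:iffIntersectionIsChordal} attributed to \cite{10.1007/11785293_39}) and only sketch the trivial direction, deferring the full proof to the reference — this matches how the surrounding excerpt treats other external lemmas.

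The main obstacle, therefore, is that the nontrivial direction genuinely requires the machinery of \cite{10.1007/11785293_39} (intersection/visibility graphs of set families and their chordality-preservation properties), which is not self-contained from what the excerpt provides; attempting a from-scratch proof risks the circularity noted above. My recommendation is to present Lemmas~\ref{lem:iffGammaIsChordal} and~\ref{lem:iffIntersectionIsChordal} with the easy ("only if"/hereditary) direction spelled out as I sketched and the hard direction cited, then proceed to use them as black boxes to conclude that $\calH(G)$ is chordal (resp. square-chordal) whenever $G$ is: the family $\calD(\calH(G))$ of disks of the injective hull should be shown, via the distance-preservation results of Section~\ref{sec:injectiveHullProperties} (Theorem~\ref{thm:diskSeparatesInH} and Corollary~\ref{cor:diskSeparatesInH}), to have visibility/intersection graph closely related to that of $\calD(G)$, so that chordality transfers.
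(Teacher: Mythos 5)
The paper offers no proof of this lemma at all --- it is imported as a black box from \cite{10.1007/11785293_39} --- so your ultimate recommendation (spell out at most the easy ``only if'' direction and defer the converse to the reference) coincides exactly with the paper's treatment. Your easy-direction observation, that $v \mapsto D(v,1)$ embeds $G^2$ as an induced subgraph of $\calL(\calD(G))$ because $D(u,1) \cap D(v,1) \neq \emptyset$ iff $d_G(u,v) \le 2$, is correct (up to the harmless identification of true twins, whose unit disks coincide as sets), and you are right to discard the exploratory middle portion of your write-up, which does not converge to a proof of the hard direction.
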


We are now ready to prove the main results of this section.

\begin{theorem} \label{thm:chordalClosedUnderHellification}
Let $G$ be a chordal graph. Then $\calH(G)$ is also chordal.
\end{theorem}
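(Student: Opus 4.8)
The plan is to combine the structural lemmas already assembled in this section. By Lemma~\ref{lem:WhenHellyIsChordal}, since $\calH(G)$ is always a Helly graph, it suffices to show that $\calH(G)$ contains no induced wheel $W_k$ for any $k \ge 4$. So the whole proof reduces to: \emph{if $G$ is chordal, then $\calH(G)$ has no induced $W_k$, $k\ge 4$.} I would argue by contradiction, assuming some induced wheel $W_k$ sits in $H := \calH(G)$, with hub $c$ and rim cycle $C = v_0,v_1,\dots,v_{k-1},v_0$ (so $c$ is adjacent to every $v_i$, and the $v_i$ form an induced $C_k$).

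The first step is to reduce to the case where the hub is a Helly vertex. If $c \in V(G)$, then the rim $C$ lies in $D_H(c,1)$, i.e., all rim vertices are within distance $1$ of the real vertex $c$; I would like to push this into $G$. Actually the cleaner route: if $c$ is real, consider the rim $C_k$ as an induced cycle in $H$ — but $C_k$ with $k \ge 5$ being induced in $H$ does not immediately contradict chordality of $G$ because $C_k$ need not be isometric in $H$, and it need not lie in $G$. So instead I would use the separation transfer machinery: Theorem~\ref{thm:diskSeparatesInH} and Corollary~\ref{cor:diskSeparatesInH}. The key idea is that an induced wheel gives a small-radius disk around the hub that \emph{fails} to separate two opposite-ish rim vertices along the rim, yet the rim is "thin" — and this behavior should be mirrored in $G$ by a real path and real endpoints, producing a forbidden configuration (an induced $C_{\ge 4}$ or a wheel) in the chordal graph $G$.

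Concretely, here is the route I would push hardest on. Pick the wheel $W_k$ in $H$ with $k \ge 4$ minimizing, say, the number of Helly vertices among $\{c\}\cup C$, or minimizing $k$. Take two rim vertices $v_i, v_j$ that are non-adjacent on the rim (exists since $k \ge 4$); along the rim there is a path $P$ from $v_i$ to $v_j$ avoiding $N[v_\ell]$ for some third rim vertex $v_\ell$, OR: all rim vertices avoid $D_H(c,0)=\{c\}$ trivially but lie in $D_H(c,1)$. The useful disk is $D_H(c,1)$: it contains all of $C$, but we want a disk that is \emph{avoided} by a detour. Better: fix non-adjacent rim vertices $v_0$ and $v_2$; the rim gives a $(v_0,v_2)$-path $v_0 v_1 v_2$ of length $2$ and also the long way around; now remove $N[v_1]$ — along the long rim arc we get a $(v_0,v_2)$-path avoiding $N[v_1]$, while every short path goes through a neighbor of $v_1$. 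Using Proposition~\ref{prop:shortestPathSubsetOfReal}/Corollary~\ref{cor:diskSeparatesInH} with $z := v_1$ (if $v_1$ is real — arrange this by the minimality choice, since at least $k-1 \ge 3$ rim vertices can be assumed real after we handle Helly rim vertices), and $x := v_0$, $y := v_2$, $k := 1$: since $D_H(v_1,1)$ does not separate $v_0$ from $v_2$ in $H$ (the long rim arc avoids it), Theorem~\ref{thm:diskSeparatesInH} gives that $D_G(v_1,1)$ does not separate them in $G$ either, so there is a real path $P^*$ from $v_0$ to $v_2$ in $G$ with $d_G(v_1, P^*) \ge 2$. Together with the short real path $v_0 v_1 v_2$ (assuming $v_0,v_1,v_2$ real), this real closed walk in $G$ contains an induced cycle of length $\ge 4$ (since $v_1$ has no neighbor on $P^*$, and $v_0,v_2$ are non-adjacent in $H$ hence in $G$), contradicting chordality of $G$.

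The main obstacle will be handling Helly vertices on the rim and the hub cleanly. The slick fix is an extremality/minimality hypothesis: choose a counterexample wheel in $H$ with the \emph{maximum} number of real vertices (or minimum number of Helly vertices) among all induced $W_k$, $k\ge 4$; then argue that any Helly vertex on the rim or at the hub can be "replaced" using Proposition~\ref{prop:shortestPathSubsetOfReal} (each Helly vertex lies on a shortest path between two peripheral real vertices) without destroying the wheel or reducing $k$ below $4$ — or, failing that, directly derive the contradiction from the mixed real/Helly configuration via the separation theorem as above, since Theorem~\ref{thm:diskSeparatesInH} is already stated for $x,y,z \in V(G)$ and Corollary~\ref{cor:diskSeparatesInH} lets us replace any $(x,y)$-path in $H$ by a real one in $G$ that is at least as far from any real $z$. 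I would also keep in reserve the alternative, cheaper argument via Lemma~\ref{lem:iffGammaIsChordal}: $G$ chordal $\iff \Gamma(\calD(G))$ chordal, and $\calD(H)$ relates to $\calD(G)$ through the isometric embedding, so chordality of the visibility graph might transfer directly; but I expect the wheel-free argument above to be the intended and more self-contained one. I would present the wheel-free argument as the main line and only fall back on the visibility-graph characterization if the Helly-vertex bookkeeping becomes unwieldy.
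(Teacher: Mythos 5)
Your reduction via Lemma~\ref{lem:WhenHellyIsChordal} to showing that $H := \calH(G)$ contains no induced wheel $W_k$, $k\ge 4$, is exactly the paper's first step, and your separation-based argument is essentially workable in the special case where the relevant rim vertices are real --- though even there it needs a repair: you cannot invoke Theorem~\ref{thm:diskSeparatesInH} with $z=v_1$, $k=1$, $x=v_0$, $y=v_2$, because $v_0,v_2\in D_H(v_1,1)$ and the theorem's proof assumes $D_H(z,k)\cap\{x,y\}=\emptyset$; likewise Corollary~\ref{cor:diskSeparatesInH} applied to the full long rim arc only yields $d(v_1,P^*)\ge 1$, not $\ge 2$, since the arc's endpoints are neighbors of $v_1$. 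You would have to apply the corollary to the interior $v_3,\dots,v_{k-1}$ of the arc (which lies at distance exactly $2$ from $v_1$ thanks to the hub), reattach the real endpoints, and then conclude via Proposition~\ref{prop:cycle} that $v_0v_2\in E(G)$, contradicting the induced rim.

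The genuine gap is the Helly-vertex bookkeeping, which you defer to an unspecified extremality argument. A wheel in $H$ can consist entirely of Helly vertices, and Proposition~\ref{prop:shortestPathSubsetOfReal} does not let you replace a Helly wheel vertex by a real one while preserving the wheel: the real vertex it produces lies strictly beyond the Helly vertex on an extended shortest path and is in general not adjacent to the hub or to the neighboring rim vertices, so no induced $W_k$ with fewer Helly vertices results, and minimizing the number of Helly vertices buys you nothing. This is precisely where the paper's proof does its real work, and it takes the route you relegated to ``reserve'': for each wheel vertex $v_i$ it constructs a real vertex $u_i$ with controlled distances (a peripheral extension of $v_2$ away from the other rim vertices, and extensions of shortest paths through $v_1cv_3$ and through $v_2,c,v_j$), sets $r(u_i)=d_H(u_i,v_i)$, and shows that the disks $D(u_i,r(u_i))$ of $G$ reproduce the rim as a cycle in the visibility graph $\Gamma(\calD(G))$ in which the vertex $D(u_2,r(u_2))$ has non-adjacent neighbors and no other neighbor on the cycle; Proposition~\ref{prop:cycle} and Lemma~\ref{lem:iffGammaIsChordal} then contradict the chordality of $G$. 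Without carrying out that disk construction (or supplying a genuine replacement argument), your proof covers only the case where the needed wheel vertices happen to be real.
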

\begin{proof}
By contradiction, assume $G$ is chordal and $H := \calH(G)$ is not.
By Lemma~\ref{lem:WhenHellyIsChordal}, there is an induced wheel $W_k$ in $H$ for some $k \ge 4$.
Let $S=\{v_1,\dots,v_k\}$ be the set of vertices of $W_k$ that induce a cycle $C_k$ suspended by universal vertex $c$.

We first claim that there is a real vertex $u_2$ such that $d_H(u_2,v_i) = d_H(u_2,v_2)+d_H(v_2,v_i)$ for each $v_i \in S$.
Consider the layering $\layer_0,\dots,\layer_\lambda$ produced by a multi-source breadth-first search rooted at the vertex set $\{v_4,v_5,\dots,v_k\}$;
this can be simulated with a BFS rooted at an artificial vertex~$s$ adjacent to only $\{v_4,v_5,\dots,v_k\}$.
Then, $\layer_0 = \{s\}$, $\layer_1 = \{v_4, \dots, v_{k}\}$, $\{v_1, v_3,c\} \subseteq \layer_2$, and $v_2 \in \layer_3$.
\hnew{
Let vertex $u_2$ be a vertex in $\layer_\rho$ such that $\rho$ is maximal and $d_H(v_2,u_2)=\rho - 3$ (i.e., each shortest path from $v_2$ to $u_2$ intersects each layer only once); }
then, $d_H(u_2,v_i) = d_H(u_2,v_2)+d_H(v_2,v_i)$ holds for each $v_i \in S$.
By maximality of $\rho$, there is no vertex $z \in V(H)$ with $I(v_2,u_2) \subset I(v_2,z)$.
Therefore, $u_2$ is a peripheral vertex and, by Proposition~\ref{prop:outsidePointsAreReal}, is real (see Figure~\ref{fig:chordalClosedUnderHellification}).

For each remaining vertex $v_i \in S$, we define a corresponding real vertex $u_i$ in the following way. 
\hnew{By Proposition~\ref{prop:shortestPathSubsetOfReal},
there are two real vertices $u_1,u_3$ such that a shortest path between them contains $P(v_1,v_3)=v_1cv_3$ as a subpath.}
Thus, $d_H(u_1,u_3) = d_H(u_1,v_1) + 2 + d_H(v_3,u_3)$.
Now let $j \in [4,k]$ be an integer.
By choice of $u_2$, 
vertices $c$ and $v_2$ belong to $I(u_2,v_j)$.
Denote by $P(u_2,v_j)$ a shortest path containing $c, v_2$.
By Proposition~\ref{prop:shortestPathSubsetOfReal}, there is a (not necessarily distinct) real vertex $u_j$ such that shortest path $P(u_2,u_j)$ contains $P(u_2,v_j)$.
Thus, $d_H(u_2,u_j) = d_H(u_2,v_2) + 2 + d_H(v_j,u_j)$.

With all distances established, we consider in $G$
the family of disks $\{D(u_i, r(u_i))\}$, where $r(u_i) = d_H(u_i, v_i)$, for each $v_i \in S$.
The disks centered at each vertex $u_i \in V(G)$ are visible to each other if their corresponding vertices $v_i \in V(H)$ are adjacent, i.e.,
$d_G(u_i,u_j) \leq d_H(u_i,v_i) + 1 + d_H(v_j,u_j) = r(u_i) + r(v_i) + 1$ if $v_iv_j \in E(H)$.
As $d_H(u_1, u_3) = r(u_1) + r(u_3) + 2$, the disk $D(u_1,r(u_1))$ and disk $D(u_3,r(u_3))$  are not visible to each other.
As $d_H(u_2, u_j) = r(u_2) + r(u_j) + 2$, for each integer $j \in [4,k]$, the disk $D(u_2,r(u_2))$ is not visible to the disk $D(u_j, r(u_j))$. Consider the visibility graph $\Gamma(\calD(G))$. The vertices $D(u_i,r(u_i)) \in V(\Gamma(\calD(G)))$, $i \in \{1,\dots,k\}$, form a cycle in $\Gamma(\calD(G))$.
As vertex $D(u_2, r(u_2))$ is not adjacent to any vertex $D(u_j, r(u_j))$, where $j \in \{4,\dots,k\}$,
and its neighbors $D(u_1, r(u_1))$ and $D(u_3, r(u_3))$ on the cycle are not adjacent, by Proposition~\ref{prop:cycle}, $\Gamma(\calD(G))$ is not chordal.
By Lemma~\ref{lem:iffGammaIsChordal}, $G$ is also not chordal, a contradiction.
\end{proof}

\begin{figure}[h]
\begin{minipage}{\textwidth}
\begin{center}
\includegraphics[scale=1]{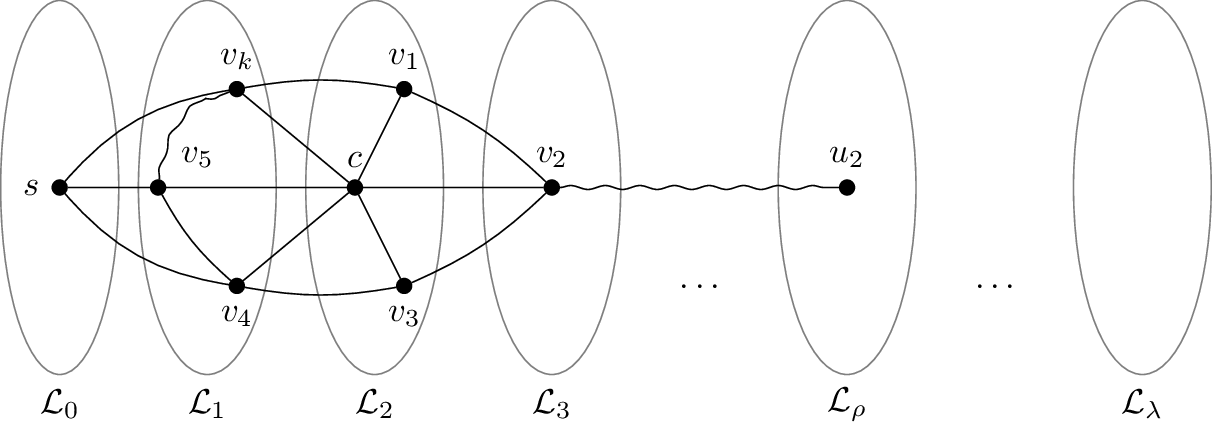}
\captionof{figure}{Illustration to the proof of Theorem~\ref{thm:chordalClosedUnderHellification}.}\label{fig:chordalClosedUnderHellification}
\end{center}
\end{minipage}
\end{figure}

A similar proof shows that the injective hull of a square-chordal graph $G$ is also square-chordal. We will need the following lemma.
\begin{lemma} \cite{FDraganPhD} \label{lem:HellyPowersAreHelly}
Any power of a Helly graph is also a Helly graph.
\end{lemma}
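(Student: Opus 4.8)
The plan is to exploit the fact that, as \emph{vertex sets}, the disks of $G^k$ are literally disks of $G$ with a rescaled radius. Since the Helly property is a statement purely about the intersection pattern of vertex sets, it then transfers from $G$ to $G^k$ with essentially no further work. So the whole proof reduces to pinning down the relationship between disks of $G^k$ and disks of $G$.

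First I would establish the distance formula $d_{G^k}(u,v) = \lceil d_G(u,v)/k \rceil$ for all $u,v \in V(G)$. For the upper bound I would chop a shortest $(u,v)$-path $x_0, x_1, \dots, x_d$ in $G$ (with $d = d_G(u,v)$) into blocks of length $k$: the vertices $x_0, x_k, x_{2k}, \dots, x_d$ are consecutive in $G^k$, since each neighboring pair is at $G$-distance at most $k$, yielding a $(u,v)$-path in $G^k$ with $\lceil d/k \rceil$ edges. For the lower bound, every edge of $G^k$ joins vertices at $G$-distance at most $k$, so any $(u,v)$-path of length $\ell$ in $G^k$ witnesses $d_G(u,v) \le \ell k$, whence $\ell \ge \lceil d/k \rceil$. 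The immediate consequence I would record is that, for every vertex $v$ and every non-negative integer $r$,
\[
  D_{G^k}(v,r) = \{u : d_{G^k}(u,v) \le r\} = \{u : d_G(u,v) \le rk\} = D_G(v, rk),
\]
so each disk of $G^k$ coincides, as a vertex set, with the disk of $G$ of radius $rk$ centered at the same vertex.

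Given this identification, the Helly argument is short. I would take any family $\calF = \{D_{G^k}(v, r(v)) : v \in S\}$ of pairwise intersecting disks in $G^k$ and rewrite each member as $D_G(v, r(v)k)$. The resulting family $\calF' = \{D_G(v, r(v)k) : v \in S\}$ consists of exactly the same vertex sets, so it is pairwise intersecting in $G$ as well. Since $G$ is Helly, $\bigcap_{v \in S} D_G(v, r(v)k) \neq \emptyset$; but this intersection equals $\bigcap_{v \in S} D_{G^k}(v, r(v))$ because the sets are identical vertexwise. Hence $\calF$ has a common intersection, and $G^k$ is Helly.

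I do not anticipate a genuine obstacle. The only place demanding care is the distance formula, and specifically verifying both inequalities; each is routine once the \enquote{chop the shortest path into blocks of length $k$} idea is in hand. The conceptual crux -- and the reason the statement holds at all -- is that taking the $k$-th power rescales radii by the factor $k$ but leaves the underlying family of disk-\emph{sets} invariant, so a set-theoretic property such as the Helly property is inherited automatically.
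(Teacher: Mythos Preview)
Your argument is correct: the identity $D_{G^k}(v,r)=D_G(v,rk)$, derived from the distance formula $d_{G^k}(u,v)=\lceil d_G(u,v)/k\rceil$, immediately reduces the Helly property for $G^k$ to that of $G$. The paper itself does not prove this lemma; it is quoted from \cite{FDraganPhD} without argument, so there is no in-paper proof to compare against. Your approach is the standard one and would serve perfectly well as a self-contained justification here.
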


\begin{theorem} \label{thm:squareChordalClosedUnderHellification}
If $G$ is square-chordal, then $\calH(G)$ is square-chordal.
\end{theorem}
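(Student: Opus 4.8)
The plan is to mirror the structure of the proof of Theorem~\ref{thm:chordalClosedUnderHellification}, replacing the chordality statements by their squared analogues. First I would establish an analogue of Lemma~\ref{lem:WhenHellyIsChordal}: if $G$ is a Helly graph with no induced wheels $W_k$, $k \ge 4$, then $G$ is square-chordal. This should follow from the fact that $G^2$ is Helly (Lemma~\ref{lem:HellyPowersAreHelly}), so it suffices to check that $G^2$ has no induced wheels either; a wheel in $G^2$ would pull back to a forbidden configuration in $G$ that the Helly property of $G$ rules out, in the same spirit as the $C_4/C_5$ and isometric-cycle arguments in the proof of Lemma~\ref{lem:WhenHellyIsChordal}. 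Alternatively, one can directly argue that $H := \calH(G)$ square-chordal is equivalent to $\calL(\calD(G))$ chordal via Lemma~\ref{lem:iffIntersectionIsChordal} and run the argument at the level of the intersection graph.

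Then the core of the proof proceeds by contradiction: assume $G$ is square-chordal but $H := \calH(G)$ is not, i.e.\ $\calL(\calD(G))$ is not chordal, which (since $\calL(\calD(H))$ is chordal, because $H^2$ is chordal is exactly what we are trying to contradict — so instead) means $H^2$ contains an induced cycle $C_k$, $k \ge 4$, and I would like to produce from it an induced cycle in $\calL(\calD(G))$, contradicting Lemma~\ref{lem:iffIntersectionIsChordal}. As in the chordal case, I would first argue that $H^2$ cannot contain an induced $C_4$ or $C_5$ using the Helly property of $H^2$ (Lemma~\ref{lem:HellyPowersAreHelly}) together with the absence of wheels, and then handle longer induced cycles. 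For a vertex $v_i$ on the bad cycle, I would use Proposition~\ref{prop:shortestPathSubsetOfReal} / Proposition~\ref{prop:outsidePointsAreReal} exactly as in Theorem~\ref{thm:chordalClosedUnderHellification} to extend shortest paths (now allowing lengths that differ by up to $2$, since non-adjacency in $H^2$ means distance $\ge 3$ in $H$) to real peripheral endpoints $u_i \in V(G)$, and set $r(u_i) = d_H(u_i, v_i)$. Two disks $D(u_i, r(u_i))$ and $D(u_j, r(u_j))$ then intersect in $G$ precisely when $v_i v_j$ is an edge of $H^2$, i.e.\ $d_H(v_i, v_j) \le 2$, which translates to $d_G(u_i, u_j) \le r(u_i) + r(u_j) + 2$; and they fail to intersect when $d_H(v_i, v_j) \ge 3$. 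This realizes the induced cycle of $H^2$ as an induced cycle of $\calL(\calD(G))$, so $\calL(\calD(G))$ is not chordal, contradicting Lemma~\ref{lem:iffIntersectionIsChordal}.

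The step that I expect to be the main obstacle is the careful bookkeeping of the "$+2$ slack" version of the path-extension argument. In the chordal proof, non-adjacency of $v_1$ and $v_3$ in the wheel is witnessed by the exact equality $d_H(u_1,u_3) = r(u_1) + 2 + r(u_3)$, and the whole argument hinges on getting the additive constant exactly right for both the adjacent and non-adjacent pairs on the cycle. In the squared setting one must track that an edge of $H^2$ corresponds to $d_H \le 2$ (not $=2$) while a non-edge corresponds to $d_H \ge 3$, and verify that after extending to the real endpoints $u_i$ these inequalities survive as exactly $d_G(u_i,u_j) \le r(u_i)+r(u_j)+2$ versus $d_G(u_i,u_j) \ge r(u_i)+r(u_j)+3$ — and crucially that an \emph{induced} cycle is produced, i.e.\ that no unwanted intersections are created among disks whose centers came from non-consecutive cycle vertices. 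Ensuring that the vertex $u_2$ (the analogue of the special peripheral vertex in the chordal proof) can be chosen so that $c$ and $v_2$, or rather the appropriate interval structure, lie on shortest paths to all the other $v_j$ simultaneously is where the layering/BFS argument must be adapted with care. Once that combinatorial geometry is pinned down, invoking Lemma~\ref{lem:iffIntersectionIsChordal} closes the argument just as Lemma~\ref{lem:iffGammaIsChordal} did for chordal graphs.
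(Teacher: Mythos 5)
Your high-level target is right (reduce to chordality of $\calL(\calD(G))$ via Lemma~\ref{lem:iffIntersectionIsChordal}, extend to real peripheral vertices via Proposition~\ref{prop:shortestPathSubsetOfReal}), but you miss the one structural step that makes the paper's argument go through, and your plan as stated would stall exactly at the obstacle you flag. The paper does not work with an induced cycle of $H^2$ at all: since $H^2$ is Helly (Lemma~\ref{lem:HellyPowersAreHelly}), Lemma~\ref{lem:WhenHellyIsChordal} applies \emph{directly to $H^2$} and says that if $H^2$ is not chordal it must contain an induced \emph{wheel} $W_k$, $k\ge 4$. The hub $c$ of that wheel is the linchpin: it is what lets one root a multi-source BFS at $S\setminus\{v_1,v_2,v_3\}$, place $c$ and $v_2$ in controlled layers, and produce a single peripheral real vertex $u_2$ whose shortest paths to \emph{all} of $v_4,\dots,v_k$ simultaneously pass near $c$, yielding the exact lower bounds $d_H(u_2,u_j)\ge d_H(u_2,v_2)+3+d_H(v_j,u_j)$. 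Without the hub there is no such simultaneous control, and — as you correctly worry — you would need the image cycle in $\calL(\calD(G))$ to be chordless. The wheel makes that unnecessary: non-adjacency is only needed for the pairs $(u_1,u_3)$ and $(u_2,u_j)$, $j\ge 4$, which is precisely the hypothesis of Proposition~\ref{prop:cycle}. So the ``analogue of Lemma~\ref{lem:WhenHellyIsChordal}'' you propose to prove is not needed; what is needed is to \emph{apply} that lemma to $H^2$ and then exploit the wheel's hub.

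There is also a concrete arithmetic error in your disk set-up: with $r(u_i)=d_H(u_i,v_i)$, an edge $v_iv_j\in E(H^2)$ only gives $d_G(u_i,u_j)\le r(u_i)+r(u_j)+2$, which is \emph{not} the intersection condition for $D(u_i,r(u_i))$ and $D(u_j,r(u_j))$ (that requires $d_G(u_i,u_j)\le r(u_i)+r(u_j)$); it is a visibility-type condition, i.e.\ you would land in $\Gamma(\calD(G))$, which characterizes chordality of $G$, not of $G^2$. The paper inflates each radius to $r(u_i)=d_H(u_i,v_i)+1$, so that adjacency in $H^2$ ($d_H(v_i,v_j)\le 2$) gives genuine intersection $d_G(u_i,u_j)\le r(u_i)+r(u_j)$, while the engineered non-adjacent pairs satisfy $d_G(u_i,u_j)\ge r(u_i)+r(u_j)+1$ and so do not intersect. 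Both gaps are repairable, but as written the proposal does not constitute a proof.
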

\begin{proof}
Let $H := \calH(G)$. By Lemma~\ref{lem:HellyPowersAreHelly}, $H^2$ is Helly.
Assume, by contradiction, that $G^2$ is chordal but $H^2$ is not.
By Lemma~\ref{lem:WhenHellyIsChordal}, there is an induced wheel $W_k$ in $H^2$ for some $k \ge 4$.
Let $S = \{v_1,\dots,v_k \}$ be the set of vertices of $W_k$ that induce a cycle $C_k$ suspended by universal vertex $c$.
As $cv_i \in E(H^2)$ for each $v_i \in S$, then $d_H(c,v_i) \le 2$.
We denote by $v_z$ a particular vertex of $S$ defined as follows.
If there is a vertex $v_i \in S$ such that $cv_i \in E(H)$, then set $v_z := v_i$.
In this case, observe that all vertices $v_j \in S \setminus D_{H^2}(v_i,1)$ satisfy $d_H(v_j,c)=2$, else $S$ would not induce an induced  cycle in $H^2$.
As $k \ge 4$, there is at least one such vertex $v_j$.
On the other hand, if $d_H(c,v_i) = 2$ for each $v_i \in S$, then let $v_z$ be any vertex of $S$. Without loss of generality, in what follows, we can assume that $v_z$ is $v_2$.  

In the next few steps, we define for each vertex $v_i \in S$ a real vertex $u_i$ satisfying particular distance requirements. By Proposition~\ref{prop:shortestPathSubsetOfReal}, there are real vertices $u_1,u_3 \in V(G)$ such that a shortest $(u_1,u_3)$-path in $H$ contains a shortest $(v_1,v_3)$-path in $H$.
As $v_1$ and $v_3$ are non-adjacent in $H^2$, $d_H(v_1,v_3) \ge 3$ and, therefore, $d_H(u_1,u_3) \ge d_H(u_1,v_1) + d_H(v_3,u_3) + 3$.
Consider now a multi-source breadth-first search in $H$ rooted at $M=S \setminus \{v_1,v_2,v_3\}$; this can be simulated with a BFS rooted at an artificial vertex~$s$ adjacent to only the vertices of $M$.
Then, $\layer_0 = \{s\}$, $\layer_1 = M$, $v_1, v_3\in \layer_2 \cup \layer_3$, $c \in \layer_3$, and finally,
\hnew{
$v_2 \in \layer_\mu$ for $\mu = 4$ or $\mu = 5$.
Let vertex $u_2$ be a vertex in $\layer_\rho$ such that $\rho$ is maximal and $d_H(v_2,u_2)=\rho - \mu$ (i.e., each shortest path from $v_2$ to $u_2$ intersects each layer only once); }
By maximality of $\rho$, there is no vertex $f \in V(H)$ with $I(v_2,u_2) \subset I(v_2,f)$.
Therefore, $u_2$ is a peripheral vertex and, by Proposition~\ref{prop:outsidePointsAreReal}, is real.

For each remaining vertex $v_i \in M$, we define a corresponding real vertex $u_i$ in the following way. Note that, by choice of $v_i$, $v_2$ and $v_i$ are non-neighbors in $H^2$; thus, $d_H(v_i,v_2) \ge 3$.
On one hand, if $v_2 \in I(v_i,u_2)$ then, by Proposition~\ref{prop:shortestPathSubsetOfReal}, there is a real $u_i$ vertex such that a shortest $(u_i,u_2)$-path in $H$ contains a shortest $(v_i,v_2)$-path in $H$.
Hence, $d_H(u_i,u_2) \ge d_H(u_i,v_i) + d_H(v_2,u_2) + 3$.
On the other hand, if $v_2 \notin I(v_i,u_2)$, then necessarily $v_2 \in \layer_4$ and $d_H(v_i,v_2)=4$.
Then, there exists a vertex $z \in I(v_i,u_2) \cap \layer_4$ with $d_H(v_i,z)=3$ and $d_H(z,u_2)=d_H(v_2,u_2)$.
By Proposition~\ref{prop:shortestPathSubsetOfReal}, there is a real vertex $u_i$ such that a shortest $(u_i,u_2)$-path in $H$ contains a shortest $(v_i,z)$-path in $H$.
Hence, $d_H(u_i,u_2) = d_H(u_i,v_i) + d_H(v_i,z) + d_H(z,u_2) = d_H(u_i,v_i) + 3 +  d_H(v_i,u_2)$.

With all distances established, we consider in $G$ the family of disks $\{D(u_i, r(u_i))\}$, where $r(u_i) = d_H(u_i, v_i) + 1$, for each $v_i \in S$.
The disks centered at each vertex $u_i \in V(G)$ intersect if their corresponding vertices $v_i \in V(H^2)$ are adjacent in $H^2$, i.e.,
$d_G(u_i,u_j) \leq d_H(u_i,v_i) + 2 + d_H(v_j,u_j) = r(u_i) + r(u_j)$ if $v_iv_j \in E(H^2)$.
As $d_H(u_1,u_3) \ge r(u_1) + r(u_3) + 1$, the disk $D(u_1,r(u_1))$ and disk $D(u_3,r(u_3))$ do not intersect.
As $d_H(u_2,u_j) \ge r(u_2) + r(u_j) + 1$, for each $j \in \{4,\dots,k\}$, the disks $D(u_2,r(u_2))$ and $D(u_j,r(u_j))$ do not intersect. Consider the intersection graph $\calL(\calD(G))$. The vertices $D(u_i,r(u_i)) \in V(\calL(\calD(G)))$, $i \in \{1,\dots,k\}$, form a cycle in $\calL(\calD(G))$.
As vertex $D(u_2,r(u_2))$ is not adjacent to any vertex $D(u_j,r(u_j))$, where $j \in \{4,\dots,k\}$, 
and its neighbors $D(u_1,r(u_1))$ and $D(u_3,r(u_3))$ on the cycle are not adjacent,
 by Proposition~\ref{prop:cycle}, $\calL(\calD(G))$ is not chordal.
By Lemma~\ref{lem:iffIntersectionIsChordal}, $G^2$ is also not chordal, a contradiction.
\end{proof}

A graph $G$ is \emph{dually chordal} if it has a so-called  \emph{maximum neighborhood ordering} (see~\cite{draganLocationProblems,DBLP:journals/siamdm/BrandstadtDCV98} for definitions and various characterizations of this class of graphs). A maximum neighborhood ordering can be constructed in total linear time~\cite{draganLocationProblems,DBLP:journals/siamdm/BrandstadtDCV98}. For us here, the following characterization is relevant: a graph $G$ is dually chordal if and only if $G$ is neighborhood-Helly and $G^2$ is chordal~\cite{draganLocationProblems,DBLP:journals/siamdm/BrandstadtDCV98}.  So, we can state the following corollary. 

\begin{corollary} \label{cor:gSquaredDuallyChordal} 
If $G$ is a square-chordal graph, then $\calH(G)$ is dually chordal.
\end{corollary}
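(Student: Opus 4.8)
The plan is to combine Theorem~\ref{thm:squareChordalClosedUnderHellification} with the known structural characterization of dually chordal graphs. Recall the stated characterization: a graph is dually chordal if and only if it is neighborhood-Helly and its square is chordal. So it suffices to verify both properties for $\calH(G)$.

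First I would invoke Theorem~\ref{thm:squareChordalClosedUnderHellification}: since $G$ is square-chordal, $\calH(G)$ is square-chordal, i.e.\ $\calH(G)^2$ is chordal. This gives the second of the two required conditions immediately, with no further work. Second I would observe that $\calH(G)$, being an injective hull, is by definition a Helly graph. By Proposition~\ref{prop:pseudo-modular-helly}, every Helly graph is pseudo-modular and neighborhood-Helly; in particular $\calH(G)$ is neighborhood-Helly. (Alternatively, one can quote the remark following Proposition~\ref{prop:pseudo-modular-helly} that $G$ is neighborhood-Helly iff all its maximal $2$-sets are suspended, but there is no need — being Helly trivially implies being neighborhood-Helly since the unit disks are a special family of disks.)

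Having established that $\calH(G)$ is both neighborhood-Helly and has a chordal square, I would conclude by the cited characterization of dually chordal graphs that $\calH(G)$ is dually chordal. There is essentially no obstacle here: the corollary is a direct consequence of the already-proven theorem plus two well-known facts, so the ``proof'' is a two-line deduction. The only thing to be slightly careful about is making sure the characterization being used (neighborhood-Helly $+$ chordal square $\Leftrightarrow$ dually chordal) is exactly the one stated in the paragraph preceding the corollary, which it is.
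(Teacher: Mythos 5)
Your proposal is correct and matches the paper's (implicit) argument exactly: the paper derives the corollary from Theorem~\ref{thm:squareChordalClosedUnderHellification} together with the characterization that a graph is dually chordal if and only if it is neighborhood-Helly and its square is chordal, with neighborhood-Hellyness of $\calH(G)$ following trivially from its being a Helly graph. Nothing further is needed.
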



\section{Distance-hereditary graphs}\label{sec:dhg}
A graph is \emph{distance-hereditary} if and only if each of its connected induced subgraphs is isometric~\cite{howorkaDHG}, that is,
the length of any induced path between two vertices equals their distance in~$G$.
In this section, we show that distance-hereditary graphs are closed under Hellification.
We give a characterization of the distance-hereditary Helly graphs \anew{and show conditions under which adding a vertex to a Helly graph keeps it Helly in Section~\ref{sec:dhHelly}.
In Section~\ref{sec:dhDataStruc} we describe a data structure which we then use in Section~\ref{sec:dhCompH}} to construct the injective hull of a distance-hereditary graph in linear time.

We use the following characterizations of distance-hereditary graphs.
\begin{proposition}\label{prop:dhgCharacterization}  \cite{BANDELT1986182,doi:10.1137/0217032} 
For a graph $G$, the following conditions are equivalent:
\setlist{nolistsep}
\begin{enumerate}[noitemsep, label=(\roman*)]
	\item $G$ is distance-hereditary;
	\item\label{byForbiddenSubgraphs} The house, domino, gem, and the cycles $C_k$ of length $k \geq 5$ are not induced subgraphs of $G$ (see Figure  \ref{fig:dhg});
	\item\label{byPruningSequence}
	$G$ is obtained from $K_1$ by a sequence of one-vertex extensions: attaching a pendant vertex or a twin vertex.
\end{enumerate}
\end{proposition}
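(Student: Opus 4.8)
The statement is the classical structure theorem for distance-hereditary graphs (Bandelt--Mulder; d'Atri--Moscarini; Hammer--Maffray), so the plan is to reconstruct its proof through the cycle of implications (i)~$\Rightarrow$~(ii)~$\Rightarrow$~(iii)~$\Rightarrow$~(i), which gives the full equivalence. The implication (i)~$\Rightarrow$~(ii) is immediate: being distance-hereditary is a hereditary property (an induced subgraph of a distance-hereditary graph is again distance-hereditary, directly from the isometry definition), so it is enough to note that none of the house, the domino, the gem, and the cycles $C_k$ with $k\ge5$ is distance-hereditary, and for each of them one exhibits a connected induced subgraph that is not isometric: deleting any one vertex of $C_k$ ($k\ge5$) leaves an induced path of length $k-2\ge3$ between two vertices at distance $2$ in $C_k$, while the house, the domino, and the gem each contain an induced path of length $\ge3$ joining two vertices at distance $2$.

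For (iii)~$\Rightarrow$~(i) I would induct on the length of the construction sequence, the base case $K_1$ being trivial. Let $G'$ be distance-hereditary and let $G$ arise from $G'$ by adding a vertex $x$ that is pendant to some $u\in V(G')$ or a (true or false) twin of $u$. One must show every connected induced subgraph $F$ of $G$ is isometric in $G$. Since attaching a pendant vertex or a twin creates no shortcut between old vertices, $d_G$ agrees with $d_{G'}$ on $V(G')$; hence if $x\notin V(F)$ then $F\subseteq G'$ is isometric in $G'$, so also in $G$. If $x\in V(F)$ then connectedness of $F$ forces a neighbour of $x$ into $F$; if $u\notin V(F)$ (possible only when $x$ is a twin), replacing $x$ by $u$ turns $F$ into an isomorphic induced subgraph of $G'$; and if $u\in V(F)$, then $x$ and $u$ play interchangeable roles in $F$ (a common leaf, or twins), so every distance in $F$ through $x$ equals the corresponding distance through $u$, and isometry of $F$ reduces to that of $F-x\subseteq G'$.

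The heart of the proof, and where I expect the main obstacle, is (ii)~$\Rightarrow$~(iii). By induction on $|V(G)|$, using that condition (ii) is inherited by induced subgraphs, it suffices to prove the structural lemma: \emph{every connected graph with at least two vertices and no induced house, domino, gem, or $C_k$ ($k\ge5$) has a vertex that is pendant or has a twin}; removing such a vertex and applying the induction hypothesis then yields the construction sequence of (iii). If $G$ is complete it has true twins, so assume otherwise, fix a vertex $r$ of eccentricity $\ge2$ (one exists since $G$ is not complete), and let $L_0=\{r\},L_1,\dots,L_t$ be the distance layers of a breadth-first search from $r$, so $t\ge2$; work in the last layer $L_t$, using $N(x)\subseteq L_{t-1}\cup L_t$ for $x\in L_t$. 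The absence of the four forbidden subgraphs tightly restricts how $L_{t-1}$ and $L_t$ interact: for vertices of $L_t$ joined by an edge or sharing a neighbour in $L_t$, the down-neighbourhoods $N(\cdot)\cap L_{t-1}$ must be nested, since a crossing pair together with shortest paths back to $r$ would create an induced $C_k$ with $k\ge5$, a house, or a gem. Choosing in a component of $\langle L_t\rangle$ a vertex whose down-neighbourhood is inclusion-minimal then forces it to be pendant, a false twin of another vertex of $L_t$, or --- after ruling out the remaining configurations involving a fourth vertex, again via the house, the gem, or $C_5$ --- a true twin. The case analysis that, for each offending configuration of three or four vertices across $L_{t-1}\cup L_t$, identifies the forbidden subgraph it produces is the one genuinely long part of the argument; everything else is bookkeeping.
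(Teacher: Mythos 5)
This proposition is stated in the paper as a known result with citations to Bandelt--Mulder and Hammer--Maffray; the paper supplies no proof of it, so there is nothing internal to compare your argument against. Judged on its own, your plan follows the standard route (the cycle (i)$\Rightarrow$(ii)$\Rightarrow$(iii)$\Rightarrow$(i)), and the two easy implications are essentially complete: heredity of the distance-hereditary property plus an explicit non-isometric induced path in each of the house, domino, gem, and $C_k$ ($k\ge 5$) gives (i)$\Rightarrow$(ii), and your induction for (iii)$\Rightarrow$(i), using that a pendant or twin vertex $x$ of $u$ satisfies $N(x)\subseteq N[u]$ and hence creates no shortcuts, is sound once one checks (as you implicitly do) that $F-x$ remains connected and that $d_F(x,u)=d_G(x,u)$ in the false-twin case via a common neighbour forced by connectedness.

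The genuine gap is in (ii)$\Rightarrow$(iii), and you have correctly located it but not closed it. Everything reduces to the structural lemma that every connected graph on at least two vertices with none of the four forbidden induced subgraphs has a pendant vertex or a twin pair, and your BFS-layer sketch asserts rather than proves the two claims that carry all the weight: (a) that the down-neighbourhoods $N(\cdot)\cap L_{t-1}$ of vertices in a common component of $\langle L_t\rangle$ are comparable (in fact, in the Bandelt--Mulder argument they come out \emph{equal} for vertices in the same component, while comparability across components is a separate case), and (b) that an inclusion-minimal choice yields a pendant vertex or a twin. Your list of which forbidden subgraphs arise from an offending configuration names only $C_k$ ($k\ge5$), the house, and the gem; the domino must appear somewhere in this analysis, since the domino itself contains no house, gem, or long induced cycle, so any argument that never invokes it proves a false statement. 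Until the case analysis is actually carried out --- with care about how the two shortest paths back to $r$ from a crossing pair can be chosen to first meet, which is exactly where the domino and $C_6$ cases separate --- the implication (ii)$\Rightarrow$(iii) is a plan, not a proof.
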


\begin{figure}
    [htb] 
     \vspace*{-.2cm}
    \centering
    \includegraphics[]{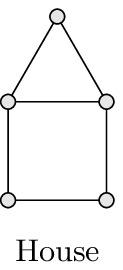}%
    \hspace*{1cm}%
    \includegraphics[]{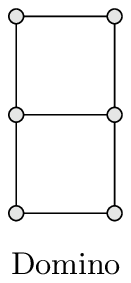}%
    \hspace*{1cm}%
    \includegraphics[]{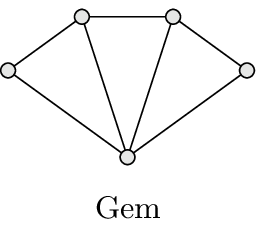}%
    \caption
    {
        Forbidden induced subgraphs in a distance-hereditary graph.
    }
    \label{fig:dhg} %
\end{figure}

\anew{\subsection{Helly property for distance-hereditary graphs}
\label{sec:dhHelly}}

Let $w,x,y,z$ be four vertices that induce a $C_4$. We denote by $S(w,x,y,z)$ 
an extended square that includes the vertices $w,x,y,z$ which induce a $C_4$ and any vertex adjacent to at least three of them, i.e., $S(w,x,y,z) = \{ v \in D(\{w,x,y,z\},1) : |N[v] \cap \{w,x,y,z\}| \geq 3 \}$.
We show that a distance-hereditary graph is Helly if and only if all extended squares are suspended.
The result is analogous to a characterization of chordal Helly graphs~\cite{FDraganPhD}: a chordal graph is Helly if and only if all extended triangles are suspended, where an extended triangle $S(x,y,z)$ is defined as the set of vertices that see at least two vertices of the triangle $\Delta(x,y,z)$.

\begin{lemma}\label{lem:dh-helly}
 A distance-hereditary graph $G$ is Helly if and only if, for every $C_4$ induced by $w,x,y,z \in V(G)$, the extended square $S(w,x,y,z)$ is suspended.
\end{lemma}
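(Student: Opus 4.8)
The plan is to reduce everything, via Proposition~\ref{prop:pseudo-modular-helly}, to a statement about $2$-sets. Recall that distance-hereditary graphs are pseudo-modular (this is known, and can also be checked directly from Proposition~\ref{prop:pseudo-modular-def}(iii) using that a distance-hereditary graph has no induced house, gem, domino, or cycle $C_k$, $k\ge 5$). Hence, by Proposition~\ref{prop:pseudo-modular-helly} and the remark that $G$ is neighborhood-Helly if and only if all its maximal $2$-sets are suspended, for a distance-hereditary $G$ we have: $G$ is Helly $\iff$ every maximal $2$-set of $G$ is suspended. So it suffices to prove, for distance-hereditary $G$, that every maximal $2$-set is suspended if and only if every extended square is suspended.

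The forward implication is the easy one. First, $S(w,x,y,z)$ is itself a $2$-set: any two of its vertices each have at least three of $w,x,y,z$ in their closed neighborhood, so by pigeonhole they share one of $w,x,y,z$, and are therefore at distance at most $2$. Now $S(w,x,y,z)$ lies in some maximal $2$-set $M$; if $G$ is Helly then $M$ is suspended by some vertex $v$, and that $v$ is adjacent to every vertex of $M\setminus\{v\}\supseteq S(w,x,y,z)\setminus\{v\}$, so $v$ suspends $S(w,x,y,z)$.

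For the converse, assume every extended square is suspended and let $M$ be a maximal $2$-set; I want $M$ suspended. If $M$ is a clique, any vertex of $M$ works. Otherwise pick a non-adjacent pair $w,y\in M$ and, supposing for contradiction that $M$ is not suspended, proceed in two steps. Step~1: produce an induced $C_4$ on vertices $w,x,y,z$ with $x,z$ common neighbors of $w$ and $y$ lying in $M$ --- if instead the common neighbors of every non-adjacent pair of $M$ that lie in $M$ formed a clique, a short argument using pseudo-modularity and the distance-hereditary structure would already exhibit a vertex dominating $M$. Step~2: the extended square $S(w,x,y,z)$ is suspended by some $c$, and $c\notin\{w,x,y,z\}$ (none of $w,x,y,z$ is adjacent to its opposite vertex on the $C_4$), so $c$ is adjacent to all four of $w,x,y,z$. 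It remains to promote this to ``$c$ is adjacent to every vertex of $M$''. For $m\in M$ with $cm\notin E(G)$, I inspect the at most six vertices $w,x,y,z,c,m$ together with the limited ways $m$ can attach to the $C_4$, given $d(m,\{w,x,y,z\})\le 2$; one of the configurations house, gem, domino, or $C_k$ with $k\ge 5$ then appears as an induced subgraph, contradicting that $G$ is distance-hereditary. Hence $c$ suspends $M$, the desired contradiction. An equivalent route organizes this as an induction along a one-vertex-extension sequence of $G$: attaching a pendant or a true twin creates no new induced $C_4$ and preserves (and reflects) the Helly property, whereas a false-twin extension is precisely where new $C_4$'s and possible failures of the Helly property arise, and one verifies that the equivalence survives such a step.

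The main obstacle is the structural core of the converse: upgrading the suspending vertex of a single well-chosen extended square to a dominating vertex of the whole maximal $2$-set, and disposing of the sub-case in which a non-adjacent pair of $M$ has only pairwise-adjacent common neighbors inside $M$. Both rely on the distance-hereditary structure forbidding the $K_{2,k}$-type and house/gem/domino configurations that would otherwise let a maximal $2$-set escape the control of its extended squares; the remaining steps are routine manipulations with disks and the triangle inequality.
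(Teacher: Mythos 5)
Your reduction via pseudo-modularity (Proposition~\ref{prop:pseudo-modular-helly}) to ``all maximal $2$-sets suspended $\iff$ all extended squares suspended,'' and your forward direction, match the paper. The converse is where the lemma's content lies, and your sketch of it has two genuine gaps. First, Step~1's fallback is only asserted: if no induced $C_4$ with both midpoints inside $M$ exists, you claim ``a short argument using pseudo-modularity and the distance-hereditary structure would already exhibit a vertex dominating $M$,'' but no such argument is given, and this is not routine --- it is exactly where the difficulty sits. Second, Step~2's claim fails. Take the $C_4$ on $w,x,y,z$ (non-edges $wy$ and $xz$), the suspender $c$ adjacent to all four, and $m\in M$ adjacent to exactly $x$ and $z$ among $\{w,x,y,z\}$ with $cm\notin E(G)$. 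This is consistent with $d(m,\cdot)\le 2$ for all four vertices and with $m\notin S(w,x,y,z)$, so $c$ is under no obligation to see $m$. The subgraph induced by $\{c,w,m,x,z\}$ is $K_{2,3}$ plus the edge $cw$, and adjoining $y$ (when $ym\notin E(G)$) gives $K_{2,4}$ plus the edges $cw,cy$; both graphs are distance-hereditary and contain no house, gem, domino, or $C_k$ with $k\ge 5$. So inspecting your six vertices produces no contradiction, and $c$ has not been promoted to a dominator of $M$; one would have to pass to the new extended square $S(w,x,m,z)$ and its suspender, with no control on how often this repeats.

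The paper closes the converse by a different mechanism that you would need to replicate: induction on the cardinality of the $2$-set. For a minimum-size unsuspended $2$-set $M$ and any $v_1,v_2,v_3\in M$, the inductive hypothesis supplies vertices $c_i$ universal to $M\setminus\{v_i\}$ with $c_iv_i\notin E(G)$ and the $c_i$ pairwise distinct; a finite case analysis on how $\{c_1,c_2,c_3\}$ meets $\{v_1,v_2,v_3\}$ then either exhibits a forbidden induced subgraph (house, gem, $C_5$, $C_6$) or shows that the \emph{entire} set $M$ is contained in a single extended square built from the $c_i$ and $v_i$, whose suspender then dominates all of $M$ at once. That simultaneous control of every $m\in M$ is precisely what your local six-vertex inspection cannot provide.
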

\begin{proof}
It is known that distance-hereditary graphs are pseudo-modular~\cite{Bandelt:1986:PG:10100.10102}. Hence, by Proposition~\ref{prop:pseudo-modular-helly}, $G$ is Helly if and only if it is neighborhood-Helly, i.e., all 2-sets are suspended.
  If $G$ is neighborhood-Helly, then all extended squares are suspended since all vertices of an extended square are pairwise at distance at most 2.
  We assert that if every extended square is suspended, then all 2-sets are suspended.

  We use an induction on the cardinality of 
  a 2-set.
  Assume any 2-set $M \subseteq V(G)$ with $|M| \leq k$ is suspended.
  Clearly, it is true for $k \le 2$.
  By contradiction, assume every extended square is suspended but there is an unsuspended 2-set $M$ with $|M|=k+1 \ge 3$. 
  Let $v_1, v_2, v_3 \in M$.
  By the inductive hypothesis,
  there is a vertex $c_1$ universal to $M \setminus \{v_1\}$, a vertex $c_2$ universal to $M \setminus \{v_2\}$, and a vertex $c_3$ universal to $M \setminus \{v_3\}$.
  Since $M$ is not suspended, necessarily each $c_i \in \{c_1,c_2,c_3\}$ has $c_iv_i \notin E(G)$, $c_i \neq v_i$, and $c_i$ is distinct from the other two vertices of $\{c_1,c_2,c_3\}$.
  We consider three cases based on how many of $\{c_1,c_2,c_3\}$ are distinct from the three vertices $\{v_1,v_2,v_3\}$.
  We will obtain a forbidden induced subgraph which contradicts Proposition~\ref{prop:dhgCharacterization}\ref{byForbiddenSubgraphs} or will show that $M$ is a subset of some extended square which is suspended, giving a contradiction with $M$ being unsuspended.

  {\em Case 1. $c_3 \notin \{v_1,v_2,v_3\}$ and $c_1,c_2 \in \{v_1,v_2,v_3\}$.}\\
  	Without loss of generality, let $c_2 = v_1$. 
    If $c_1=v_3$, then $v_3c_2 \in E(G)$, i.e., $c_1v_1 \in E(G)$, a contradiction.
    Hence, $c_1=v_2$ and therefore, $c_1,c_3,c_2,v_3$ induce $C_4$.
    Any $x \in M$ belongs to $S(c_1,c_3,c_2,v_3)$ since $x$ is either one of $c_1,c_2,v_3$ or adjacent to all of $c_1,c_2,c_3$.
    Thus, $M \subseteq S(c_1,c_3,c_2,v_3)$.

  {\em Case 2. $c_2,c_3 \notin \{v_1,v_2,v_3\}$ and $c_1 \in \{v_1,v_2,v_3\}$.}\\
   Without loss of generality, let $c_1=v_2$.
    Then, $c_1$ is adjacent to $v_3$, but $c_1v_1 \notin E(G)$.
    Since $c_3v_2 \in E(G)$ and $c_2v_2 \notin E(G)$, by equality $c_3c_1 \in E(G)$ and $c_2c_1 \notin E(G)$.
    By assumption, $c_2$ is adjacent to $v_3$ and $v_1$, $c_3$ is adjacent to $v_1$, and $c_3v_3 \notin E(G)$.
    It only remains whether $c_2c_3 \in E(G)$ and/or $v_1v_3 \in E(G)$.
    If at most one of those edges occurs, we obtain $C_5$ or a house induced by $\{c_1,c_3,v_1,c_2,v_3\}$.
    Therefore, both edges $c_2c_3$ and $v_1v_3$ must be present. 
    Now, any $x \in M \setminus \{c_2,v_3,v_1\}$ is adjacent to both $c_1,c_3$.
    Furthermore, if $xv_1 \notin E(G)$ and $xv_3 \notin E(G)$, then $x,c_3,c_1,v_3,v_1$ induce a house.
    Thus, $x$ is also adjacent to at least one of $v_1,v_3$.
    Since any $x \in M$ is either one of $v_1,v_3$ or is  adjacent to at least three of $\{c_1,c_3,v_1,v_3\}$, we get $M \subseteq S(c_1,c_3,v_1,v_3)$.

  {\em Case 3. $c_1,c_2,c_3 \notin \{v_1,v_2,v_3\}$.}\\
  	By assumption, $c_1$ is adjacent to $v_2$ and $v_3$,
  	$c_2$ is adjacent to $v_1$ and $v_3$,
  	$c_3$ is adjacent to $v_1$ and $v_2$,
  	and  $c_1v_1,c_2v_2,c_3v_3 \notin E(G)$.
  	If each $i,j \in \{1,2,3\}$ with $i \neq j$ satisfies $c_ic_j \notin E(G)$ and $v_iv_j \notin E(G)$, then $v_1,c_3,v_2,c_1,v_3,c_2$ induce $C_6$.
  	Thus, there is some chord $c_ic_j \in E(G)$ or $v_iv_j \in E(G)$. We consider two subcases without loss of generality.

  	{\em Case 3(a). There is a chord $c_2c_3 \in E(G)$.} \\
  	If there are no other edges between vertices $\{c_2,c_3,v_2,c_1,v_3\}$, then those vertices induce a $C_5$. Thus, there is at least one of the following chords: $c_1c_2$, $v_3v_2$, or $c_1c_3$.
	If $v_2v_3 \notin E(G)$, we get in $G$ a house or gem induced by $v_3,c_2,c_3,v_2,c_1$. Hence, $v_2v_3 \in E(G)$. Consider now $C_4$ induced by $c_2,c_3,v_2,v_3$. Any vertex $x \in M \setminus \{c_2,c_3,v_2,v_3\}$ is adjacent to both $c_2,c_3$. Furthermore, if $xv_2 \notin E(G)$ and $xv_3 \notin E(G)$, then $x,c_3,c_2,v_3,v_2$ induce a house. Thus, $x$ is also adjacent to at least one of $v_2,v_3$. Since any $x \in M$ is either one of $c_2,c_3,v_2,v_3$ or is  adjacent to at least three of $\{c_2,c_3,v_2,v_3\}$, we get $M \subseteq S(c_2,c_3,v_2,v_3)$.
\commentout{    
      	We first claim that $c_1c_2 \notin E(G)$. By contradiction, assume the edge exists.
  	If $v_2v_3 \notin E(G)$, we get in $G$ a house or gem induced by $v_3,c_2,c_3,v_2,c_1$.
  	Hence, $v_2v_3 \in E(G)$.
  	If $c_1c_3 \notin E(G)$ and $v_1v_2 \notin E(G)$, we get in $G$ a house induced by $c_1,c_2,v_1,c_3,v_2$. 
  	If $c_1c_3 \notin E(G)$ and $v_1v_2 \in E(G)$, then $M \subseteq S(c_2,c_1,c_2,c_3)$.\todo{Why?}
  	Hence, $c_1c_3 \in E(G)$.
  	Then, any $x \in M \setminus \{v_2,v_3\}$ is adjacent to $c_2$ and $c_3$.
  	Moreover, if $xv_3 \notin E(G)$ and $xv_2 \notin E(G)$ then we get in $G$ a house induced by $v_3c_2xc_3v_2$.
  	Hence, $x$ is also adjacent to at least one of $v_2,v_3$, and so $M \subseteq S(v_3,c_2,c_3,v_2)$, a contradiction that establishes $c_1c_2 \notin E(G)$.

  	We next claim that $c_1c_3 \notin E(G)$. By contradiction, assume the edge exists.
  	If $v_3v_2 \notin E(G)$, then we get in $G$ a house induced by $v_2c_1v_3c_2c_3$.
  	If $v_3v_2 \in E(G)$, then we get in $G$ a house induced by $c_1v_3c_2v_1c_3$ or $M \subseteq S(v_3,c_1,c_2,c_3)$, a contradiction that establishes $c_1c_3 \notin E(G)$.

  	Necessarily, we have the chord $v_3v_2 \in E(G)$ and contradiction arises with house induced by $c_1v_3c_2c_3v_2$.
} 

    {\em Case 3(b). There is a chord $v_2v_3 \in E(G)$ and $c_ic_j \notin E(G)$ for distinct $i,j \in \{1,2,3\}$.}\\
    If there are no other edges between vertices  $\{v_3,c_2,v_1,c_3,v_2\}$, then those vertices induce $C_5$.
    Thus, there is at least one of the following chords: $v_1v_3$ or $v_1v_2$.
    But then, vertices $v_2,v_3,c_2,v_1,c_3$ induce a house or a gem. 
    Obtained contradictions prove the lemma.
\end{proof}

We found it advantageous to use a characteristic pruning sequence of $G$ (see Proposition \ref{prop:dhgCharacterization}(iii)).
A \emph{pruning sequence} $\sigma_G : V(G) \rightarrow \{1,\dots,n\}$ of $G$ is a total ordering of its vertex set $V(G) = \{v_1,\dots,v_n\}$ such that each vertex $v_i$ satisfies one of the following conditions in the induced subgraph $G_i:=\langle v_1, \dots, v_i \rangle$:
\begin{enumerate}[label=(\roman*), noitemsep, nolistsep]
    \item $v_i$ is a pendant vertex to some vertex~$v_j$ with $\sigma_G(v_j) < \sigma_G(v_i)$,
    \item $v_i$ is a true twin of some vertex~$v_j$ with $\sigma_G(v_j) < \sigma_G(v_i)$, or
    \item $v_i$ is a false twin of some vertex~$v_j$ with $\sigma_G(v_j) < \sigma_G(v_i)$.
\end{enumerate}


Next lemmas give conditions under which adding a vertex to a Helly graph keeps it Helly. 
Consider a graph $H$ obtained by adding to a Helly graph $G$ a vertex $u$ as a pendant or twin to some vertex in $G$.
We show that any family $\calF = \big \{ D_H(w, r(w)) : w \in M \subseteq V(H) \big \}$ of pairwise intersecting disks in $H$ has a common intersection.
Note that this is trivially true if any vertex $w \in M$ has $r(w)=0$ (since $w$ is common to all disks of $\calF$)
or if $u \notin M$ (since $G$ is isometric in $H$ and the family of pairwise intersecting disks $\big \{ D_G(w, r(w)) : w \in M \subseteq V(G) \big \}$ have a common intersection in $G$).

\begin{lemma} \label{lem:pendantHelly}
Let $G + \{u\}$ be a graph obtained by adding a vertex $u$ pendant to $v \in V(G)$.
If $G$ is Helly, then $G+\{u\}$ is Helly.
\end{lemma}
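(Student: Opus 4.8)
The plan is to use the Helly property of $G$ directly, by transferring the disk centered at the new pendant vertex~$u$ onto its unique neighbor~$v$. First I would record two elementary facts about $H := G + \{u\}$: since $u$ is pendant to~$v$, every path leaving~$u$ begins with the edge $uv$, so $d_H(u,x) = d_G(v,x) + 1$ for all $x \in V(G)$; consequently $G$ is an isometric subgraph of~$H$, and $D_H(u,r) = \{u\} \cup D_G(v, r-1)$ for every integer $r \ge 1$.

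Now take a family $\calF = \{D_H(w, r(w)) : w \in M\}$ of pairwise intersecting disks of~$H$. As already observed in the text, I may assume $r(w) \ge 1$ for all $w \in M$ and $u \in M$, since otherwise the conclusion is immediate. I would then build an auxiliary family $\calF'$ of disks of~$G$: for each $w \in M \setminus \{u\}$ take $D_G(w, r(w))$, and in place of $D_H(u, r(u))$ take $D_G(v, r(u)-1)$ — and if $v$ itself already belongs to $M$, simply shrink its radius to $\min(r(v), r(u)-1)$ so that $\calF'$ remains indexed by vertices of~$G$.

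The key step is to verify that $\calF'$ is still pairwise intersecting, and this is where the single-neighbor structure of~$u$ is actually used. Given two members of $\calF'$, pick a vertex $p$ common to the two corresponding disks of~$\calF$. If $p \in V(G)$, then $p$ lies in both $G$-disks, using isometry of~$G$ and, for the disk replacing $D_H(u,r(u))$, the implication $d_H(u,p) \le r(u) \Rightarrow d_G(v,p) \le r(u)-1$. If $p = u$, then $d_H(u,w) \le r(w)$ gives $d_G(v,w) \le r(w)-1$, so $v$ lies in $D_G(w,r(w))$ as well as in $D_G(v, r(u)-1)$; thus $v$ is the required common vertex. Either way the two $G$-disks meet.

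Finally I would invoke the Helly property of~$G$ on $\calF'$ to obtain a vertex $q \in V(G)$ with $d_G(q,w) \le r(w)$ for every $w \in M \setminus \{u\}$ and $d_G(q,v) \le r(u)-1$. Back in~$H$ this yields $d_H(q,w) = d_G(q,w) \le r(w)$ and $d_H(q,u) = d_G(q,v) + 1 \le r(u)$, so $q \in \bigcap \calF$, proving that $H$ is Helly. The whole argument is short; the only point requiring attention is the pairwise-intersection check above, specifically the subcase in which the witness vertex of~$\calF$ is the pendant vertex itself.
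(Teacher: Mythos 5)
Your proof is correct and follows essentially the same route as the paper's: both arguments transfer the disk centered at the pendant vertex $u$ to a disk of radius $r(u)-1$ centered at its neighbor $v$ and then invoke the Helly property of $G$. The only organizational difference is that the paper splits into the cases $r(u)\ge 2$ (where $D_H(u,r(u))=D_H(v,r(u)-1)$ literally) and $r(u)=1$ (handled by showing $v$ itself lies in every disk), whereas you treat both uniformly by allowing a radius-$0$ disk and verifying pairwise intersection directly.
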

\begin{proof}
Let $H:= G + \{u\}$,  $\calF = \big \{ D_H(w, r(w)) : w \in M \subseteq V(H) \big \}$ be a family of pairwise intersecting disks in $H$, and $u \in M$.
If $r(u) \geq 2$, one may substitute in $\calF$ the disk  $D_H(u, r(u))$ with the equivalent disk~$D_H(v, r(u)-1)$. Since $G$ is isometric in $H$ and is Helly, the corresponding disks in $G$ have a common intersection.
Assume now that $r(u) = 1$.
Then, $v \in D_H(u, r(u))$. 
As the disks of $\calF$ pairwise intersect, every $w \in M \setminus \{u\}$ satisfies $r(w)+r(u) \ge d_H(w,u)=d_H(w,v)+1=d_H(w,v)+r(u)$.
Hence, $d_H(w,v) \le r(w)$ and vertex $v$ is common to all disks.
\end{proof}

\begin{lemma} \label{lem:trueTwinHelly}
Let $G+\{u\}$ be a graph obtained by adding a vertex $u$ as a true twin to $v \in V(G)$.
If $G$ is Helly, then $G+\{u\}$ is Helly.
\end{lemma}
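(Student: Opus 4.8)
The plan is to mimic the structure of the proof of Lemma~\ref{lem:pendantHelly}. Let $H := G + \{u\}$ where $u$ is a true twin of $v$, so $N_H[u] = N_H[v]$ and in particular $uv \in E(H)$. First I would record the basic distance fact: for every $w \in V(G)$, $d_H(u,w) = d_H(v,w)$, and moreover $d_H$ restricted to $V(G)$ agrees with $d_G$, so $G$ is isometric in $H$. Now take a family $\calF = \big\{ D_H(w, r(w)) : w \in M \subseteq V(H) \big\}$ of pairwise intersecting disks in $H$; as noted before the lemma, we may assume $u \in M$, every $r(w) \ge 1$, and (since $u,v$ are interchangeable) we may also assume $v \notin M$ (if both $u,v \in M$, replace $D_H(u,r(u))$ and $D_H(v,r(v))$ by a single disk — since $d_H(u,v)=1$ and both radii are $\ge 1$, their intersection contains $v$ and equals $D_H(v, r(v)) \cap D_H(u,r(u))$; more carefully, handle the two disks together in the argument below).

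The key step is to replace the disk $D_H(u, r(u))$ by the disk $D_H(v, r(u))$ in $\calF$, obtaining a family $\calF'$ of disks all centered in $V(G)$, and to argue that $\calF'$ is still pairwise intersecting. For two disks $D_H(w_1, r_1), D_H(w_2, r_2)$ with $w_1, w_2 \in V(G)$ this is unchanged. For the pair $D_H(v, r(u))$ and $D_H(w, r(w))$ with $w \in M \setminus \{u\}$: since $D_H(u, r(u))$ and $D_H(w, r(w))$ intersect, $d_H(u,w) \le r(u) + r(w)$, and because $w \in V(G)$ we have $d_H(v,w) = d_H(u,w) \le r(u) + r(w)$, so $D_H(v, r(u))$ and $D_H(w, r(w))$ intersect as well. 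Thus $\calF'$ is a pairwise intersecting family of disks all of whose centers lie in $V(G)$; restricting to $G$ (using that $G$ is isometric in $H$ and that a center-in-$V(G)$ disk of $H$ intersected with $V(G)$ is exactly the corresponding disk of $G$) gives a pairwise intersecting family in $G$. Since $G$ is Helly, there is a common vertex $x \in V(G)$ with $d_G(x, w) \le r(w)$ for all $w \in M \setminus \{u\}$ and $d_G(x,v) \le r(u)$. Finally, $d_H(x, u) = d_H(x, v) = d_G(x,v) \le r(u)$, so $x$ lies in every disk of the original family $\calF$, establishing the Helly property for $H$.

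The main obstacle — really the only subtlety — is the bookkeeping around whether $v$ itself belongs to $M$ and whether a smaller radius is needed (the analogue of the "$r(u) \ge 2$ versus $r(u)=1$" split in Lemma~\ref{lem:pendantHelly}). Because $u$ and $v$ are at distance $1$ and have identical closed neighborhoods, the disk $D_H(u, r(u))$ is not in general equal to $D_H(v, r(u)-1)$ (it contains $v$ and all of $N[v]$), so unlike the pendant case we cannot shrink the radius; instead we keep the radius and just move the center from $u$ to $v$, which is sound precisely because $u$ and $v$ are equidistant to every vertex of $G$. If both $u \in M$ and $v \in M$, one observes that the constraint imposed by $D_H(u,r(u))$ on any candidate $x \in V(G)$ is identical to that imposed by $D_H(v, r(u))$, so the pair can be collapsed to the single disk of smaller radius among $\{D_H(v,r(u)), D_H(v,r(v))\}$ before running the argument; pairwise intersection of this collapsed family follows as above. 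No forbidden-subgraph analysis or distance-hereditary structure is needed here — the lemma is a pure consequence of $G$ being Helly and $u$ being a true twin.
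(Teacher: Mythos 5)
Your proposal is correct and takes essentially the same route as the paper: the paper's proof is a one-liner observing that, since $u$ is a true twin of $v$, $D_H(u,r)=D_H(v,r)$ for every radius $r\ge 1$, so the disk centered at $u$ can be replaced by the identical disk centered at $v$, reducing to a pairwise intersecting family centered in $V(G)$, which has a common intersection because $G$ is Helly and isometric in $H$. Your extra bookkeeping about whether $v\in M$ and about collapsing the two disks is harmless but unnecessary once one notes that the two disks are literally equal as vertex sets.
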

\begin{proof}
Let $H:= G + \{u\}$,  $\calF = \big \{ D_H(w, r(w)) : w \in M \subseteq V(H) \big \}$ be a family of pairwise intersecting disks in $H$, and $u \in M$.
Because $u$ is a true twin of~$v$, $D_H(u,r)=D_H(v,r)$ for any radius $r \ge 1$.
Hence, we can assume that each disk of $\calF$ is centered at a vertex of $G$, therefore there is a common intersection of all disks of $\calF$.
\end{proof}

\begin{lemma} \label{lem:falseTrueTwinHelly}
Let $G + \{ u \}$ be the graph obtained by adding a vertex $u$ as a false twin to $v \in V(G)$.
If $G$ is Helly and \anew{there is some $y \in V(G)$ with $N[v] \subseteq N[y]$}, then $G+\{u\}$ is Helly.
\end{lemma}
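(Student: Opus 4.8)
The plan is to verify the Helly property directly for $H:=G+\{u\}$, in the style of Lemmas~\ref{lem:pendantHelly} and~\ref{lem:trueTwinHelly}, rather than going through Proposition~\ref{prop:pseudo-modular-helly}. First I would record two basic facts. Since $u$ and $v$ are false twins, the transposition of $u$ and $v$ is an automorphism of $H$ (the only edges at $u$ or $v$ run to $N(u)=N(v)$); hence $d_H(u,w)=d_H(v,w)$ for every $w\in V(G)\setminus\{v\}$, and $d_H(u,v)=2$ (as $N_H(u)=N(v)\neq\varnothing$). Also, we may and do assume $y\neq v$, and then $v\in N[v]\subseteq N[y]$ forces $vy\in E(G)$, so $y\in N(v)=N_H(u)$ and $N(v)\subseteq N(y)$. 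Let $\calF=\{D_H(w,r(w)):w\in M\subseteq V(H)\}$ be pairwise intersecting; as usual we may assume $u\in M$ and $r(w)\ge 1$ for all $w\in M$, since otherwise the claim is immediate. The goal is to exhibit a common vertex of $\calF$. I would first isolate the recurring observation that if $u\in D_H(w,r(w))$ with $w\in V(G)$ then also $v\in D_H(w,r(w))$ (clear from $d_H(w,u)=d_H(w,v)$ when $w\neq v$, trivial when $w=v$); consequently any two disks of $\calF$ with centres in $V(G)$ have an intersection meeting $V(G)$, and since $d_H$ restricted to $V(G)$ equals $d_G$, such a subfamily of disks is pairwise intersecting in $G$.

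\emph{Case $r(u)\ge 2$.} Here $D_H(u,r(u))=D_H(v,r(u))$: on $V(G)\setminus\{v\}$ the two disks coincide by the distance identity, and both contain $u$ and $v$ because $d_H(u,v)=2\le r(u)$. Replacing $D_H(u,r(u))$ by $D_H(v,r(u))$ in $\calF$ produces a family all of whose centres lie in $V(G)$; by the observation it is pairwise intersecting in $G$, so the Helly property of $G$ yields a common vertex $c\in V(G)$, which then lies in $D_H(v,r(u))=D_H(u,r(u))$ and in every remaining disk of $\calF$. (This case does not use $y$.)

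\emph{Case $r(u)=1$.} Now $D_H(u,1)=\{u\}\cup N(v)$, and for each $w\in M\setminus\{u\}$ the disk $D_H(w,r(w))$ meets $\{u\}\cup N(v)$: if it meets $N(v)$ it meets $N[v]=D_H(v,1)$, and if it contains $u$ it contains $v\in D_H(v,1)$. Hence $\calF'':=\{D_H(w,r(w)):w\in M\setminus\{u\}\}\cup\{D_H(v,1)\}$ is pairwise intersecting, and by the observation it is pairwise intersecting in $G$; the Helly property of $G$ gives a common vertex $c\in N[v]$ which lies in every $D_H(w,r(w))$ with $w\in M\setminus\{u\}$. If $c\neq v$, then $c\in N(v)\subseteq D_H(u,1)$ and $c$ is common to all of $\calF$. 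If $c=v$, then $v\in D_H(w,r(w))$ for every $w\in M\setminus\{u\}$, and I claim $y$ is the desired common vertex: $y\in N(v)\subseteq D_H(u,1)$; for $w\in M\setminus\{u\}$ with $w\neq v$, the neighbour of $v$ on a shortest $(w,v)$-path lies in $N(v)\subseteq N(y)$, so $d_H(w,y)\le d_H(w,v)\le r(w)$; and if $v\in M$ then $d_H(v,y)=1\le r(v)$. Thus $y$ lies in every disk of $\calF$.

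The one genuinely delicate point is the sub-case $c=v$ in the last case: this is precisely where the hypothesis on $y$ (and that $y\neq v$) is indispensable, since $v$ itself can never be a common vertex of $\calF$ when $r(u)=1$ (because $d_H(u,v)=2$), and one must exploit that $y$ dominates $N[v]$ in order to move the common vertex out of $v$ and into $N(v)\subseteq D_H(u,1)$. Everything else is routine bookkeeping with the false-twin distance identity. (As a sanity check, taking $G=K_{1,3}$ with centre $v$ one sees that the $y$-hypothesis fails for every $y\neq v$, matching the fact that $K_{2,3}$ is not Helly.)
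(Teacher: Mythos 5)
Your proof is correct and follows essentially the same route as the paper's: split on whether $r(u)\ge 2$ (replace $D_H(u,r(u))$ by $D_H(v,r(u))$) or $r(u)=1$ (replace $N[u]$ by $N[v]$, find a common vertex in $N[v]$, and use $y$ to escape the bad case where that vertex is $v$ itself). Your explicit handling of the sub-case $c=v$ and the remark that the hypothesis must implicitly mean $y\neq v$ just make precise what the paper leaves tacit.
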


\begin{proof}
Let $H := G + \{ u \}$, $\calF = \big \{ D_H(w, r(w)) : w \in M \subseteq V(H) \big \}$ be a family of pairwise intersecting disks in $H$, and $u \in M$.
If $r(u) > 1$, then $D_H(v, r(u)) = D_H(u, r(u))$ and so we can assume that each disk of $\calF$ is centered at a vertex of $G$, implying a non-empty common intersection of all disks of $\calF$.
Assume now that $r(u) = 1$.
As each disk of $\calF' = \bigl( \calF \setminus \{ N[u] \} \bigr) \cup \{ N[v] \}$ is centered at a vertex of $G$, there is a common intersection $R$ of all disks of $\calF'$.
Since $N[v] \in \calF'$, $R \subseteq N[v]$.
\anew{Recall that there is a vertex $y$ in $G$ with $N[v] \subseteq N[y]$.}
Therefore, $v \in R$ implies $y \in R$.
Thus, there exists vertex $s \in R \cap N(v)$.
By definition of $u$, $N(u) = N(v)$.
Thus, $s \in N(u)$ and hence $s$ is contained in each disk in $\calF$.
\end{proof}



\anew{
\subsection{A data structure for subsets of neighborhoods}
\label{sec:dhDataStruc}

The idea for our data structure is based on the partition refinement data structure.
It was introduced in \cite{PaigeTarjan1987} and allows to find all twins in a graph in linear time.%
\footnote{To do so, start with the set $V(G)$ and, for each vertex $v$, call Refine($N[v]$) for true twins or Refine($N(v)$) for false twins.}
Similar to a partition refinement, our data structure handles sets of vertices.
In addition to that, it also adds directed edges between sets.
We create these sets and edges in such a way that the following two properties are satisfied:
two vertices are in the same set if and only if they are true twins, and there is an edge from a set $X$ to a set $Y$ if and only if $N[x] \subset N[y]$ for each $x \in X$ and $y \in Y$.

To construct our data structure, we use a pruning sequence $(v_1, \ldots, v_n)$ of a given graph $G$.
Let $G_i$ denote the graph induced by $\{ v_1, \ldots, v_i \}$ and let $N_i[v]$ denote the closed neighborhood of a vertex $v$ with respect to $G_i$.
For $G_1$, our data structure only contains a single set $S = \{ v_1 \}$.
Each time we add a vertex to $G_i$, we update our data structure as follows to ensure both properties are still satisfied.

Assume that we have three sets $S$, $X$, and $Y$ with $u, v \in S$, $x \in X$, and $y \in Y$.
Additionally, let there be an edge from $X$ to $S$ and from $S$ to $Y$.
Hence, $N_i[x] \subset N_i[u] = N_i[v] \subset N_i[y]$.
Let $G_{i + 1}$ be the graph created by adding a vertex $w$.
We now have three cases:
(i)~if $w$ is pendant to $v$, then create two new sets $S_v := \{ v \}$ and $S_w := \{ w \}$, set $S := S \setminus \{ v \}$, and add the edges $XS_v$, $SS_v$, and $S_wS_v$;
(ii)~if $w$ is a true twin of $v$, set $S := S \cup \{ w \}$; and
(iii)~if $w$ is false twin of $v$, create two new sets $S_v := \{ v \}$ and $S_w := \{ w \}$, set $S := S \setminus \{ v \}$, and add the edges $S_vS$, $S_vY$, $S_wS$, and $S_wY$.
Note that, $X$ and $Y$ are not necessarily unique.
Hence, when adding an edge from $X$ or to $Y$, we have to add such an edge for each such set $X$ and $Y$.
See Figure~\ref{fig:dhDataStructure} for an illustration.

\begin{figure}
  \centering
  \begin{subfigure}{.5\textwidth}
    \centering
    \includegraphics[]{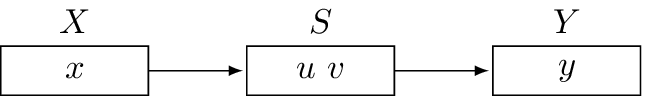}
    \caption{Data structure before adding $w$.}
  \end{subfigure}%
  \begin{subfigure}{.5\textwidth}
    \centering
    \includegraphics[]{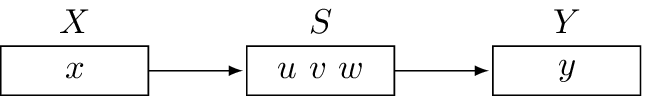}
    \caption{Data structure after adding true twin $w$ of $v$.}
  \end{subfigure}

  \bigskip

  \begin{subfigure}{.5\textwidth}
    \centering
    \includegraphics[]{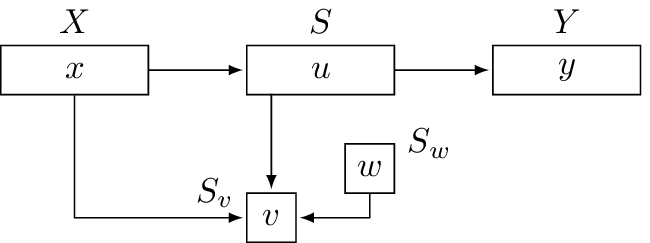}
    \caption{Data structure after adding $w$ pendant to $v$.}
  \end{subfigure}%
  \begin{subfigure}{.5\textwidth}
    \centering
    \includegraphics[]{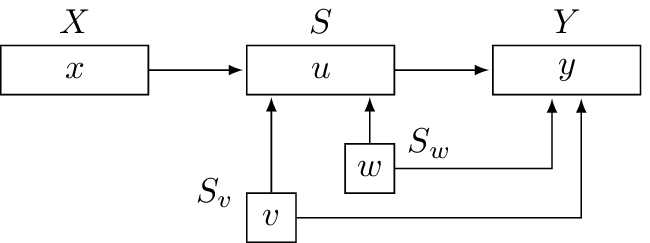}
    \caption{Data structure after adding false twin $w$ of $v$.}
  \end{subfigure}
  \caption{Modification of the data structure when adding a vertex $w$ to $G_i$.}
  \label{fig:dhDataStructure}
\end{figure}

There is a special case for the second vertex $v_2$ which only happens for that vertex.
After adding $v_2$, it is a true twin \emph{and} a pendent vertex to $v_1$.
The construction of our data structure above, however, assumes that, if $w$ is pendent to $v$, then $v$ has some neighbor not adjacent to $w$.
Therefore, to construct the data structure correctly, $v_2$ should be treated as true twin of $v_1$ and not as pendant vertex.

\begin{lemma}
\label{lem:dhDS_sets}
In the data structure constructed above, two vertices are in the same set if and only if they are true twins.
\end{lemma}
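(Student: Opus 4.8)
The plan is to strengthen the statement and prove, by induction on $i$, the invariant $(\star_i)$: the sets maintained by the data structure for $G_i$ are exactly the equivalence classes of the true-twin relation on $V(G_i)$ (true-twinhood, i.e. $N_i[x]=N_i[y]$, is reflexive, symmetric and transitive, so its classes do partition $V(G_i)$). Then $(\star_n)$ is precisely the lemma. The base case $G_1$ is trivial, and I would check that the special treatment of $v_2$ is consistent with $(\star_2)$: in the two-vertex graph $G_2$ the vertices $v_1,v_2$ are true twins exactly when $v_2$ is added as a true twin or as a pendant vertex (a pendant vertex of a two-vertex graph is a true twin of its unique neighbor), and are not true twins when $v_2$ is a false twin, which is what the construction records.

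For the inductive step I would first tabulate how closed neighborhoods change when $G_{i+1}$ is obtained from $G_i$ by adding $w$: for $z\notin\{v,w\}$, $w$ enters $N_{i+1}[z]$ iff $z\in N_i(v)$ in the twin cases and never in the pendant case; $N_{i+1}[w]$ equals $\{v,w\}$ (pendant), $N_i[v]\cup\{w\}$ (true twin), or $N_i(v)\cup\{w\}$ (false twin); and $N_{i+1}[v]$ equals $N_i[v]\cup\{w\}$ (pendant and true twin) or $N_i[v]$ (false twin). From this the change in the true-twin partition in each of the three cases reads off directly, and I would then match it against the corresponding update in the data structure: in the true-twin case $w$ gets the same closed neighborhood as $v$ and every former twin of $v$ also gains $w$, so the class of $v$ simply absorbs $w$ (update $S:=S\cup\{w\}$); in the false-twin case $N_{i+1}[v]=N_i[v]$ while every former twin $z\ne v$ of $v$ now has $N_{i+1}[z]=N_i[v]\cup\{w\}$, so $v$ leaves its class and $w$ (whose closed neighborhood omits $v$ yet equals that of no other vertex) forms a class of its own (split off singletons $S_v,S_w$); and in the pendant case, using that for $i\ge 2$ the vertex $v$ has a neighbor in $G_i$ so $|N_{i+1}[v]|\ge 3>2=|N_{i+1}[w]|$, neither $v$ nor $w$ is a true twin of anyone in $G_{i+1}$, so both become singleton classes (split off $S_v,S_w$), all other sets being untouched.

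The only real work is this case analysis, and the hard part will be verifying both directions in each case: that no vertex which has stopped being a true twin is left in an old set (controlled by knowing exactly which closed neighborhoods acquire $w$), and that no new true-twin pair among the vertices already present in $G_i$ is missed (controlled by noting that two vertices that both gain, or both do not gain, $w$ keep their old relation, whereas one that gains $w$ and one that does not cannot have equal closed neighborhoods). The harmless degenerate case — a set $S$ becoming empty when $v$ was alone in its class — needs only the remark that an empty set is irrelevant to $(\star_i)$. Setting $i=n$ then yields the lemma.
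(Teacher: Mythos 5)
Your proposal is correct and follows essentially the same route as the paper: induction along the pruning sequence, an explicit tabulation of how closed neighborhoods change in each of the three cases (pendant, true twin, false twin), and a check that the data structure's update matches the new true-twin partition, with the sets other than $S$ handled by the observation that all their members gain or fail to gain $w$ together. Your write-up is somewhat more explicit than the paper's on a few minor points (the $v_2$ special case, the cardinality argument in the pendant case, and the empty-set degeneracy), but the underlying argument is the same.
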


\begin{proof}
The lemma is clearly satisfied for $G_1$.
Assume now, by induction, that our data structure satisfies Lemma~\ref{lem:dhDS_sets} for $G_i$ and let $A \neq S$ be a set handled by the data structure.
After adding $w$, either all vertices in $A$ are adjacent to $w$ (if $w$ is a twin of $v$ and vertices in $A$ are adjacent to $v$) or non of them are.
All other neighbors remain the same.
Hence, all sets $A \neq S$ still satisfy Lemma~\ref{lem:dhDS_sets} after adding $w$.
To analyse $S$, we need to distinguish between the three cases of $w$.

\emph{Case~(i): $w$ is pendant to $v$.}
After adding $w$, we have $N_{i + 1}[u] = N_i[u]$ for each $u \in S$, $N_{i + 1}[v] = N_i[v] \cup \{ w \}$ and $N_{i + 1}[w] = \{ v, w \}$.
It follows that $v$ and $w$ have no true twins in $G_{i + 1}$ and all remaining vertices in $S$ (with respect to $G_{i + 1}$) are still true twins.
Hence, by placing $v$ and $w$ into their own respective sets, the data structure still satisfies Lemma~\ref{lem:dhDS_sets}.

\emph{Case~(ii): $w$ is a true twin of $v$.}
In this case, clearly, $N_{i + 1}[u] = N_{i + 1}[v] = N_{i + 1}[w]$ for each $u \in S$.
Hence, by adding $w$ to $S$, the data structure still satisfies Lemma~\ref{lem:dhDS_sets}.

\emph{Case~(iii): $w$ is a false twin of $v$.}
After adding $w$, we have $N_{i + 1}[u] = N_i[u] \cup \{ w \}$ for each $u \in S$, $N_{i + 1}[v] = N_i[v]$, and $N_{i + 1}[w] = N_i(v) \cup \{ w \}$.
It follows that $v$ and $w$ have no true twins in $G_{i + 1}$ and all remaining vertices in $S$ (with respect to $G_{i + 1}$) are still true twins.
Hence, by placing $v$ and $w$ into their own respective sets, the data structure still satisfies Lemma~\ref{lem:dhDS_sets}.
\end{proof}

\begin{lemma}
\label{lem:dhDS_edges}
In the data structure constructed above, there is an edge from a set $A$ to a set $B$ if and only if $N[a] \subset N[b]$ for each $a \in A$ and $b \in B$.
\end{lemma}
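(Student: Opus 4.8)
The plan is to prove, by induction on $i$, the invariant that after the data structure has processed $G_i$ there is a directed edge $A\to B$ if and only if $N_i[a]\subsetneq N_i[b]$ for every $a\in A$ and $b\in B$, where $N_i[\cdot]$ denotes closed neighbourhood in $G_i$; the lemma is the case $i=n$. The base case is $G_1$ (one set, no edges), together with the second vertex $v_2$, which the construction treats as a true twin of $v_1$, so $G_2$ also trivially satisfies the invariant. For the inductive step, assume the invariant for $G_i$ and let $w:=v_{i+1}$ be attached to $v\in V(G_i)$ as a pendant, a true twin, or a false twin. Let $S$ be the set containing $v$ (by Lemma~\ref{lem:dhDS_sets} this is the true-twin class of $v$); by the induction hypothesis the collections $\mathcal{X}$ and $\mathcal{Y}$ of sets with an edge into, resp.\ out of, $S$ consist of \emph{all} sets $X$ with $N_i[x]\subsetneq N_i[v]$, resp.\ \emph{all} sets $Y$ with $N_i[v]\subsetneq N_i[y]$.

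Two elementary observations drive the argument. First, a vertex $u\in V(G_i)$ has $N_{i+1}[u]=N_i[u]\cup\{w\}$ exactly when $u$ becomes adjacent to $w$ --- i.e.\ $u=v$ if $w$ is pendant, $u\in N_i[v]$ if $w$ is a true twin, $u\in N_i(v)$ if $w$ is a false twin --- and $N_{i+1}[u]=N_i[u]$ otherwise; moreover $N_{i+1}[w]$ equals $\{v,w\}$, $N_i[v]\cup\{w\}$, or $N_i(v)\cup\{w\}$ in the three cases. Second, a true-twin class other than $S$ is ``monochromatic'' for adjacency to $v$, so it either entirely gains $w$ as a new neighbour or entirely does not; the same holds for $S$ itself when $w$ is a true twin (then $S\subseteq N_i[v]$). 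Since adding a fresh element $w$ to both sides of a proper containment, or to neither side, preserves it, and since $N_i[x]\subsetneq N_i[v]$ forces $x\sim v$ (so every $X\in\mathcal{X}$, and likewise every $Y\in\mathcal{Y}$, gains $w$ precisely when $S$ does), it follows that the proper-containment order restricted to the old sets that are not split is unchanged: their old edges stay valid and no new edges among them arise; and when $w$ is a true twin the edges at $S$ are unchanged too, matching step~(ii) of the construction. It also follows that re-pointing the old edges $X\to S$ and $S\to Y$ onto the shrunken set $S\setminus\{v\}$ keeps them correct, since members of $\mathcal{X}$ and $\mathcal{Y}$ are adjacent to $v$ and therefore change their closed neighbourhood in the same way as the remaining vertices of $S$.

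It remains to check, in the pendant case~(i) and the false-twin case~(iii), that the new sets $S_v=\{v\}$ and $S_w=\{w\}$ receive exactly the edges prescribed by the construction; one simply computes the relevant closed neighbourhoods from the first observation and reads off the comparabilities. In case~(i), $N_{i+1}[v]=N_i[v]\cup\{w\}$ and $N_{i+1}[w]=\{v,w\}$, the sets strictly below $S_v$ are exactly the members of $\mathcal{X}$, the set $S\setminus\{v\}$, and $S_w$ --- the last using the genuinely-pendant hypothesis that $v$ already has a neighbour in $G_i$, so $\{v,w\}\subsetneq N_i[v]\cup\{w\}$ --- while $S_w$ lies strictly below only $S_v$ and nothing lies strictly below $S_w$; these are precisely the edges $XS_v$, $(S\setminus\{v\})S_v$, $S_wS_v$. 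In case~(iii), $v$ does not gain $w$ but its twins in $S\setminus\{v\}$ and its neighbours do, so $N_{i+1}[v]=N_i[v]$ and $N_{i+1}[w]=N_i(v)\cup\{w\}$; then $S_v$ and $S_w$ both lie strictly below $S\setminus\{v\}$ and below every $Y\in\mathcal{Y}$ (using $N_i(v)\subseteq N_i[v]$ and the monotone change of neighbourhoods), are incomparable to one another (since $v\in N_{i+1}[v]\setminus N_{i+1}[w]$ and $w\in N_{i+1}[w]\setminus N_{i+1}[v]$), and have nothing strictly below them; these are exactly the edges $S_vS$, $S_vY$, $S_wS$, $S_wY$. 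Transitivity of the edge relation is preserved throughout because $\mathcal{X}$ and $\mathcal{Y}$ already contain \emph{all} sets below, resp.\ above, $S$, so re-pointing them at $S_v$/$S_w$ loses no edge. I expect the main obstacle to be the exhaustiveness check in case~(iii) --- confirming that no further set is strictly below $S_v$ or $S_w$ and that these two are incomparable --- which hinges on the facts that $a\in N_i(v)$ forces the set of $a$ to gain $w$ (hence to be not below $S_v$) and that $v,w$ are non-adjacent with different closed neighbourhoods; a secondary subtlety is the degenerate subcase $S=\{v\}$, where $S\setminus\{v\}=\emptyset$ and that set object is discarded together with its edges, which must be reconciled with the target edge set, and the second vertex $v_2$ is the unique exception, handled as a true twin.
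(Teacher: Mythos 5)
Your proposal is correct and follows essentially the same route as the paper's proof: induction along the pruning sequence, the observation that a containment $N_i[a]\subseteq N_i[b]$ (and its negation) is preserved for vertices outside $\{v,w\}$ because such vertices gain $w$ consistently with adjacency to $v$, and then a case analysis on pendant / true twin / false twin verifying that the prescribed edges at $S$, $S_v$, $S_w$ are exactly the correct ones. Your write-up is somewhat more explicit than the paper's on the exhaustiveness checks and the degenerate $S=\{v\}$ and $v_2$ cases, but these are refinements of detail, not a different argument.
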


\begin{proof}
The lemma is clearly satisfied for $G_1$.
Assume now, by induction, that our data structure satisfies Lemma~\ref{lem:dhDS_edges} for $G_i$ and let $a$ and $b$ be two vertices in $G_i$ with $N_i[a] \nsubseteq N_i[b]$.
Clearly, since we only add a new vertex and do not remove any existing vertices, $N_{i + 1}[a] \nsubseteq N_{i + 1}[b]$.
Now assume that $a, b \notin \{ v, w \}$ and $N_i[a] \subseteq N_i[b]$.
If $a$ is adjacent to $w$ in $G_{i + 1}$, then $a$ is adjacent to $v$ in $G_i$.
It follows that $b$ is adjacent to $v$ and, hence, $w$ too.
Therefore, $N_i[a] \subseteq N_i[b]$ implies $N_{i + 1}[a] \subseteq N_{i + 1}[b]$ for all $a, b \notin \{ v, w \}$.
Lemma~\ref{lem:dhDS_edges} is therefore satisfied for each pair of sets $A, B \neq S$.
To analyse the edges of $S$, $S_v$, and $S_w$, we need to distinguish between the three cases of $w$.

\emph{Case~(i): $w$ is pendant to $v$.}
After adding $w$, we have $N_{i + 1}[a] = N_i[a]$ for each $a \notin \{ v, w \}$, $N_{i + 1}[v] = N_i[v] \cup \{ w \}$ and $N_{i + 1}[w] = \{ v, w \}$.
It follows that the added edges $XS_v$, $SS_v$, and $S_wS_v$ are needed to satisfy Lemma~\ref{lem:dhDS_edges}, and that adding any other edge would violate Lemma~\ref{lem:dhDS_edges}.

\emph{Case~(ii): $w$ is a true twin of $v$.}
In this case, clearly, $N_{i + 1}[u] = N_{i + 1}[v] = N_{i + 1}[w]$ for each $u \in S$.
Additionally, for each vertex $a$, $N_i[a] \subseteq N_i[v]$ if and only if $N_{i + 1}[a] \subseteq N_{i + 1}[v]$, and $N_i[v] \subseteq N_i[a]$ if and only if $N_{i + 1}[v] \subseteq N_{i + 1}[a]$.
Hence, the data structure still satisfies the lemma after adding $w$ into $S$.

\emph{Case~(iii): $w$ is a false twin of $v$.}
After adding $w$, we have
$N_{i + 1}[u] = N_i[u] \cup \{ w \}$ for each $u \in S$,
$N_{i + 1}[v] = N_i[v]$,
$N_{i + 1}[w] = N_i(v) \cup \{ w \}$,
$N_{i + 1}[x] = N_i[x] \cup \{ w \}$, and
$N_{i + 1}[y] = N_i[y] \cup \{ w \}$.
It follows that the added edges $S_vS$, $S_vY$, $S_wS$ and $S_wY$ are needed to satisfy Lemma~\ref{lem:dhDS_edges}, and that adding any other edge would violate Lemma~\ref{lem:dhDS_edges}.
\end{proof}

Before discussing the efficiency of our data structure, observe the following.
If $w$ is a pendant vertex or false twin of $v$, we remove $v$ from $S$.
It can therefore happen that $S$ becomes empty.
In that case, instead of removing $v$ from $S$ and creating a new set $S_v$, we leave $v$ in $S$, $S$ becomes $S_v$, and we update the edges accordingly.
That is, we remove all outgoing edges if $w$ is pendant to $v$, or we remove all incoming edges if $w$ is a false twin of $v$.

\begin{lemma}
\label{lem:dhDS_time}
For a given distance-hereditary graph $G$ and a corresponding pruning sequence, the overall runtime to construct the data structure as described above is at most linear with respect to the size of $G$.
\end{lemma}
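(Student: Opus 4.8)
The plan is to charge the work done during the $n$ incremental updates to (i) the $O(n)$ set‑creation events and (ii) the $O(m)$ edges of $G$, by an amortized argument; the cost of a single update is always proportional to the local structural change it makes.

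\textbf{Counting the objects maintained.} By Lemma~\ref{lem:dhDS_sets} the sets present at stage $i$ are exactly the true‑twin classes of $G_i$, so there are never more than $n$ sets, and since each update creates at most two new sets there are $O(n)$ set‑creations over the whole run. For the edges, note the elementary fact that $N[x]\subsetneq N[y]$ forces $x\in N(y)$ (because $x\in N[x]\subseteq N[y]$ and $x\neq y$); combined with Lemma~\ref{lem:dhDS_edges} this shows that a set with representative $y$ has at most $\deg_{G_i}(y)$ incoming edges, so the structure holds at most $\sum_v\deg_G(v)=2m$ edges at all times, and a newly created singleton $\{z\}$ is given at most $2\deg_{G_i}(z)$ incident edges at the moment of its creation.

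\textbf{Cost of one update and amortization.} The true‑twin case changes no edge of the structure — attaching a true twin of $v$ adds the same new vertex to both sides of every existing closed‑neighborhood containment, or to neither — and so costs $O(1)$. In the pendant and false‑twin cases, all work beyond $O(1)$ consists of (a) building the two new singletons $S_v,S_w$ together with their incident edges, and (b) in a cleanup sub‑case, deleting all out‑edges (pendant) or all in‑edges (false twin) of the set reused as $S_v$, of which there are at most $\deg_{G_i}(v)$; in these sub‑cases the only genuinely new set is $S_w$, and the edges of $S_v$ that remain valid are not re‑inserted. Thus every \emph{inserted} edge is incident to a freshly created singleton, the total number of insertions over the run is at most $\sum 2\deg(z)$ taken over created singletons $\{z\}$ (each degree measured at its creation time), and the number of deletions is at most the number of insertions. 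A vertex $z$ is placed into a freshly created singleton either when $z=v_{i+1}$ is processed — once per vertex, with $\deg_{G_{i+1}}(z)\le\deg_G(z)$, for a total of $O(m)$ — or as the target $v$ of a pendant/false‑twin step in which $v$'s current true‑twin class already has size $\ge 2$; the latter can recur for a fixed $z$ only after $z$'s class has, via a true‑twin update, absorbed at least one freshly processed vertex since the previous occurrence, and one charges the $O(\deg_{G_i}(z))$ cost of such a step to the $G$‑edges that have appeared incident to $z$'s class since then. If each edge of $G$ receives only $O(1)$ such charges, the grand total is $O(n+m)$.

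\textbf{Main difficulty.} The delicate point is exactly this last charging: one must verify that the degree of the peeled‑off target $v$ at the moment of the split is, up to a constant factor, dominated by the number of $G$‑edges incident to $v$'s true‑twin class that have appeared since $v$ was last peeled (note that in a false‑twin step $v$'s own neighborhood does not even grow, so the cost must be attributed to the new vertex $v_{i+1}$ whose degree equals $\deg_{G_i}(v)$, while in a pendant step the growth of $v$'s neighborhood by one must be supplemented by the neighbors accumulated by the class in the interim). Everything else — the $O(n)$ bound on sets, the $2m$ bound on edges, the $O(1)$ cost of the true‑twin case, and the inequality ``deletions $\le$ insertions'' — is immediate from Lemmas~\ref{lem:dhDS_sets} and~\ref{lem:dhDS_edges} and the observation that $N[x]\subsetneq N[y]$ implies $x\in N(y)$.
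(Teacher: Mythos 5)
Your framework is sound and in places more careful than the paper's own argument (the observation that $N[x]\subsetneq N[y]$ forces $xy\in E(G)$, hence at most $\deg(z)$ set-edges incident to a freshly created singleton $\{z\}$, and the reduction of all non-constant work to singleton-creation events). But the proof as written has a genuine gap, and you flag it yourself: the conclusion is stated conditionally (``If each edge of $G$ receives only $O(1)$ such charges, the grand total is $O(n+m)$''), and the ``Main difficulty'' paragraph describes, without carrying out, precisely the amortization that makes the lemma non-trivial. A vertex $v$ can be peeled into a fresh singleton $\Theta(n)$ times (alternate true-twin additions to $v$ with pendant/false-twin attachments to $v$), each peeling costing $\Theta(\deg_{G_i}(v))$, so without the charging argument the naive bound is $\Theta(n\cdot m)$, not $O(n+m)$. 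Moreover, the charge target you propose --- ``the $G$-edges that have appeared incident to $z$'s class since then'' --- does not obviously work, as you yourself observe for the false-twin case, where $v$'s neighborhood does not grow at all between peelings.

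The gap is closable, but by a slightly different charge: when $v$ is re-peeled at step $i$, its class has size at least two, so it contains a true twin $u$ of $v$ that was added (as a brand-new vertex) after $v$'s previous peeling, when $v$'s class was the singleton $\{v\}$; since true twins have equal degree, the cost $\deg_{G_i}(v)=\deg_{G_i}(u)\le\deg_G(u)$ can be charged to $u$'s edges in $G$, and each such $u$ joins a singleton class at most once and hence is charged at most once. Together with charging each vertex's first peeling to its own edges, this gives the missing $O(m)$ bound. The paper avoids this bookkeeping entirely: its proof is a short direct count (edges are added only when a set is created, each set-edge is created and removed at most once and is witnessed by an edge of $G$, and each set is created once), at the price of being far terser than your analysis. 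As it stands, though, your proposal establishes the setup but not the theorem.
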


\begin{proof}
When constructing the data structure, we only add edges if a new set is created, an edge between two sets is created and removed at most once, and there is at least one edge between two vertices in $G$ for each edge between two sets.
Additionally, each set is created at most once and contains at least one vertex.
Therefore, the overall size of the data structure is at most as large as the size of $G$ and the runtime to construct it is at most linear.
\end{proof}
}

\anew{\subsection{Computing the injective hull}
\label{sec:dhCompH}}

We next show that one can efficiently compute the injective hull of a distance-hereditary graph~$G$.
Moreover, as a byproduct, we get that $\calH(G)$ is distance-hereditary and $|V(\calH(G))| \in O(|V(G)|)$.
%
One attempt to compute $\calH(G)$ is to add Helly vertices suspending all maximal 2-sets of $G$.
We observe that $G$ has $O(n)$  maximal 2-sets.
Indeed, since $G^2$ is chordal~\cite{BANDELT199537}, there are $O(n)$ maximal cliques in $G^2$ obtainable via a perfect elimination ordering 
of $G^2$, and there is one-to-one correspondence between maximal cliques of $G^2$ and maximal 2-set of $G$.
Since adding a Helly vertex $h$ to suspend a single 2-set in $G$ may create another unsuspended 2-set in $G+\{h\}$, this information alone is insufficient to conclude but gave a promising indication that possibly $|V(\calH(G))| \in O(|V(G)|)$.
Second attempt based on Lemma~\ref{lem:dh-helly} is to suspend all extended squares, which incurs a similar problem that $G+\{h\}$ may have a new unsuspended extended square. Additionally, there can be more extended squares than there are maximal 2-sets.

Using Lemma~\ref{lem:pendantHelly}, Lemma~\ref{lem:trueTwinHelly}, Lemma~\ref{lem:falseTrueTwinHelly}, a pruning sequence of a distance-hereditary graph $G$, \anew{and the data structure described in the previous subsection, we can compute $\calH(G)$ in linear time.} Moreover, as a byproduct, we get that $\calH(G)$ is distance-hereditary, too.
\begin{theorem}
If $G$ is a distance-hereditary graph, then $\calH(G)$ is distance-hereditary and can be computed in $O(n + m)$ time, where $n = |V(G)|$ and $m = |E(G)|$.
\end{theorem}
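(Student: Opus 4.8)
The plan is to process the pruning sequence $(v_1,\dots,v_n)$ of $G$ from the bottom up, maintaining at each step a Helly graph $H_i$ that is the injective hull of $G_i := \langle v_1,\dots,v_i\rangle$, together with the data structure of Section~\ref{sec:dhDataStruc} on the \emph{real} vertices of $H_i$. When we add vertex $v_{i+1}$ (as a pendant or twin of some $v_j$), we form the graph $H_i + \{v_{i+1}\}$ by inserting $v_{i+1}$ with the analogous attachment; the crux is to show that $H_i + \{v_{i+1}\}$ is already Helly, so $\calH(G_{i+1}) = H_i + \{v_{i+1}\}$ and no new Helly vertices are ever needed. If true, this immediately gives $|V(\calH(G))| = n$ for every distance-hereditary $G$ (after adding the Helly vertices of $G_1$, $G_2$, \ldots{} — but since $G_1 = K_1$ is Helly and each extension stays Helly, in fact $\calH(G) = G$ whenever $G$ itself arises this way; more carefully, the Helly vertices added for the smallest non-Helly prefix are $O(1)$ in number, and the total is $O(n)$), and distance-heredity of $\calH(G)$ follows since isometric subgraphs and the prune-extension operations preserve the class (Proposition~\ref{prop:dhgCharacterization}\ref{byForbiddenSubgraphs}).

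The key step is the Helly-preservation claim, and here Lemmas~\ref{lem:pendantHelly}, \ref{lem:trueTwinHelly}, \ref{lem:falseTrueTwinHelly} do the heavy lifting. For a pendant or true-twin extension, Lemmas~\ref{lem:pendantHelly} and~\ref{lem:trueTwinHelly} apply verbatim with no side condition, so $H_i + \{v_{i+1}\}$ is Helly. The delicate case is a \emph{false}-twin extension: Lemma~\ref{lem:falseTrueTwinHelly} requires that the twinned vertex $v_j$ have a dominating vertex $y$ with $N[v_j] \subseteq N[y]$ in the current graph $H_i$. This is exactly why we carry the data structure of Section~\ref{sec:dhDataStruc}: by Lemma~\ref{lem:dhDS_edges} we can test, in time proportional to the out-degree of $v_j$'s set, whether such a $y$ exists. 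The argument that it always does relies on the structure of distance-hereditary graphs: if $v_{i+1}$ is being added as a false twin of $v_j$ in $G_{i+1}$, then in the original graph $G$ (hence in $G_i$, hence in the isometric supergraph $H_i$) there is a vertex dominating $v_j$ — indeed, a false twin is never added to a vertex that is ``locally maximal,'' or if it is, one handles it by the empty-set special case noted before Lemma~\ref{lem:dhDS_time}. One must verify this carefully against the pruning-sequence definition; this is the main obstacle.

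Granting the Helly-preservation claim, the remaining work is bookkeeping and amortized-time analysis. First I would note that computing a pruning sequence of $G$ takes $O(n+m)$ time (Proposition~\ref{prop:dhgCharacterization}\ref{byPruningSequence} is constructive). Then I would walk the sequence, at each step (a) updating the data structure in $O(\deg v_{i+1})$ amortized time by Lemma~\ref{lem:dhDS_time}, and (b) physically inserting $v_{i+1}$ into the running graph as pendant/true-twin/false-twin — an $O(\deg v_{i+1})$ operation since we only copy a neighborhood. Since $\sum_i \deg v_i = O(m)$ and the data-structure updates total $O(n+m)$ by Lemma~\ref{lem:dhDS_time}, the whole construction runs in $O(n+m)$. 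Finally, $\calH(G) = G$ as a graph, so it trivially has $n$ vertices and is distance-hereditary; I would record that the interesting content of the theorem is not the vertex count but the fact — somewhat surprising a priori — that a distance-hereditary graph is its own injective hull, which is what the chain of Lemmas establishes. The one genuinely subtle point to nail down is the side condition in the false-twin case; everything else is routine.
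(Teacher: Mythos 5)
Your overall architecture (walk the pruning sequence, use Lemmas~\ref{lem:pendantHelly}, \ref{lem:trueTwinHelly}, \ref{lem:falseTrueTwinHelly}, and the data structure of Section~\ref{sec:dhDataStruc} to handle the false-twin side condition) matches the paper, but the claim you build everything on is false: it is \emph{not} true that when $v_{i+1}$ is added as a false twin of $v_j$ there is always a vertex $y$ with $N[v_j] \subseteq N[y]$ in the current graph. Take $C_4$: it is distance-hereditary, obtained by adding $d$ as a false twin of $b$ in the path $a\mhyphen b\mhyphen c$, and no vertex of that path dominates $b$. Indeed $C_4$ is not Helly (its four unit disks pairwise intersect with empty common intersection), so your conclusion that $\calH(G)=G$ for every distance-hereditary $G$ cannot be right; $\calH(C_4)=W_4$ has an extra vertex. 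The paper handles exactly this situation by, when the data structure reports that no dominator $y$ exists, first \emph{inserting a new true twin $y$ of $v_j$ into $H$} (a Helly vertex, preserving Hellyness by Lemma~\ref{lem:trueTwinHelly}) and only then adding $v_{i+1}$ as a false twin, so that Lemma~\ref{lem:falseTrueTwinHelly} applies. This is why the resulting $H$ has up to $2n$ vertices and $4m$ edges rather than being $G$ itself.

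Because you believe nothing is ever added, your proposal also omits the second half of the paper's proof entirely: one must show that the constructed Helly graph $H$ is \emph{minimal}, i.e.\ actually equals $\calH(G)$. The paper does this by taking a hypothetical removable vertex $y \in V(H)\setminus V(G)$ with minimal $\sigma_H(y)$, observing that $y$ was added to suspend a specific 2-set $M$ arising from a false-twin step, and deriving a contradiction from any other vertex $u$ that would suspend $M$ in $H\setminus\{y\}$. Without that argument you only have \emph{some} Helly host containing $G$ isometrically, not the injective hull. The runtime bookkeeping in your last paragraph is fine and matches the paper, but the two gaps above (the false existence claim for the dominator, and the missing minimality argument) are essential to the statement being proved.
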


\begin{proof}
We use a pruning sequence $\sigma_G=(v_1,\dots,v_n)$ of $G$.
Let $G_i$ denote the graph induced by $\{v_1,\dots,v_i\}$ and let $H$ be a graph that initially contains only $v_1$; clearly $H$ is Helly, distance-hereditary, and contains $G_1$ as an isometric subgraph.
We iterate over the remaining vertices $v_i \in \sigma_G$, $i \ge 2$, to carefully attach new vertices to $H$ as pendants/twins to old vertices in $H$, thereby constructing for $H$ a pruning sequence $\sigma_H$.
Then, by Proposition~\ref{prop:dhgCharacterization}\ref{byPruningSequence}, $H$ is distance-hereditary.
\anew{Additionally, we maintain a data structure for $H$ as described in Section~\ref{sec:dhDataStruc} above.}
For clarity, denote by $H_k$ the graph induced by $\{u \in V(H) : \sigma_H(u) \le k\}$.
We claim that the resulting graph \anew{$H_k = \calH(G)$}, where $k={|V(H)|}$.
There are two cases.

{\em Case 1. Next vertex $v_i \in \sigma_G$ is a pendant or true twin to some vertex $v_j$ in $G_i$.}
Set \anew{$H := H + \{v_i\}$} (the graph obtained by adding $v_i$ as a pendant or true twin to $v_j$ in $H$).
By Lemma~\ref{lem:pendantHelly}, and Lemma~\ref{lem:trueTwinHelly}, $H$ remains Helly.
As $H$ is distance-hereditary and contains $G_i$ as an induced subgraph, $G_i$ is isometric in $H$.

{\em Case 2. Next vertex $v_i \in \sigma_G$ is a false twin to some vertex $v_j$ in $G_i$.}
\anew{%
Use the data structure for $H$ to determine if $H$ contains a vertex $y \neq v_j$ with $N[v_j] \subseteq N[y]$ as follows.
Let $S$ be the set containing $v_j$.
By Lemma~\ref{lem:dhDS_sets} and Lemma~\ref{lem:dhDS_edges}, such a $y$ exits if and only if $|S| > 1$ or $S$ has an outgoing edge.
If $H$ contains no such $y$, then we first create a new true twin $y$ of $v_j$ in $H$ and set $H := H + \{ y \}$.
Next, set $H := H + \{v_i\}$ (the graph obtained by adding $v_i$ as a false twin to $v_j$ in $H$).
By Lemma~\ref{lem:falseTrueTwinHelly} $H$ remains Helly.
As $H$ is distance-hereditary and contains $G_i$ as an induced subgraph, $G_i$ is isometric in $H$.

A pruning sequence for $G$ can be constructed in linear time~\cite{BANDELT1986182,doi:10.1137/0217032,DAMIAND200199}.
By Lemma~\ref{lem:dhDS_time}, the data structure for $H$ can be constructed in linear time, too.
Checking if $H$ contains a vertex $y \neq v_j$ with $N[v_j] \subseteq N[y]$ (case~2) can then be done in constant time.
Note that, for each vertex $v_i \in V(G)$, there is at most one vertex $y_i$ added to $H$.
Similarly, for each edge $v_iv_j$ where $v_i$ is a false twin to $u$ in $G_i$, we add at most three additional edges ($y_iv_i$, $y_iv_j$, and $y_iu$) to $H$.
Therefore, $H$ has at most $2n$ vertices and $4m$ edges and can be constructed in $O(n + m)$ time.
}

We finally claim that $H$ is a \emph{minimal} Helly graph that contains $G$ as an isometric subgraph, i.e., $H = \calH(G)$.
By contradiction, assume there is vertex $y \in V(H) \setminus V(G)$ with minimal $\sigma_H(y)$ such that $H \setminus \{y\}$ is Helly.
Since $y \notin V(G)$, by algorithm construction, there is a vertex $v_i \in V(G_i)$ which is a false twin to $v_j$ in $G_i$, but no vertex in $V(H_{\sigma_H(v_i)}) \setminus \{y\}$ suspends the 2-set $M= D_{G_i}(v_j,1) \cup \{v_i\}$ in $H_{\sigma_H(v_i)} \setminus \{y\}$. 
Since $H \setminus \{y\}$ is Helly, there is a vertex $u$ with minimal $\sigma_H(u)$ such that $u$ suspends $M$ in $H \setminus \{y\}$, where $\sigma_H(u) > \sigma_H(y)$.
Each $v \in M$ has $\sigma_H(v) < \sigma_H(u)$.
Let $u$ be a pendant/twin to vertex $z$ in the graph $H_{\sigma_H(u)}$, where $\sigma_H(z) < \sigma_H(u)$.
Since $M \subseteq D_{H_{\sigma_H(u)}}(u,1)$ and $|M| \ge 4$, clearly, $u$ is not pendant to $z$.
Hence, $u$ is a twin to $z$ and therefore, $M \subseteq D_{H_{\sigma_H(u)}}(z,1)$.
Thus, $z$ suspends $M$, a contradiction with the minimality of $\sigma_H(u)$.
\end{proof}

\section{Graphs with exponentially large injective hulls}\label{sec:exponentialInjectiveHull}
We show that several restrictive graph classes, including split graphs, cocomparability graphs, \hnew{bipartite graphs}, and consequently graphs of bounded hyperbolicity, graphs of bounded chordality, graphs of bounded tree-length or tree-breadth, and graphs of bounded diameter can have injective hulls that are exponential in size. In particular, there is a graph $G$ of that class such that $|V(\calH(G))| \in \Omega (a^n)$ for some  constant $a>1$ and $n=|V(G)|$. 

We will use the following lemma to obtain a lower bound on the number of vertices in the injective hull of a particular graph.
Recall that a  set $S \subseteq V(G)$ is said to be a \emph{2-set} if all vertices of $S$ have pairwise distance at most 2.
\begin{lemma}\label{lem:unsuspendedTwoSets}
If $G$ has at least $k$ unsuspended maximal 2-sets, then $|V(\calH(G)) \setminus V(G)| \ge k$.
\end{lemma}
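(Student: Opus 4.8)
The plan is to exhibit an injective map from the set of unsuspended maximal 2-sets of $G$ into $V(\calH(G)) \setminus V(G)$. Let $M_1, \dots, M_k$ be the unsuspended maximal 2-sets of $G$. For each $M_j$, I would first argue that $\calH(G)$ contains a vertex $h_j$ that is universal to $M_j$ (equivalently, suspends $M_j$). Indeed, since every $M_j$ is a 2-set, the unit disks $\{D_G(w,1) : w \in M_j\}$ pairwise intersect in $G$ and hence in $\calH(G)$; because $\calH(G)$ is Helly, these disks have a common vertex $h_j \in V(\calH(G))$, i.e.\ $d_{\calH(G)}(h_j, w) \le 1$ for every $w \in M_j$. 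Note $h_j \notin M_j$ is forced: if $h_j \in M_j$ then $h_j$ would suspend $M_j$ in $\calH(G)$, and since $M_j \subseteq V(G)$ and $G$ is isometric in $\calH(G)$, that vertex would also suspend $M_j$ in $G$, contradicting that $M_j$ is unsuspended. More importantly, $h_j \notin V(G)$: any vertex of $G$ adjacent to all of $M_j$ (or lying in $M_j$) would suspend $M_j$ in $G$ since $G$ is an isometric subgraph, again a contradiction. Hence each $h_j$ is a Helly vertex.

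It remains to show the $h_j$ can be chosen pairwise distinct, which is where the maximality hypothesis enters. Suppose $h := h_i = h_j$ for $i \ne j$. Then $h$ is adjacent (in $\calH(G)$) to every vertex of $M_i \cup M_j$, so $M_i \cup M_j \subseteq D_{\calH(G)}(h,1)$, which means all vertices of $M_i \cup M_j$ have pairwise distance at most $2$ in $\calH(G)$, hence also in $G$ (isometry). Thus $M_i \cup M_j$ is a 2-set of $G$ strictly containing both $M_i$ and $M_j$, contradicting the maximality of $M_i$ (and of $M_j$). Therefore $h_1, \dots, h_k$ are $k$ distinct Helly vertices, giving $|V(\calH(G)) \setminus V(G)| \ge k$.

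The only subtle point — and the main thing to get right — is the distinctness argument: a single Helly vertex can suspend at most one maximal 2-set, precisely because the union of the 2-sets suspended by one vertex is itself a 2-set. Everything else (existence of $h_j$ via the Helly property, and $h_j$ being a genuine Helly vertex rather than a real one) follows directly from $\calH(G)$ being Helly, $G$ being isometric in $\calH(G)$, and the definition of ``unsuspended.'' I would present the existence and the distinctness as two short paragraphs, mirroring the structure above.
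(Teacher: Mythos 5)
Your proof is correct and follows the same strategy as the paper's: each unsuspended maximal 2-set yields, via the Helly property and the isometry of $G$ in $\calH(G)$, a suspending vertex that cannot be real, and these vertices are pairwise distinct. The only difference is cosmetic --- the paper pins down distinctness by writing out the explicit coordinate vector $h(v_i)=1$, $h(x)=d_G(x,S)+1$ of the unique Helly vertex attached to each 2-set, whereas you argue directly that one vertex suspending two distinct maximal 2-sets would force their union to be a 2-set, contradicting maximality; your version is, if anything, more self-contained.
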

\begin{proof}
Each unsuspended maximal 2-set $S = \{v_1,\dots,v_\ell\}$ corresponds to a unique family of pairwise intersecting disks $\{D(v_i,1) : v_i \in S\}$ that have no common intersection in $G$.
As $\calH(G)$ \hnew{is the smallest Helly graph into which $G$ isometrically embeds}, then for each $S$ there is a unique Helly vertex $h \in V(\calH(G))$ universal to \hnew{maximal 2-set} $S$ in $\calH(G)$,
i.e., $h(v_i) = 1$ for each $v_i \in S$ and $h(x)=d_G(x,S)+1$ for each $x \in V(G) \setminus S$ (see Section~\ref{sec:injectiveHullProperties}).
\end{proof}

\subsection{Split graphs}
A graph is a \emph{split graph} if there is a partition of its vertices into a clique and an independent set~\cite{foldes1976split,tyshkevichSplit}. 
We construct a special split graph $G$ as follows.
Let $X=(x_1,x_2,\dots,x_k)$ be an independent set and let $Y=(y_1,y_2,\dots,y_k)$ be an independent set.
Let also $M=(u_1,v_1,w_1,z_1, u_2,v_2,w_2,z_2, \dots, u_k,v_k,w_k,z_k)$ be a clique partitioned into $k$ complete graphs $K_4$.
For each integer $i \in [1,k]$, let $x_i$ be adjacent to $u_i$ and $v_i$, and let $y_i$ be adjacent to $w_i$ and $z_i$.
Additionally, for all distinct integers $i,j \in [1,k]$, let $x_i$ be adjacent to $u_j$ and $z_j$, and let $y_i$ be adjacent to $w_j$ and $v_j$.
See Figure~\ref{fig:chordal} for an illustration.
By construction, each vertex $x_i \in X$ is within distance 2 of every vertex in the graph except $y_i$.
Every shortest $(x_i,y_i)$-path goes through $M$, but $y_i$ and $x_i$ have \hnew{no common neighbor in $M$.}
However, each $x_i$ and $y_j$ share a common vertex $v_i$.
Observe that the resulting graph $G$ has the following distance properties:
\begin{enumerate}[noitemsep, nolistsep]
  \item[-] $\forall x_i \in X, \ \forall m \in M, \ d_{G}(x_i,m) \leq 2$ via common neighbor $u_i$;
  \item[-] $\forall y_i \in Y, \ \forall m \in M, \ d_{G}(y_i,m) \leq 2$ via common neighbor $w_i$;
  \item[-] $\forall i,j \in [1,k], \ i \ne j, \ d_{G}(x_i,y_j) \leq 2$ via common neighbor $z_j$;
  \item[-] $\forall i,j \in [1,k], \ i \ne j, \ d_{G}(x_i,x_j) = 2$ via common neighbor $u_j$;
  \item[-] $\forall i,j \in [1,k], \ i \ne j, \ d_{G}(y_i,y_j) = 2$ via common neighbor $w_j$;
  \item[-] $\forall i \in [1,k], \ d_{G}(x_i,y_i) = 3$ because $y_i$ and $x_i$ have \hnew{no common neighbor in $M$}.
\end{enumerate}

\begin{figure}[h]
\begin{minipage}{\textwidth}
\begin{center}
\includegraphics[scale=0.8]{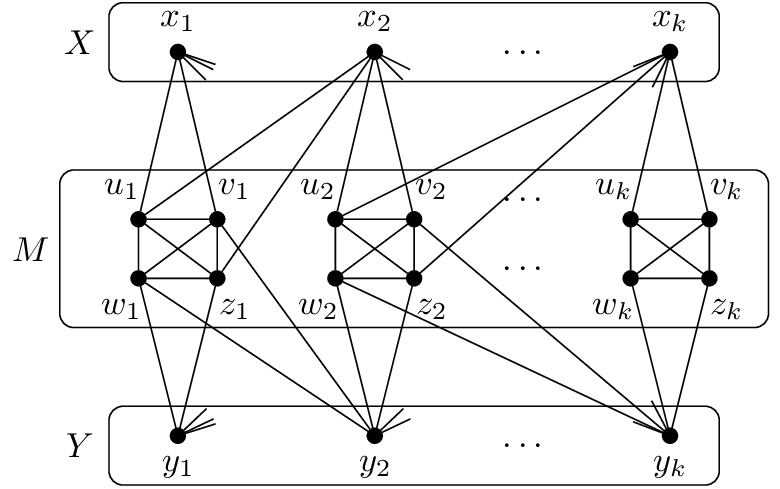}
\captionof{figure}{$G$ is a split graph that requires exponentially many new Helly vertices. For readability, some edges are not shown. $X$ and $Y$ are independent sets and $M$ is a clique of $k$ complete graphs $K_4$.}\label{fig:chordal}
\end{center}
\end{minipage}
\end{figure}

\begin{theorem} \label{thm:splitGraphSize}
There is a split graph $G$ such that $|V(\calH(G))| \ge 2^{n/6} + 2n/3 - 2$, where $n=|V(G)|$.
\end{theorem}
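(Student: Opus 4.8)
The construction above builds a split graph $G$ on $n = 6k$ vertices (independent sets $X$, $Y$ of size $k$ each, plus a clique $M$ of size $4k$), so I want to show $|V(\calH(G))| \ge 2^{k} + \tfrac{2n}{3} - 2 = 2^{n/6} + \tfrac{2n}{3} - 2$. By Lemma~\ref{lem:unsuspendedTwoSets}, it suffices to exhibit at least $2^{k}$ unsuspended maximal $2$-sets in $G$ — together with the $n = 6k$ real vertices of $G$ this would already give $|V(\calH(G))| \ge 2^{n/6} + n$, which is stronger than claimed, so I have some slack; in fact the cleanest route is to count unsuspended maximal $2$-sets directly and add $|V(G)| = n$, but to match the stated bound exactly I will count a family of $2^{k}$ such $2$-sets and note that the real vertices contribute at least $\tfrac{2n}{3} - 2 = 4k - 2$ additional vertices disjoint from the Helly vertices.

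The core step is the following claim: for every subset $A \subseteq [1,k]$, the set
\[
  S_A \;=\; M \,\cup\, \{\, x_i : i \in A \,\} \,\cup\, \{\, y_i : i \notin A \,\}
\]
is a maximal $2$-set of $G$ that is \emph{not} suspended. First I check it is a $2$-set: within $M$ all distances are $1$; from the distance list, $d(x_i, m) \le 2$ and $d(y_i, m) \le 2$ for all $m \in M$; $d(x_i, x_j) = 2$ and $d(y_i, y_j) = 2$ for $i \neq j$; and crucially $d(x_i, y_j) \le 2$ for $i \neq j$ — which holds precisely because $S_A$ never contains both $x_i$ and $y_i$ for the same index $i$, so the only dangerous pair $\{x_i, y_i\}$ at distance $3$ never arises. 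Next, maximality: any vertex outside $S_A$ is some $x_i$ with $i \notin A$ or some $y_i$ with $i \in A$; adding $x_i$ ($i \notin A$) to $S_A$ puts $x_i$ together with $y_i \in S_A$ at distance $3$, violating the $2$-set property, and symmetrically for $y_i$. Hence no vertex can be added and $S_A$ is maximal. Finally, $S_A$ is unsuspended: a suspending vertex $c$ would have to be adjacent to every vertex of $M$, hence $c \in M$ (since $M$ is a maximal clique — $X, Y$ are independent, so no vertex of $X \cup Y$ is adjacent to all of $M$, and in a split graph $M$ is maximal among cliques); but every vertex of $M$ fails to be adjacent to some vertex of $X \cup Y$ in $S_A$ (e.g.\ $u_i \sim x_i$ but $u_i \not\sim y_i$ — one verifies from the adjacency rules that each of $u_i, v_i, w_i, z_i$ misses at least one relevant endpoint, and in particular for the index-$i$ block the vertex of $M$ in question is non-adjacent to whichever of $x_i, y_i$ lies in $S_A$ by the "no common neighbor in $M$" property). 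So $S_A$ is not suspended.

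Since distinct $A$ give distinct $S_A$ (they differ on which of $x_i$ vs.\ $y_i$ is present), this yields $2^{k}$ distinct unsuspended maximal $2$-sets, and by Lemma~\ref{lem:unsuspendedTwoSets} at least $2^{k}$ Helly vertices in $\calH(G)$. Adding the $|V(G)| = n = 6k$ real vertices gives $|V(\calH(G))| \ge 2^{k} + 6k = 2^{n/6} + n \ge 2^{n/6} + \tfrac{2n}{3} - 2$, completing the proof. The main obstacle — and the step I would write most carefully — is the verification that each $S_A$ is genuinely unsuspended: this rests on showing that the \emph{only} candidate suspending vertices live in the clique $M$ (using that $M$ is a maximal clique of the split graph and $X \cup Y$ is independent), and then checking via the explicit adjacency pattern that for the index $i$ whose block determines membership, the relevant $M$-vertices are non-adjacent to the chosen endpoint $x_i$ or $y_i$ — i.e.\ that $x_i$ and $y_i$ truly have no common neighbor in $M$, which is exactly the defining feature built into the construction. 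Everything else (the $2$-set and maximality checks) is a direct reading of the six bulleted distance facts.
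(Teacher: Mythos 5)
There is a genuine gap: your central claim that \emph{every} set $S_A$, $A \subseteq [1,k]$, is unsuspended is false, and it fails precisely for the $2k+2$ ``degenerate'' choices of $A$. Unpacking the adjacency rules, each $u_j$ is adjacent to \emph{all} of $X$ (to $x_j$ by the first rule and to every $x_i$, $i \neq j$, by the second) and to all of $M$, so $S_{[1,k]} = M \cup X$ \emph{is} suspended by any $u_j$; symmetrically $S_\emptyset = M \cup Y$ is suspended by any $w_j$. Likewise $v_i$ is adjacent to $x_i$, to every $y_j$ with $j \neq i$, and to all of $M$, so $S_{\{i\}}$ is suspended by $v_i$, and $z_i$ suspends $S_{[1,k]\setminus\{i\}}$. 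Your parenthetical justification (``$u_i \sim x_i$ but $u_i \not\sim y_i$'') does not rule these cases out, because when $i \in A$ the vertex $y_i$ is not in $S_A$ at all, so non-adjacency to $y_i$ is irrelevant. This is exactly the step you flagged as the one to ``write most carefully,'' and it is where the argument breaks: excluding $v_i$ as a suspender genuinely requires that $S_A$ contain \emph{two} vertices of $X$ (so that some $x_j \in S_A$ with $j \neq i$ is non-adjacent to $v_i$), and excluding $z_i$ requires two vertices of $Y$.

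The paper's proof therefore restricts to maximal $2$-sets containing at least two vertices from each of $X$ and $Y$, proves unsuspendedness only for those, and counts $2^k - 2k - 2$ of them (discarding the two one-sided sets and the $k+k$ sets with a single $x_i$ or a single $y_i$). Your overall architecture --- Lemma~\ref{lem:unsuspendedTwoSets} plus the bijection between maximal $2$-sets and subsets $A$ --- is the same as the paper's, and the corrected count still gives exactly the stated bound, since $(2^k - 2k - 2) + 6k = 2^{n/6} + 2n/3 - 2$; but as written your intermediate claim of $2^k$ unsuspended sets (and the ensuing ``stronger'' bound $2^{n/6} + n$) is not established by your argument. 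To repair the proof, restrict to $2 \le |A| \le k-2$ and carry out the four-way case analysis on a putative suspender $m \in \{u_i, v_i, w_i, z_i\}$ as above.
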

\begin{proof}
  Clearly, $G$ is a split graph 
  with independent set $X \cup Y$ and clique $M$.

  We first claim that $G$ described above has $2^k$ maximal 2-sets, where $k=n/6$.
  Let $S$ be a maximal 2-set in $G$.
  Since all vertices are within distance at most 2 from $M$, then $M \subset S$.
  It remains only to observe that for each $i \in [1,k]$, either $x_i \in S$ or $y_i \in S$, but not both since $d_{G}(x_i,y_i)=3$.

  We next claim that any maximal 2-set $S$ that contains at least two vertices from $X$ and at least two vertices from $Y$ is unsuspended.
  By contradiction, suppose a vertex $m \in V(G)$ suspends $S$.
  As $X$ and $Y$ are independent sets, necessarily $m \in M$.
  Thus, $m \in \{u_i, w_i, v_i, z_i\}$ for some $i \in [1,k]$.
  However, for all $j \in [1,k]$, $d_G(u_i,y_j)=2$ and $d_G(w_i,x_j)=2$ holds.
  Hence, $m \ne u_i$ and $m \ne w_i$.
  As there are at least two vertices of $X$ in $S$, there is an $x_j \in S$ such that $d_G(v_i,x_j)=2$.
  As there are at least two vertices of $Y$ in $S$, there is a $y_j \in S$ such that $d_G(z_i,y_j)=2$.
  Thus, $m \ne v_i$ and $m \ne z_i$, a contradiction with the choice of $m$.

  Moreover, there are at least $2^k - 2k - 2$ unsuspended maximal 2-sets in $G$. 
  Observe that only $2$ maximal 2-sets $S$ have no $x_i \in S$ or have no $y_i \in S$.
  There are $k$ maximal 2-sets which have only one $x_i \in S$ (one $i \in [1,k]$ is reserved for $x_i \in S$ and all other $j \in [1,k]$, $j \neq i$, have $y_i \in S$).
  Similarly, there are $k$ maximal 2-sets which have only one $y_i \in S$.
  By Lemma~\ref{lem:unsuspendedTwoSets}, $|V(\calH(G)) \setminus V(G)| \ge 2^k - 2k - 2$.
  Including the $6k$ vertices of $V(G)$, one obtains $|V(\calH(G))| \ge 2^k +4k - 2$.
\end{proof}

We remark that split graphs are chordal graphs.
Additionally, chordal graphs are 1-hyperbolic\cite{Wu2011}. 
$G$ also has tree-length $tl(G) \leq 1$\cite{DOURISBOURE20072008} and tree-breadth $tb(G) \leq 1$\cite{DBLP:journals/algorithmica/DraganK14}. 

\begin{corollary}
Split graphs, chordal graphs, $\alpha_1$-metric graphs, 
1-hyperbolic graphs, graphs with $tl(G) \le 1$, $tb(G) \le 1$, and graphs with $diam(G) \le 3$ can have exponentially large injective hulls.
Specifically, there is a graph $G$ of that class with $|V(\calH(G))| \in \Omega(a^n)$, where $a>1$ and $n=|V(G)|$.
\end{corollary}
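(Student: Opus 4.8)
The plan is to reuse the very graph $G$ constructed for Theorem~\ref{thm:splitGraphSize} and simply verify that this single graph lies in each of the listed classes; the exponential lower bound on $|V(\calH(G))|$ then comes for free from that theorem. First I would recall that, by construction, $G$ is a split graph with clique $M$ and independent set $X \cup Y$, and split graphs contain no induced cycle of length at least $4$, so $G$ is chordal. Being chordal, $G$ in particular contains no induced $C_5$ and no induced wheel $W_k$ with $k \ge 5$, hence by Lemma~\ref{lem:whenAlphaMetricIsChordal} (the ``only if'' direction) it is an $\alpha_1$-metric graph. For hyperbolicity I would invoke the known fact~\cite{Wu2011} that every chordal graph is $1$-hyperbolic, so $\delta(G) \le 1$.

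Next I would check the width-type parameters. Since $G$ is chordal it admits a clique tree, i.e.\ a tree decomposition in which every bag induces a clique; each such bag has diameter $1$ in $G$ and is contained in the unit disk around any of its vertices, so $tl(G) \le 1$~\cite{DOURISBOURE20072008} and $tb(G) \le 1$~\cite{DBLP:journals/algorithmica/DraganK14}. Finally, the bound $diam(G) \le 3$ is immediate from the explicit distance list established just before Theorem~\ref{thm:splitGraphSize}: every pair of vertices of $G$ is at distance at most $2$ except the pairs $x_i, y_i$, which are at distance exactly $3$.

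With all the class memberships in hand, Theorem~\ref{thm:splitGraphSize} gives $|V(\calH(G))| \ge 2^{n/6} + 2n/3 - 2$, where $n = |V(G)| = 6k$; setting $a := 2^{1/6} > 1$ then yields $|V(\calH(G))| \in \Omega(a^n)$, as claimed. The same graph $G$ therefore simultaneously witnesses the statement for split graphs, chordal graphs, $\alpha_1$-metric graphs, $1$-hyperbolic graphs, graphs with $tl(G) \le 1$ or $tb(G) \le 1$, and graphs with $diam(G) \le 3$. There is essentially no genuine obstacle here: the only mild point requiring care is to confirm that a single construction lands in every one of these classes at once, rather than cobbling together different examples for different classes, and that is exactly what the verifications above accomplish.
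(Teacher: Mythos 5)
Your proposal is correct and follows essentially the same route as the paper: the paper likewise reuses the single split graph from Theorem~\ref{thm:splitGraphSize} and simply notes the chain of class memberships (split $\Rightarrow$ chordal $\Rightarrow$ $\alpha_1$-metric, $1$-hyperbolic, $tl,tb\le 1$) together with the explicit distance list giving $diam(G)\le 3$. Your appeal to the ``only if'' direction of Lemma~\ref{lem:whenAlphaMetricIsChordal} for the $\alpha_1$-metric claim is a valid and slightly more explicit justification than the paper's brief remark.
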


\subsection{Cocomparability graphs}
Cocomparability graphs are exactly the graphs which admit a \emph{cocomparability ordering}~\cite{DAMASCHKE199267}, i.e., an ordering $\sigma = [v_1,v_2,\dots,v_n]$ of its vertices such that if $\sigma(x) < \sigma(y) < \sigma(z)$ and $xz \in E(G)$, then $xy \in E(G)$ or $yz \in E(G)$ must hold. Cocomparability graphs form a subclass of AT-free graphs.

A special cocomparability graph $G$ is constructed as follows.
Let $X=(x_1,x_2,\dots,x_k)$ be a clique and $Y=(y_1,y_2,\dots,y_k)$ be a clique.
Let also $M=(u_1,v_1,w_1,z_1, u_2,v_2,w_2,z_2, \dots, u_k,v_k,w_k,z_k)$ be a clique partitioned into $k$ complete graphs $K_4$.
For each integer $i \in [1,k]$, let $x_i$ be adjacent to $u_i$ and $v_i$, and let $y_i$ be adjacent to $w_i$ and $z_i$.
Additionally, for all distinct integers $i,j \in [1,k]$, let $x_i$ be adjacent to $u_j$ and $z_j$, and let $y_i$ be adjacent to $w_j$ and $v_j$.
See Figure~\ref{fig:cocomparability} for an illustration. We emphasize that the key difference between graph $G$ described above and the chordal graph construction in Figure~\ref{fig:chordal} is that, here, $X$ and $Y$ are cliques.

By construction, each vertex $x_i \in X$ is within distance 2 of every vertex in the graph except $y_i$.
Every shortest $(x_i,y_i)$-path goes through $M$, but $y_i$ and $x_i$ have \hnew{no common neighbor in $M$.}
However, each $x_i$ and $y_j$ share a common vertex $v_i$.
Observe that the resulting graph $G$ has the following distance properties:
\begin{enumerate}[noitemsep, nolistsep]
  \item[-] $\forall i,j \in [1,k], \ i \neq j, \ d_G(x_i,x_j) = d_G(y_i,y_j) = 1$;
  \item[-] $\forall x_i \in X, \ \forall m \in M, \ d_{G}(x_i,m) \leq 2$ via common neighbor $u_i$;
  \item[-] $\forall y_i \in Y, \ \forall m \in M, \ d_{G}(y_i,m) \leq 2$ via common neighbor $w_i$;
  \item[-] $\forall i,j \in [1,k], \ i \ne j, \ d_{G}(x_i,y_j) \leq 2$ via common neighbor $z_j$;
  \item[-] $\forall i \in [1,k], \ d_{G}(x_i,y_i) = 3$ because $y_i$ and $x_i$ have \hnew{no common neighbor.}
\end{enumerate}

\begin{figure}[h]
\begin{minipage}{\textwidth}
\begin{center}
\includegraphics[scale=0.8]{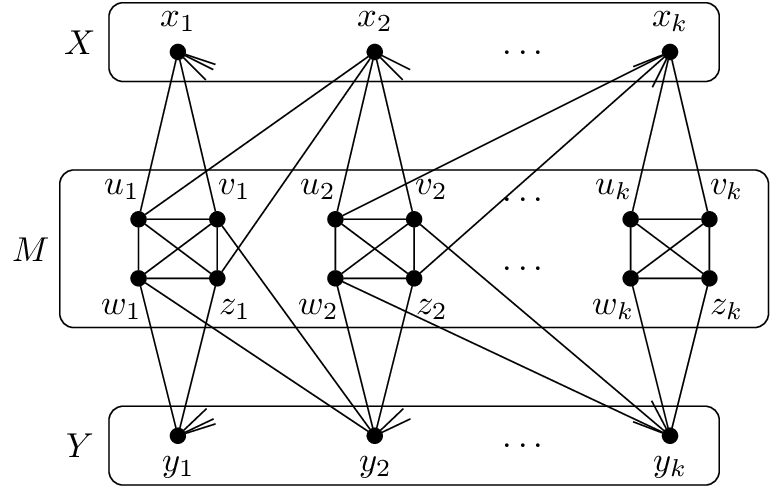}
\captionof{figure}{$G$ is a cocomparability graph that requires exponentially many new Helly vertices. For readability, some edges are not shown. $X$ and $Y$ are each cliques and $M$ is a clique of $k$ complete graphs $K_4$.}\label{fig:cocomparability}
\end{center}
\end{minipage}
\end{figure}

\begin{theorem} \label{thm:cocomparabilitySize}
There is a cocomparability graph $G$ such that $|V(\calH(G))| \ge 2^{n/6} + 2n/3 - 2$, where $n = |V(G)|$.
\end{theorem}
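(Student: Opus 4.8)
The plan is to mirror the split-graph argument of Theorem~\ref{thm:splitGraphSize} essentially verbatim, since the construction for cocomparability graphs differs only in that $X$ and $Y$ are cliques rather than independent sets, and this change does not affect the distance properties that matter for counting unsuspended maximal $2$-sets. First I would verify that $G$ is indeed a cocomparability graph by exhibiting an explicit cocomparability ordering: one natural choice is to place all of $X$ first (in the order $x_1, \dots, x_k$), then interleave the $K_4$ blocks of $M$ in the order $u_i, v_i, w_i, z_i$ for $i = 1, \dots, k$, and finally all of $Y$ (in the order $y_1, \dots, y_k$); one then checks that whenever $\sigma(a) < \sigma(b) < \sigma(c)$ with $ac \in E(G)$, at least one of $ab, bc$ is an edge. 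The only nonadjacent pairs at distance more than $2$ are the pairs $x_i, y_i$, and the ordering keeps each such pair far apart with all of $M$ between them, so no violating triple arises.

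Next I would carry out the $2$-set count exactly as in the split-graph proof. Using the stated distance properties, every vertex of $G$ is within distance $2$ of every vertex of $M$, so $M$ is contained in every maximal $2$-set; and since $d_G(x_i, y_i) = 3$, a maximal $2$-set contains exactly one of $x_i, y_i$ for each $i$. Hence there are exactly $2^k$ maximal $2$-sets, where $k = n/6$. The argument that a maximal $2$-set $S$ containing at least two vertices of $X$ and at least two vertices of $Y$ is unsuspended is identical: a suspending vertex $m$ would have to lie in $M$ (here because $X$ and $Y$ are cliques but $X \cup Y$ is still not a clique — each $x_i$ is nonadjacent to $y_i$ — so no vertex of $X \cup Y$ can be universal to a set meeting both $X$ and $Y$ in the ``wrong'' indices), and then the same case analysis on $m \in \{u_i, v_i, w_i, z_i\}$ rules out each possibility: $d_G(u_i, y_j) = 2$, $d_G(w_i, x_j) = 2$, and for the two remaining candidates one uses a second vertex of $X$ (resp.\ $Y$) in $S$ to get $d_G(v_i, x_j) = 2$ (resp.\ $d_G(z_i, y_j) = 2$). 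Subtracting the at most $2k + 2$ maximal $2$-sets that fail to meet both $X$ and $Y$ in at least two indices gives at least $2^k - 2k - 2$ unsuspended maximal $2$-sets. By Lemma~\ref{lem:unsuspendedTwoSets}, $|V(\calH(G)) \setminus V(G)| \ge 2^k - 2k - 2$, and adding back the $6k = n$ vertices of $G$ yields $|V(\calH(G))| \ge 2^{n/6} + 2n/3 - 2$.

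I do not expect a genuine obstacle here; the main thing requiring care is the cocomparability ordering verification, since that is the one place where the clique-versus-independent-set distinction actually plays a role, and one must be careful that edges like $x_i u_j$ and $y_i v_j$ (for $i \neq j$) together with the intra-$X$ and intra-$Y$ edges do not create a forbidden triple under the proposed ordering. A secondary point of attention is making sure the distance property $d_G(x_i, y_i) = 3$ still holds now that $X$ and $Y$ are cliques — it does, because any $x_i$-$y_i$ path must still pass through $M$ and $x_i, y_i$ have no common neighbor, so the added edges within $X$ and within $Y$ give no shortcut. Everything else is a line-by-line transcription of the split-graph proof.

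\begin{proof}
We first check that $G$ is a cocomparability graph by exhibiting a cocomparability ordering $\sigma$: list the vertices of $X$ as $x_1, \dots, x_k$, then the blocks of $M$ in the order $u_1, v_1, w_1, z_1, u_2, \dots, u_k, v_k, w_k, z_k$, and finally the vertices of $Y$ as $y_1, \dots, y_k$. Suppose $\sigma(a) < \sigma(b) < \sigma(c)$ and $ac \in E(G)$ while $ab \notin E(G)$ and $bc \notin E(G)$. Since $X$ and $Y$ are cliques and $M$ is a clique, $a$, $b$, $c$ cannot all lie in the same part. If $a \in X$ and $c \in M$, then $c \in \{u_i, v_i, w_i, z_i\}$ for some $i$; by construction the only vertices of $M$ nonadjacent to $a = x_p$ are $w_p$ and, for $j \neq p$, $v_j$ and $w_j$ — in every case $b \in M$ would have to be nonadjacent to $c$ too, but $M$ is a clique, a contradiction; and $b \in X$ forces $ab \in E(G)$. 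The remaining cases ($a \in X, c \in Y$: impossible since the only nonadjacent such pairs are $x_i,y_i$, with all of $M$ between them, so $b \in M$ and $b$ is adjacent to one of $a,c$; $a \in M, c \in Y$: symmetric to the first case) are handled identically. Hence $\sigma$ is a cocomparability ordering and $G$ is a cocomparability graph.

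Using the distance properties listed above, every vertex of $G$ lies within distance $2$ of every vertex of $M$, so every maximal $2$-set contains $M$; and since $d_G(x_i, y_i) = 3$, a maximal $2$-set contains exactly one of $x_i, y_i$ for each $i \in [1,k]$. Thus $G$ has exactly $2^k$ maximal $2$-sets, where $k = n/6$.

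We claim any maximal $2$-set $S$ containing at least two vertices of $X$ and at least two vertices of $Y$ is unsuspended. Suppose $m$ suspends $S$. Since each $x_i$ is nonadjacent to $y_i$, no vertex of $X \cup Y$ is universal to a set meeting $X$ and $Y$ in differing indices, so $m \in M$, say $m \in \{u_i, v_i, w_i, z_i\}$. But $d_G(u_i, y_j) = 2$ for all $j$ and $d_G(w_i, x_j) = 2$ for all $j$, so $m \neq u_i$ and $m \neq w_i$. As $S$ contains at least two vertices of $X$, it contains some $x_j$ with $d_G(v_i, x_j) = 2$, so $m \neq v_i$; as $S$ contains at least two vertices of $Y$, it contains some $y_j$ with $d_G(z_i, y_j) = 2$, so $m \neq z_i$, a contradiction.

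Finally, only $2$ maximal $2$-sets contain no vertex of $X$ or no vertex of $Y$, exactly $k$ contain precisely one vertex of $X$, and exactly $k$ contain precisely one vertex of $Y$. Hence at least $2^k - 2k - 2$ maximal $2$-sets contain at least two vertices of $X$ and at least two of $Y$, and are therefore unsuspended. By Lemma~\ref{lem:unsuspendedTwoSets}, $|V(\calH(G)) \setminus V(G)| \ge 2^k - 2k - 2$. Including the $6k$ vertices of $G$, we get $|V(\calH(G))| \ge 2^k + 4k - 2 = 2^{n/6} + 2n/3 - 2$.
\end{proof}
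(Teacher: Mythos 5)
Your proposal follows the paper's proof essentially verbatim: the same ordering ($X$, then $M$, then $Y$), the same reduction of the counting argument to the split-graph case, and the same final arithmetic. The counting half is correct, and you rightly note the one adaptation the paper glosses over (ruling out a suspending vertex in $X \cup Y$ now that these are cliques).

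One statement in your ordering verification is false, though harmlessly so. You dismiss the case $a \in X$, $c \in Y$ by asserting that ``the only nonadjacent such pairs are $x_i, y_i$,'' i.e.\ that $x_i y_j \in E(G)$ for $i \neq j$. In fact there are \emph{no} edges between $X$ and $Y$ in this construction: the listed distance property is $d_G(x_i,y_j) \le 2$ \emph{via the common neighbor} $z_j$, and the paper's proof explicitly records that $xy \notin E(G)$ for all $x \in X$, $y \in Y$. Your fallback justification for that case --- that any $b \in M$ between such a pair would be adjacent to one of $a,c$ --- is also false as stated (e.g.\ $v_j$ is adjacent to neither $x_i$ nor $y_j$ when $i \neq j$), so if those $X$--$Y$ edges did exist your ordering argument would actually break there. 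The case is simply vacuous because $ac \in E(G)$ never holds for $a \in X$, $c \in Y$; with that correction the proof is sound and matches the paper's.
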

\begin{proof}
  The proof that $G$ has an exponential number of maximal unsuspended 2-sets is the same as in the proof of  Theorem~\ref{thm:splitGraphSize}, establishing $|V(\calH(G))| \ge 2^{n/6} + 2n/3 - 2$.
  


  It remains only to show that $G$ is a cocomparability graph.
  Let $m_1,m_2$ be two vertices of $M$.
  Let $\sigma$ be a vertex ordering of $G$ such that $\sigma(x) < \sigma(m_1)$ for all $x \in X$, $\sigma(m_1) \le \sigma(m) \le \sigma(m_2)$ for all $m \in M$, and $\sigma(m_2) < \sigma(y)$ for all $y \in Y$.
  That is, $\sigma$ is an ordering which consists of all vertices of $X$, followed by all vertices of $M$, followed by all vertices of $Y$.
  We claim that $\sigma$ is a cocomparability ordering.
  Since $X$ is a clique, for any $x_ix_j \in E(G)$ and any $x \in X$ such that $\sigma(x_i) < \sigma(x) < \sigma(x_j)$
  has an edge to $x_i$ and $x_j$.
  We apply the same argument to vertices of cliques $M$ and $Y$.
  Thus, any ordering of the vertices of $X$ alone is a cocomparability ordering, any ordering of the vertices of $M$ alone is a cocomparability ordering, and any ordering of the vertices of $Y$ alone is a cocomparability ordering.
  
  We next show that any other possible edges between the sets $X,M,Y$ satisfy the constraints of a cocomparability ordering.
  Consider any vertices $x \in X$, $m \in M$, and $v \in V(G)$ with $\sigma(x) < \sigma(v) < \sigma(m)$ and $xm \in E(G)$.
  Then, either $v \in M$ and therefore $vm \in E(G)$, or $v \in X$ and therefore $vx \in E(G)$.
  By symmetry, any $m \in M$, $v \in V(G)$, and $y \in Y$ with $\sigma(m) < \sigma(v) < \sigma(y)$ and $xm \in E(G)$
  satisfies that either $vm \in E(G)$ or $vy \in E(G)$.
  By construction, all vertices $x \in X$ and all $y \in Y$ satisfy $xy \notin E(G)$.
  Therefore, $\sigma$ is a cocomparability ordering.
\end{proof}

\begin{corollary} \label{cor:ATFreeSize}
Cocomparability graphs and AT-free graphs can have exponentially large injective hulls. Specifically, there is a graph $G$ of that class with $|V(\calH(G))| \in \Omega(a^n)$, where $a>1$ and $n=|V(G)|$.
\end{corollary}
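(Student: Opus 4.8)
The plan is to derive Corollary~\ref{cor:ATFreeSize} directly from Theorem~\ref{thm:cocomparabilitySize}, together with the classical inclusion of cocomparability graphs in the class of AT-free graphs. First I would take the graph $G$ constructed just before Theorem~\ref{thm:cocomparabilitySize}: it is a cocomparability graph on $n = 6k$ vertices, and by that theorem $|V(\calH(G))| \ge 2^{n/6} + 2n/3 - 2$. Setting $a := 2^{1/6} > 1$, we have $2^{n/6} = a^n$, and since $2n/3 - 2 \ge 0$ for $n \ge 3$ this yields $|V(\calH(G))| \ge a^n$; letting $k$ grow, this establishes $|V(\calH(G))| \in \Omega(a^n)$ for the family of cocomparability graphs.

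Next I would invoke the containment already noted at the start of this section, namely that every cocomparability graph is AT-free~\cite{DAMASCHKE199267,corneilATFree}: an ordering $\sigma$ with the property that $\sigma(x) < \sigma(y) < \sigma(z)$ and $xz \in E(G)$ forces $xy \in E(G)$ or $yz \in E(G)$ cannot accommodate an asteroidal triple. Hence the very same graph $G$ is simultaneously a witness for the AT-free class, and the bound $|V(\calH(G))| \in \Omega(a^n)$ carries over verbatim, completing the corollary.

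I do not expect a genuine obstacle here, since the statement is essentially a repackaging of Theorem~\ref{thm:cocomparabilitySize}; the only points that need a line of justification are (i) converting the explicit estimate $2^{n/6} + 2n/3 - 2$ into the asymptotic form $\Omega(a^n)$ with the concrete base $a = 2^{1/6}$, and (ii) citing the standard fact that cocomparability graphs are AT-free, so that no separate AT-free construction is required. If one preferred to avoid relying on that inclusion, a strictly longer alternative would be to certify directly that the graph $G$ of Figure~\ref{fig:cocomparability} contains no asteroidal triple --- for instance by observing that the ordering $X, M, Y$ exhibited in the proof of Theorem~\ref{thm:cocomparabilitySize} is already a cocomparability ordering and hence rules out asteroidal triples --- but this is subsumed by the generic inclusion and I would not pursue it.
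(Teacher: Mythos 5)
Your proposal is correct and matches the paper's intent exactly: the corollary is an immediate consequence of Theorem~\ref{thm:cocomparabilitySize} with $a=2^{1/6}$, combined with the inclusion of cocomparability graphs in AT-free graphs, which the paper itself notes at the start of that subsection. No further comment is needed.
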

Currently, we do not know whether there is a permutation graph $G$ with exponentially large injective hull.

\hnew{
\subsection{Bipartite graphs}
We construct a special bipartite graph $G$ with $2k$ ($k\ge 3$) vertices as follows.}
Let $X=\{x_1,x_2,\dots,x_k\}$ be an independent set and let $Y=\{y_1,y_2,\dots,y_k\}$ be an independent set. For each $i,j \in [1,k]$ and $i\neq j$, let $x_iy_j \in E(G)$. See Figure~\ref{fig:chordalBipartite} for an illustration.
Clearly, no two vertices in $X$ are adjacent and no two vertices in $Y$ are adjacent.
By construction, $G$ has the following distance properties:
\begin{enumerate}[noitemsep, nolistsep]
  \item[-] $\forall i,j \in [1,k], \ i \neq j, \ d_{G}(x_i,y_j) = 1$;
  \item[-] $\forall i,j \in [1,k], \ i \neq j, \ d_{G}(x_i,x_j) = 2$ via common neighbor $y_p$, $p \neq i,j$;
  \item[-] $\forall i,j \in [1,k], \ i \neq j, \ d_{G}(y_i,y_j) = 2$ via common neighbor $x_p$, $p \neq i,j$;
  \item[-] $\forall i \in [1,k], \ d_{G}(x_i,y_i) = 3$ as any $x_i$ is adjacent to only vertices $y_j \in Y$, $j \neq i$.
\end{enumerate}

\begin{figure}[h]
\begin{minipage}{\textwidth}
\begin{center}
\includegraphics[scale=0.8]{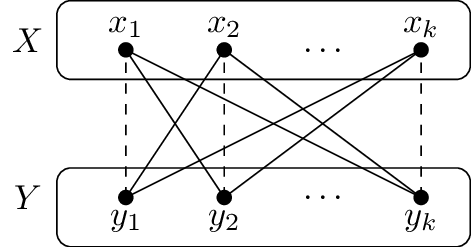}
\captionof{figure}{$G$ is a bipartite graph that requires exponentially many new Helly vertices. Non-edges are drawn in dashed lines. $X$ and $Y$ are independent sets.}\label{fig:chordalBipartite}
\end{center}
\end{minipage}
\end{figure}

\begin{theorem} \label{thm:chordalBipartiteSize}
\hnew{There is a bipartite graph $G$ with no induced $C_k$, $k >6$,} such that $|V(\calH(G))| \ge 2^{n/2} - 2$, where $n=|V(G)|$.
\end{theorem}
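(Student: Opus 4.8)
The plan is to mimic the split-graph argument of Theorem~\ref{thm:splitGraphSize} and its cocomparability analogue, but now for the bipartite graph $G$ on $X \cup Y$ with $|X|=|Y|=k$. First I would identify the maximal $2$-sets of $G$. Since every vertex of $X$ is at distance at most $2$ from every vertex except its ``partner'' $y_i$, and symmetrically for $Y$, a maximal $2$-set is obtained by choosing, for each index $i \in [1,k]$, exactly one of $x_i$ or $y_i$ (never both, since $d_G(x_i,y_i)=3$). This gives exactly $2^k$ maximal $2$-sets, one for each subset $A \subseteq [1,k]$ describing which indices contribute their $X$-vertex.

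Next I would show that most of these $2$-sets are unsuspended. Suppose $S$ is a maximal $2$-set and some vertex $m$ suspends $S$. Since $X$ is independent, $m$ cannot be adjacent to two distinct vertices of $X$ unless $m \in Y$; symmetrically for $Y$. So if $S$ contains at least two vertices from $X$ and at least two vertices from $Y$, then $m$ would have to lie in $Y$ (to dominate the $X$-part) and in $X$ (to dominate the $Y$-part) simultaneously — impossible. More carefully: if $x_a,x_b \in S$ then any common neighbor lies in $Y \setminus \{y_a,y_b\}$; if also $y_c,y_d \in S$ then any common neighbor lies in $X\setminus\{y_c,y_d\}$'s... wait, $x$-side, hence in $X$; these two requirements are contradictory since $X \cap Y = \emptyset$. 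Hence every maximal $2$-set with at least two $X$-vertices and at least two $Y$-vertices is unsuspended. Counting: of the $2^k$ subsets $A \subseteq [1,k]$, the ones to exclude are $A = \emptyset$, $A = [1,k]$, the $k$ singletons, and the $k$ co-singletons, for a total of $2k+2$ excluded. So there are at least $2^k - 2k - 2$ unsuspended maximal $2$-sets.

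By Lemma~\ref{lem:unsuspendedTwoSets}, $|V(\calH(G)) \setminus V(G)| \ge 2^k - 2k - 2$, and adding back the $n = 2k$ vertices of $G$ gives $|V(\calH(G))| \ge 2^k + 2k - 2 = 2^{n/2} + n - 2 \ge 2^{n/2} - 2$, as claimed. It remains to verify the structural claims in the statement: $G$ is bipartite (immediate from the partition $X \cup Y$ with all edges crossing), and $G$ has no induced $C_\ell$ for $\ell > 6$. For the girth/chordality claim I would argue that any induced cycle alternates between $X$ and $Y$, so has even length; an induced cycle of length $\ge 8$ would use four vertices $x_a, y_b, x_c, y_d$ in order with $x_a y_d, y_d x_c \in E$ but $x_a y_b$... actually the cleanest route is: in $G$ the complement of the edge relation within $X$-to-$Y$ is a perfect matching, so any four vertices $x_a,x_b,y_c,y_d$ with $\{a,b\}\cap\{c,d\}$ small already have enough edges to create a chord; I would check that any set of $\ge 4$ vertices on one side together with $\ge 4$ on the other contains a chord, forcing induced cycles to have length at most $6$. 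The main obstacle is this last chordality bookkeeping — the $2$-set counting is routine, but pinning down exactly which short even cycles can occur induced (and ruling out $C_8$ and longer) requires a careful case check on how the partner-matching non-edges are distributed among the cycle vertices; everything else follows the template of the split and cocomparability theorems essentially verbatim.
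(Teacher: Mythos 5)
Your proposal follows the paper's proof essentially verbatim: the same identification of the $2^k$ maximal $2$-sets (one choice of $x_i$ or $y_i$ per index), the same observation that any such set with at least two vertices on each side is unsuspended (the paper phrases this via an induced $C_4$ on $x_i,x_j,y_k,y_\ell$ that no vertex can dominate, which is the same fact as your ``a suspending vertex would have to lie in both $X$ and $Y$'' argument), the same count of $2k+2$ excluded sets, and the same appeal to Lemma~\ref{lem:unsuspendedTwoSets}. Two small points. First, your arithmetic in the last step has a sign slip: $(2^k - 2k - 2) + 2k = 2^k - 2$, not $2^k + 2k - 2$; this does not affect the claimed bound $2^{n/2}-2$. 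Second, the chordality verification you flag as ``the main obstacle'' is in fact a one-liner, which the paper simply asserts: in this graph every vertex has exactly one non-neighbor on the opposite side, so in an induced cycle of length $2m$ each cycle vertex must be non-adjacent to $m-2$ of the $m$ opposite-side cycle vertices, forcing $m-2 \le 1$, i.e., $m \le 3$ and cycle length at most $6$. With that filled in, your proof is complete and matches the paper's.
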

\begin{proof}
  \hnew{Clearly, $G$ as constructed above is bipartite and has no induced $C_k$ for $k >6$.
  }
%
  Next, we show that $G$ has exponentially many unsuspended maximal 2-sets.
  Observe that there are $2^k$ maximal 2-sets in $G$ that are suspended or unsuspended;
  for each $j \in [1,k]$, either $x_j \in S$ or $y_j \in S$, but not both since $d_{G}(x_j,y_j)=3$.
  We claim that any maximal 2-set $S$ that contains at least two vertices from $X$ and at least two vertices from $Y$ is unsuspended.
  Let $x_i,x_j,y_k,y_\ell \in S$, where $i,j,k,\ell$ are pairwise distinct integers, $x_i,x_j \in X$, and $y_k,y_\ell \in Y$.
  By construction, $x_i,x_j,y_k,y_\ell$ induce a $C_4$.
  As $X$ and $Y$ are independent sets, there is no vertex of $G$ that suspends this $C_4$ and hence $S$.
  As there are $2k + 2$ maximal 2-sets which do not contain at least two vertices from $X$ and at least two vertices from $Y$,
  by Lemma~\ref{lem:unsuspendedTwoSets}, $|V(\calH(G)) \setminus V(G)| \ge 2^k - 2k - 2$.
  Including the $2k$ vertices of $V(G)$, one obtains $|V(\calH(G))| \ge 2^k - 2$, where $k=n/2$.
\end{proof}

\begin{corollary}
Bipartite graphs can have exponentially large injective hulls. Specifically, there is a graph $G$ of that class with $|V(\calH(G))| \in \Omega(a^n)$, where $a > 1$ and $n = |V(G)|$.
\end{corollary}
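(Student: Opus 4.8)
The final statement is the Corollary asserting that bipartite graphs can have exponentially large injective hulls, as an immediate consequence of Theorem~\ref{thm:chordalBipartiteSize}. Since the theorem is already proven in the excerpt, the corollary requires only a one-line deduction.

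\medskip

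\noindent\textbf{Proof proposal.} The plan is to invoke Theorem~\ref{thm:chordalBipartiteSize} directly and translate its bound into the asymptotic form $\Omega(a^n)$. First I would recall that the theorem exhibits, for each $k \ge 3$, a bipartite graph $G$ on $n = 2k$ vertices satisfying $|V(\calH(G))| \ge 2^{n/2} - 2$. Then I would observe that $2^{n/2} = (\sqrt{2})^{\,n}$, so with $a := \sqrt{2} > 1$ we get $|V(\calH(G))| \ge a^n - 2 = \Omega(a^n)$. This family of graphs is an infinite sequence of bipartite graphs whose injective hull size grows exponentially in the number of vertices, which is exactly the assertion of the corollary. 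No further argument is needed, since bipartiteness of the constructed $G$ was already verified in the proof of the theorem (the vertex set partitions into the independent sets $X$ and $Y$, and every edge joins $X$ to $Y$).

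\medskip

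\noindent There is essentially no obstacle here: the corollary is a restatement of the theorem with the constant $a$ made explicit, and the only "step" is the elementary identity $2^{n/2} = (\sqrt 2)^n$ together with absorbing the additive constant $-2$ into the $\Omega$-notation. If one wanted slightly more care, I would note that the bound holds for all $n$ of the form $2k$ with $k \ge 3$, i.e. for infinitely many $n$, which suffices for the existential claim "there is a graph $G$ of that class with $|V(\calH(G))| \in \Omega(a^n)$." The main conceptual work — constructing the bipartite graph $G$ with $X$, $Y$ independent sets and $x_i y_j \in E(G)$ iff $i \ne j$, checking its distance properties, counting its $2^k$ maximal $2$-sets, and showing that all but $2k+2$ of them are unsuspended (so that Lemma~\ref{lem:unsuspendedTwoSets} forces at least $2^k - 2k - 2$ new Helly vertices) — is entirely contained in the proof of Theorem~\ref{thm:chordalBipartiteSize} and is reused verbatim.
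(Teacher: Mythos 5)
Your proposal is correct and matches the paper's (implicit) reasoning exactly: the corollary is stated in the paper as an immediate consequence of Theorem~\ref{thm:chordalBipartiteSize}, and the only content is the observation that $2^{n/2} - 2 = (\sqrt{2})^{n} - 2 \in \Omega(a^{n})$ with $a = \sqrt{2} > 1$ for the infinite family of bipartite graphs constructed there. Nothing further is needed.
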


\section{Conclusion}
We proved that chordal graphs, square chordal graphs, and distance-hereditary graphs are closed under Hellification;
permutation graphs are not.
We provided a \hnew{linear-}time algorithm to compute $\calH(G)$ when $G$ is distance-hereditary.
Additional graph classes are identified for which $\calH(G)$ is impossible to compute in subexponential time, including split graphs, cocomparability graphs, AT-free graphs, bipartite graphs, and graphs with a constant bound on any of the following parameters: diameter, hyperbolicity, tree-length, tree-breadth, or chordality. Recall that the \emph{chordality} of a graph $G$ is the size of its largest induced cycle; chordal graphs are exactly the graphs of chordality 3.  

A few interesting questions remain open.
As distance-hereditary graphs are square-chordal, can the injective hull of square-chordal graphs be constructed efficiently? Can the injective hull of permutation graphs be constructed efficiently?
Are cocomparability graphs or AT-free graphs closed under Hellification?

\bibliographystyle{plain}
\bibliography{bibliography}

\end{document}